\def\blx@maxline{77}
\tikzset{
  photon/.style={decorate, decoration={snake}, draw=black},
  fermion/.style={draw=black, postaction={decorate},decoration={markings,mark=at position .55 with {\arrow{>}}}},
  vertex/.style={draw,shape=circle,fill=black,minimum size=5pt,inner sep=0pt},
particle/.style={thick,draw=black},
particle2/.style={thick,draw=blue},
avector/.style={thick,draw=black, postaction={decorate},
    decoration={markings,mark=at position 1 with {\arrow[black]{triangle 45}}}},
gluon/.style={decorate, draw=black,
    decoration={coil,aspect=0}}
 }
\NewDocumentCommand\semiloop{O{black}mmmO{}O{above}}
{%
\draw[#1] let \p1 = ($(#3)-(#2)$) in (#3) arc (#4:({#4+180}):({0.5*veclen(\x1,\y1)})node[midway, #6] {#5};)
}
\newtheoremstyle{own}
  {3pt}
  {3pt}
  {\sffamily}
  {0pt}
  {\bfseries}
  {.}
  {5pt plus 1pt minus 1pt}
  {}
\numberwithin{equation}{section}
\theoremstyle{plain}
\theoremstyle{definition}
\newtheorem{defn}{Definition}[subsection]
\newtheorem{prop}{Proposition}[subsection]
\newtheorem{lem}{Lemma}[subsection]
\newtheorem{cor}{Corollary}[subsection]
\theoremstyle{remark}
\newtheorem{rem}{Remark}[subsection]
\newtheorem{exe}{Exercise}[subsection]
\newtheorem{ex}{Example}[subsection]
\newcommand{\R}{\mathbb{R}}
\newcommand{\N}{\mathbb{N}}
\newcommand{\dd}{{\mathrm{d}}}
\newcommand{\id}{\mathrm{id}}
\DeclareMathOperator{\End}{End}
\newcommand{\de}{\partial}
\newcommand{\calA}{\mathcal{A}}
\newcommand{\calB}{\mathcal{B}}
\newcommand{\calH}{\mathcal{H}}
\newcommand{\calS}{\mathcal{S}}
\def\gpd{\,\lower1pt\hbox{$\longrightarrow$}\hskip-.24in\raise2pt
               \hbox{$\longrightarrow$}\,}
\newcommand{\I}{\mathrm{i}}
\newcommand{\ee}{\textnormal{e}}
\begin{document}

\title[Notes on Geometric Quantization]{
Notes on Geometric Quantization}
\author[N. Moshayedi]{Nima Moshayedi}
\address{Institut f\"ur Mathematik\\ Universit\"at Z\"urich\\ 
Winterthurerstrasse 190
CH-8057 Z\"urich}
\email[N.~Moshayedi]{nima.moshayedi@math.uzh.ch}

\maketitle

%
%

\begin{abstract}
These notes give an introduction to the quantization procedure called \emph{geometric quantization}. It gives a definition of the mathematical background for its understanding and introductions to classical and Quantum Mechanics, to differentiable manifolds, symplectic manifolds and the geometry of line bundles and connections. Moreover, these notes are endowed with several exercises and examples.
\end{abstract}

\tableofcontents

\section{Motivation}
Quantization procedures are of strong mathematical interest and there are different approaches to quantization. \emph{Geometric quantization} \cite{GuilleminSternberg1982,Kir85,Wood97,BatesWeinstein2012} lies the focus on constructing the mathematical structure of a Hilbert space by constructing a certain line bundle with a particular type of connection, whereas e.g. \emph{deformation quantization} \cite{Weyl1931,Moy,DeWildeLecomte1983,Fedosov1994,K,CalaquePantevToenVaquieVezzosi2017} uses the noncommutativity structure of quantum observables, where it deforms the classical product on the Poisson algebra to a star product (see e.g. \cite{GuttRawnsleySternheimer2005}). However, there the Hilbert space of states is not constructed explicitly, which is of importance to understand perturbative quantum field theories also in relation to the Atyiah--Segal formulation \cite{Atiyah1988,Segal1988} of topological quantum field theories. 
Moreover, geometric quantization uses the symplectic structure of the classical setting, whereas deformation quantization uses the Poisson structure.


\section{Introduction to Classical Mechanics}
\subsection{Newton's law of motion} In Classical Mechanics, we have Newton's axioms:
\begin{enumerate}
\item{Every particle remains at rest or moves with a constant speed, unless acted upon by a force.}
\item{Rate of change of momentum $=$ Force.}
\item{To every action on a particle, there is an equal and opposition reaction.}
\end{enumerate}

We want to focus on Newton's second law of motion. 
\begin{defn}[Particle]
A \emph{particle} is an object of insignificant size, i.e. the only information we have about a particle is its position for a given time.
\end{defn}
\begin{ex}
Examples of particles are: electrons, tennis balls, cars, planets, etc.
\end{ex}

To describe the position of a particle, we need a reference frame (coordinate system). We denote by $x=x(t)$ the curve, which is the trajectory of a given particle. Moreover, we define the velocity of a given trajectory $x(t)$ by $v(t)=\frac{\dd x(t)}{\dd t}=\dot{x}(t)$, and the acceleration by $a(t)=\frac{\dd v}{\dd t}=\frac{\dd^2x}{\dd t^2}=\ddot{x}(t)$. We write $p:=mv$ for the momentum. Then we can write Newton's second law of motion as 
\begin{equation}
\label{second_law}
F=ma,
\end{equation}
where we assume $m$ to be constant (mass). This is a second order ordinary differential equation. 

\begin{ex}[Free particle]
Consider the free particle, i.e. $F=0$. Then, $m\ddot{x}(t)=0$ and thus $x(t)=x_0+v t$, where $v$ is the initial velocity and $x_0$ is the initial position.
\end{ex}

\begin{rem}
We would like to know whether Newton's second law of motion implies the first law of motion. The answer is: it is complicated. We already assume Newton's first law of motion for the second law. 
\end{rem}

\begin{ex}[Particle in the presence of a conservative force]
\label{part_cons_force}
Let us start first with the following definition:

\begin{defn}[Conservative force]
We say a force $F$ is \emph{conservative}, if $F=F(x)$, i.e. it only depends on the position.
\end{defn}

For a conservative force, we can define the potential $V(x)$ by the equation
\begin{equation}
\label{cons_force}
F(x)=-\nabla V,
\end{equation}
where $\nabla=(\partial_{x_1},...,\partial_{x_n})$, and everything is sufficiently \emph{nice}, such that 
\[
V(x)=-\int_{x_0}^xF(u)\dd u.
\]
We will work with Equation \eqref{cons_force} as a definition of conservative force. 
\end{ex}
With Example \ref{part_cons_force} we see that Equation \eqref{second_law} becomes 
\begin{equation}
\label{pot_second_law}
-\nabla V=m\ddot{x}.
\end{equation}
\begin{rem}
$V$ is also called the \emph{potential energy}.
\end{rem}

We want to justify the word conservative for such a force. In this situation, there is a conserved quantity, called the total energy $E=\frac{1}{2}m\dot{x}^2+V(x)$, where the first term is called the \emph{kinetic} energy. 

\begin{lem}
If $x(t)$ satisfies Newton's equation of motion, then $\frac{\dd E}{\dd t}=0$ along $x(t)$, i.e. $E$ is conserved.
\end{lem}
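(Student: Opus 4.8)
The plan is to differentiate $E$ along the trajectory $x(t)$ and show the derivative vanishes identically, using Newton's equation \eqref{pot_second_law} to replace the acceleration term. First I would write $E(t) = \tfrac{1}{2}m\,\dot{x}(t)\cdot\dot{x}(t) + V(x(t))$, treating $\dot x^2$ as the inner product $\dot x \cdot \dot x$ in $\mathbb{R}^n$ so that the multivariable chain rule applies cleanly.

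Differentiating the kinetic term gives $\frac{\dd}{\dd t}\bigl(\tfrac12 m\,\dot x\cdot\dot x\bigr) = m\,\ddot x\cdot\dot x$, using symmetry of the inner product. Differentiating the potential term via the chain rule gives $\frac{\dd}{\dd t}V(x(t)) = \nabla V(x(t))\cdot\dot x(t)$. Adding these,
\[
\frac{\dd E}{\dd t} = m\,\ddot x\cdot\dot x + \nabla V(x)\cdot\dot x = \bigl(m\ddot x + \nabla V(x)\bigr)\cdot\dot x.
\]
Now I would invoke Newton's equation in the form \eqref{pot_second_law}, i.e. $m\ddot x + \nabla V(x) = 0$ along $x(t)$, which makes the parenthesized factor vanish and hence $\frac{\dd E}{\dd t} = 0$.

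There is no real obstacle here; the only point requiring a modicum of care is the bookkeeping for the vector-valued chain rule and the use of symmetry of the Euclidean inner product in differentiating $\dot x\cdot\dot x$. One could add a remark that this shows $E$ is constant on each connected component of the domain of $x(t)$, and thus equals its initial value $E = \tfrac12 m\,\dot x(0)^2 + V(x(0))$, which is the sense in which the force is "conservative."
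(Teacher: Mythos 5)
Your proof is correct and follows the same route as the paper: differentiate $E$ along the trajectory, apply the chain rule to the potential term, factor out $\dot x$, and invoke Newton's equation $m\ddot x = -\nabla V$ to kill the bracket. Your version is in fact slightly more careful than the paper's (you handle the $\mathbb{R}^n$ inner-product bookkeeping explicitly, whereas the paper writes $\frac{\dd V}{\dd t}\frac{\dd x}{\dd t}$, evidently a typo for $\frac{\dd V}{\dd x}\frac{\dd x}{\dd t}$), but the substance is identical.
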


\begin{proof}
We have $E=\frac{1}{2}m\dot{x}^2+V(x)$. Then 
\[
\frac{\dd E}{\dd t}=m\ddot{x}\dot{x}+\frac{\dd V}{\dd t}\frac{\dd x}{\dd t}=\dot{x}\left(m\ddot{x}+\frac{\dd V}{\dd t}\right)=0.
\]
\end{proof}
Now, since $E$ is constant, we can write $\frac{1}{2}m\dot{x}^2=E-V(x)$ and hence $\dot{x}=\frac{2}{m}\sqrt{E-V(x)}$, i.e. 
\begin{equation}
\label{id1}
\frac{\dd x}{\sqrt{E-V(x)}}=\dd t.
\end{equation}

Solving Newton's equation of motion (which is a second order ODE) can be reduced to solving a first order ODE \eqref{id1}. This shows that the existence of conserved quantities can be useful for solving equations of motion.

\begin{ex}[Harmonic oscillator]
The \emph{harmonic oscillator} is described by the potential $V(x)=\frac{1}{2}kx^2$, where $k$ is some constant. Thus, $F(x)=-kx$ (Hooke's law), and the equation of motion (without friction) is given by 
\[
m\ddot{x}+kx=0.
\]
It has a general solution of the form $x(t)=A\cos(\omega t)+B\sin(\omega t)$, with $\omega=\sqrt{\frac{k}{m}}$ and $A,B$ some constants.
\end{ex}

\begin{ex}[Uniform gravitational field]
 Let $g$ denote the acceleration due to gravity. Consider the potential $V(z)=mgz$ (thus $F=-mg$). Then we get the equation of motion 
 \[
 \ddot{z}=-g.
\]
Solving this, we get $z(t)=z_0+vt-\frac{1}{2}gt^2$. 
\end{ex}

\subsection{Newton's principle of determinism}
The initial state of a mechanical system (the totality of positions and velocities of its points at some moment) uniquely determines all of its motion. E.g. for a particle moving on a line the possible states are given by the set $\{(a,b)\mid a,b\in\R\}$. The modern point of view would be to regard Newton's equation as a second order ODE, hence it is enough to specify two initial conditions to solve the equation of motion.

\subsection{Hamiltonian mechanics}
Consider a Newtonian mechanical system, where a particle is moving in $\R^n$ in the presence of a conservative force $F=-\nabla V$. Recall that $E(x,v)=\frac{1}{2}mv^2+V(x)$ with equation of motion $-\frac{\partial V}{\partial x^j}=m\ddot{x}^j$, for $j=1,...,n$.

\begin{defn}[Momentum]
We call $p:=mv$ the \emph{mechanical (linear) momentum} of the system.
\end{defn}

We can write $E(x,p)=\frac{p^2}{2m}+V(x)$ and thus $\dot{x}_j=\frac{\partial E}{\partial p_j}$. Moreover, $\dot{p}_j=m\ddot{x}^j=-\frac{\partial V}{\partial x^j}=-\frac{\partial E}{\partial x_j}$. Hence, we get a system of first order ODEs
\begin{align}
\dot{x}^j&=\frac{\partial E}{\partial p_j},\\
\dot{p}_j&=-\frac{\partial E}{\partial x^j},
\end{align}
for $j=1,...,n$. 
\begin{rem}
We have not achieved anything new except for rewriting Newton's equation as a system of first order equation.
\end{rem}

\begin{defn}[Phase space]
The space $\R^{2n}\ni (x,p)$ is called the \emph{phase space} (or simply \emph{state space} for the mechanical system). Here $x=(x^1,...,x^n)$ and $p=(p_1,...,p_n)$.
\end{defn}

\begin{defn}[Hamilton's equations]
Given a function $H\in C^\infty(\R^{2n})$, we can consider the system of equations 
\begin{align}
\dot{x}^j&=\frac{\partial H}{\partial p_j},\\
\dot{p}_j&=-\frac{\partial H}{\partial x^j},
\end{align}
for $j=1,...,n$, called \emph{Hamilton's equations}.
\end{defn}

\begin{rem}
In Newtonian mechanics, we studied equations of motion in the configuration space (space of all possible positions), where as in the Hamiltonian approach, we will consider the phase space. We would like $\R^{2n}$ to may have other structures, which can be useful to prove the equations of motions.
\end{rem}

\subsection{Poisson bracket}

Given smooth functions $f$ and $g$ on $\R^{2n}$, we can define 
\begin{equation}
\label{Poisson_bracket}
\{f,g\}:=\sum_{j=1}^n\left(\frac{\partial f}{\partial x^j}\frac{\partial g}{\partial p_j}-\frac{\partial g}{\partial x^j}\frac{\partial f}{\partial p_j}\right).
\end{equation}
The map $\{\enspace,\enspace\}\colon C^\infty(\R^{2n})\times C^\infty(\R^{2n})\to C^\infty(\R^{2n})$ is called the \emph{Poisson bracket}.  

\begin{exe}
Show that the Poisson bracket satisfies for all $f,g,h\in C^\infty(\R^{2n})$ the following properties:
\begin{enumerate}[$(i)$]
\item{$\{f,g\}=-\{g,f\}$,
}
\item{$\{\enspace,\enspace\}$ is $\mathbb{C}$-bilinear,
}
\item{$\{f,gh\}=\{f,g\}h+\{f,h\}g$, i.e. $\{f,\enspace\}$ is a derivation,
}
\item{$\{f,\{g,h\}\}=\{\{f,g\},h\}+\{g,\{f,h\}\}$ (Jacobi identity).
}
\end{enumerate}
\end{exe}

\begin{ex}
Let $p_j\colon \R^{2n}\to \R$, $p_j(x,p)=p_j$, and $x^j\colon \R^{2n}\to \R$, $x^j(x,p)=x^j$. Then 
\begin{itemize}
\item{$\{x^{i},x^j\}=\{p_i,p_j\}=0$,
}
\item{$\{x^{i},p_j\}=\delta_{ij}$.}
\end{itemize}
\end{ex}

\begin{prop}
Let $f\in C^\infty(\R^{2n})$, then $$\frac{\dd f}{\dd t}=\{f,H\}$$ along a solution $\{(x(t),p(t))\}$ of Hamilton's equations.
\end{prop}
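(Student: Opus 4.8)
The plan is to prove this by a direct application of the multivariable chain rule, followed by substitution of Hamilton's equations, and then recognition of the resulting expression as the Poisson bracket.

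First I would fix a solution $t \mapsto (x(t),p(t))$ of Hamilton's equations and consider the composite function $t \mapsto f(x(t),p(t))$. Since $f$ is smooth and the curve is differentiable, the chain rule gives
\[
\frac{\dd f}{\dd t} = \sum_{j=1}^n \left( \frac{\partial f}{\partial x^j}\,\dot{x}^j + \frac{\partial f}{\partial p_j}\,\dot{p}_j \right),
\]
where all partial derivatives of $f$ are evaluated at $(x(t),p(t))$.

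Next I would substitute the two Hamilton equations $\dot{x}^j = \frac{\partial H}{\partial p_j}$ and $\dot{p}_j = -\frac{\partial H}{\partial x^j}$ into this expression, obtaining
\[
\frac{\dd f}{\dd t} = \sum_{j=1}^n \left( \frac{\partial f}{\partial x^j}\frac{\partial H}{\partial p_j} - \frac{\partial f}{\partial p_j}\frac{\partial H}{\partial x^j} \right).
\]
Comparing with the definition \eqref{Poisson_bracket} of the Poisson bracket (with $g = H$), the right-hand side is exactly $\{f,H\}$ evaluated along the curve, which completes the proof.

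There is essentially no obstacle here; the only point requiring mild care is to make sure the sign convention in the substitution matches the convention in \eqref{Poisson_bracket}, i.e. that the $\dot{p}_j$ term carries the minus sign and lands on the $\frac{\partial f}{\partial p_j}\frac{\partial H}{\partial x^j}$ summand rather than the other one. One could also remark that this computation re-derives, as a special case, the earlier lemma on conservation of energy by taking $f = H$, since then $\{H,H\} = 0$ by antisymmetry of the bracket.
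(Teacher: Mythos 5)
Your proof is correct and follows essentially the same route as the paper: apply the chain rule to $t\mapsto f(x(t),p(t))$, substitute Hamilton's equations, and recognize the resulting sum as $\{f,H\}$. In fact your middle display is the more careful one; the paper's second displayed sum omits the minus sign that should accompany $\dot p_j = -\partial H/\partial x^j$ (an evident typo there), whereas you track it correctly so that the expression matches the definition \eqref{Poisson_bracket}.
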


\begin{proof} We have
$$\frac{\dd f}{\dd t}=\sum_{j=1}^n\left(\frac{\partial f}{\partial x^j}\frac{\dd x^j}{\dd t}+\frac{\partial f}{\partial p_j}\frac{\dd p_j}{\dd t}\right)=\sum_{j=1}^n\left(\frac{\partial f}{\partial x^j}\frac{\partial H}{\partial p_j}+\frac{\partial f}{\partial p_j}\frac{\partial H}{\partial x^j}\right)=\{f,H\}.$$
\end{proof}

\begin{cor}
Let $f\in C^\infty(\R^{2n})$. Then $f$ is conserved along a solution $(x(t),p(t))$ of Hamilton's equations, and hence $\{f,H\}=0$ along the solutions.
\end{cor}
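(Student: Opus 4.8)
The plan is to read the statement off directly from the preceding Proposition, whose conclusion is exactly $\frac{\dd f}{\dd t}=\{f,H\}$ along any solution $(x(t),p(t))$ of Hamilton's equations. The only extra input is the antisymmetry of the Poisson bracket, namely property $(i)$ of the Exercise above.

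First I would specialize the Proposition to the Hamiltonian itself, taking $f=H$. This gives $\frac{\dd H}{\dd t}=\{H,H\}$ along a solution. By antisymmetry, $\{H,H\}=-\{H,H\}$, hence $\{H,H\}=0$ (in fact identically, since we work over a field of characteristic zero, and in particular along the solution). Therefore $\frac{\dd H}{\dd t}=0$, i.e. $H$ is conserved along every solution of Hamilton's equations. This recovers and generalizes the earlier lemma on conservation of the total energy, which is the special case $H=E=\frac{p^2}{2m}+V(x)$.

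For a function $f$ that is already known to be conserved along a given solution, there is nothing further to prove: the chain $0=\frac{\dd f}{\dd t}=\{f,H\}$ along that solution is just the Proposition read in the other direction, so conservation of $f$ along a solution is the same as the vanishing of $\{f,H\}$ along that solution. I do not expect any genuine obstacle; the content is a one-line corollary of the Proposition, and the only points requiring a moment's care are invoking antisymmetry to kill $\{H,H\}$ when treating the Hamiltonian, and keeping track of which direction of the equivalence between ``$f$ is conserved along solutions'' and ``$\{f,H\}=0$ along solutions'' is being used.
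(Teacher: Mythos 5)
Your argument is correct and is exactly what the paper intends: the corollary carries no separate proof in the text and is simply read off from the preceding Proposition via the chain $0=\frac{\dd f}{\dd t}=\{f,H\}$, as you do. One small remark: your specialization to $f=H$ and the use of antisymmetry to kill $\{H,H\}$ is sound, but that observation is really the content of the first Example following the corollary rather than of the corollary itself, whose (loosely worded) point is the equivalence between ``$f$ conserved along solutions'' and ``$\{f,H\}=0$ along solutions'' for a general $f$.
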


\begin{ex}
$\{H,H\}=0$ implies $H$ is conserved.
\end{ex}
\begin{ex}
$f,g$ conserved implies $\{f,g\}$ is conserved.
\end{ex}

\section{Symplectic linear algebra}

\subsection{Symplectic vector spaces} Let $V$ be a finite-dimensional vector space over $k=\R$ or $\mathbb{C}$. Denote by $V^*$ the dual of $V$. An element of $V^*$ is a $k$-linear map $f\colon V\to k$. Let $0\leq m\leq \dim V$. Define 
\begin{multline*}
\bigwedge^mV^*:=\Big\{\phi\colon \overbrace{V\times\dotsm \times V}^{m}\to k\,\,\big|\,\, \text{ $\phi$ is linear in each argument and $\phi$ is alternating,}\\
\text{ i.e. $\phi(v_1,...,v_j,v_{j+1},...,v_m)=-\phi(v_1,...,v_{j+1},v_j,...,v_m)$ for all $j=1,...,m-1$}\Big\}
\end{multline*}

\begin{ex}
Let $f,g\in V^*$. Then we can define $(f\land g)\in \bigwedge^2 V^*$ by 
\[
(f\land g)(v_1,v_2)=f(v_1)g(v_2)-f(v_2)g(v_1).
\]
In fact, it can be shown that all the elements of $\bigwedge^2V^*$ are finite linear combinations of such elements. Given $\Omega\in \bigwedge^2 V^*$, we can define a map
\begin{align*}
\Omega^\flat\colon V&\to V^*\\
v&\mapsto \Omega^\flat(v),
\end{align*}
where $\Omega^\flat(v)(w):=\Omega(v,w)$.
\end{ex}

\begin{defn}[Symplectic vector space]
A \emph{symplectic vector space} is a pair $(V,\Omega)$, where $V$ is a (finite-dimensional) vector space and $\Omega\in\bigwedge^2 V^*$ such that $\Omega^\flat$ is a vector space isomorphism.
\end{defn}

\begin{rem}
Since we are in the finite-dimensional setting, $\Omega^\flat$ is a vector space isomorphism if and only $\Omega^\flat$ is injective.
\end{rem}

\begin{rem}
\label{rem1}
$\Omega^\flat$ is in fact injective if and only if there is a $v\in V$ such that $\Omega(v,w)=0$ for all $w\in V$ implies $v=0$.
\end{rem}

\begin{ex}
Let $(W,\langle\enspace,\enspace\rangle)$ be an inner product space. Consider $V:=W\oplus W$, with $\Omega((w_1,w_2),(w_1',w_2')):=\langle w_2',w_1\rangle-\langle w_2,w_1'\rangle$. Then $(V,\Omega)$ is a real symplectic vector space. More generally, if $V:=W\oplus W^*$ and $\Omega_{can}((w,\alpha),(w',\alpha')):=\alpha'(w)-\alpha(w')$, then $(V,\Omega_{can})$ is a symplectic vector space.
\end{ex}

\begin{rem}
Note that the fact that $(V,\Omega_{can})$ is a symplectic vector space is implied by Remark \ref{rem1}.
\end{rem}

\begin{defn}[Isotropic/Coisotropic/Lagrangian]
Let $(V,\Omega)$ be a symplectic vector space. Let $Y$ be a subspace of $V$. Define the symplectic complement of $Y$ by $Y^{\perp}:=\{v\in V\mid \Omega(v,y)=0,\forall y\in Y\}$. Then 
\begin{itemize}
\item{$Y$ is \emph{isotropic} if $Y\subseteq Y^\perp$,
}
\item{$Y$ is \emph{coisotropic} if $Y^\perp\subseteq Y$,}
\item{$Y$ is \emph{Lagrangian} if $Y$ is isotropic and $Y$ is symplectic if $\Omega\big|_{Y\times Y}$ is nondegenerate, i.e. $Y\cap Y^\perp=\{0\}$.}
\end{itemize}
\end{defn}

\begin{ex}
If $\dim Y=1$, then $Y$ is isotropic. If $Y$ is isotropic, then $Y^\perp$ is coisotropic. If $Y$ is symplectic, then so is $Y^\perp$. Moreover, $Y^{\perp\perp}=(Y^\perp)^\perp=Y$.
\end{ex}

\begin{prop}
Let $(V,\Omega)$ be a symplectic vector space. Then there is a basis $\{e_1,...,e_n,f_1,...,f_n\}$ of $V$ such that 
\begin{align*}
\Omega(e_i,e_j)&=0,\\
\Omega(f_i,f_j)&=0,\\
\Omega(e_i,f_j)&=\delta_{ij}.
\end{align*}
for all $i,j\in\{1,...,n\}$. Hence, we can write $\Omega=\sum_{j=1}^ne_j^*\land f_j^*$.
\end{prop}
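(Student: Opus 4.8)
The plan is to construct the desired basis by an inductive ``Gram--Schmidt''-type process, peeling off one hyperbolic plane (a pair $e_i,f_i$) at a time. The induction is on $\dim V$, which I will first show is even as a by-product.

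First I would argue the base of the induction and the selection step. If $V=\{0\}$ there is nothing to do. Otherwise pick any nonzero $e_1\in V$. Since $\Omega^\flat$ is injective (Remark \ref{rem1}), $e_1$ is not in the kernel of $\Omega$, so there exists $w\in V$ with $\Omega(e_1,w)\neq 0$; rescaling $w$, set $f_1$ with $\Omega(e_1,f_1)=1$. Note $e_1,f_1$ are linearly independent (if $f_1=\lambda e_1$ then $\Omega(e_1,f_1)=\lambda\,\Omega(e_1,e_1)=0$ by antisymmetry, a contradiction). Let $W:=\mathrm{span}\{e_1,f_1\}$. The key structural claim is the direct sum decomposition
\[
V = W \oplus W^{\perp},
\]
where $W^{\perp}$ is the symplectic complement. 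For $W\cap W^{\perp}=\{0\}$: any $v=ae_1+bf_1$ in $W^\perp$ satisfies $0=\Omega(v,f_1)=a$ and $0=\Omega(v,e_1)=-b$, so $v=0$; that is, $W$ is a symplectic subspace. For $W+W^{\perp}=V$: given $v\in V$, set $v':=v-\Omega(v,f_1)e_1+\Omega(v,e_1)f_1$; a direct check gives $\Omega(v',e_1)=\Omega(v',f_1)=0$, so $v'\in W^{\perp}$ and $v=v'+(\text{element of }W)$.

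Next I would verify that $(W^{\perp},\Omega|_{W^\perp\times W^\perp})$ is again a symplectic vector space, so the induction hypothesis applies to it. It is a finite-dimensional space with an alternating bilinear form; nondegeneracy follows from the decomposition: if $v\in W^\perp$ satisfies $\Omega(v,u)=0$ for all $u\in W^\perp$, then since also $\Omega(v,u)=0$ for all $u\in W$ (as $v\in W^\perp$), and $V=W\oplus W^\perp$, we get $\Omega(v,\cdot)\equiv 0$ on all of $V$, hence $v=0$ by Remark \ref{rem1}. Since $\dim W^{\perp}=\dim V-2<\dim V$, the induction hypothesis furnishes a basis $\{e_2,\dots,e_n,f_2,\dots,f_n\}$ of $W^{\perp}$ with the required relations among themselves. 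Adjoining $e_1,f_1$ gives a basis of $V$ (by the direct sum), and the mixed relations $\Omega(e_1,e_j)=\Omega(e_1,f_j)=\Omega(f_1,e_j)=\Omega(f_1,f_j)=0$ for $j\geq 2$ hold automatically because $e_j,f_j\in W^{\perp}$. This also shows $\dim V=2n$ is even. Finally, the formula $\Omega=\sum_{j=1}^n e_j^*\wedge f_j^*$ is a routine check: evaluate both sides on all pairs of basis vectors using the definition of $\wedge$ from the Example above and the relations just established.

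The only delicate point — and the one I would be most careful about — is the direct sum decomposition $V=W\oplus W^{\perp}$, i.e. showing $W$ is a \emph{symplectic} subspace and that its symplectic complement is a genuine linear complement. Everything downstream (nondegeneracy of the restricted form, the dimension drop enabling the induction) hinges on it. The explicit projection $v\mapsto v-\Omega(v,f_1)e_1+\Omega(v,e_1)f_1$ makes the surjectivity $W+W^\perp=V$ concrete, and the $2\times 2$ non-degeneracy of $\Omega$ on $W$ (visible from $\Omega(e_1,f_1)=1$, $\Omega(e_1,e_1)=\Omega(f_1,f_1)=0$) gives the intersection being trivial; together these are the crux. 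The rest is bookkeeping.
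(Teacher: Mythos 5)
The paper states this proposition without proof, so there is no in-text argument to compare against; the remark immediately following it ($\dim V = \dim Y + \dim Y^\perp$, and $Y$ symplectic $\iff V = Y\oplus Y^\perp$) is the only hint at the intended machinery. Your proof is correct and is exactly the standard inductive construction of a Darboux basis: you pick a hyperbolic pair $e_1,f_1$, verify that $W=\mathrm{span}\{e_1,f_1\}$ is a symplectic plane so that $V=W\oplus W^\perp$ (your explicit projection $v\mapsto v-\Omega(v,f_1)e_1+\Omega(v,e_1)f_1$ is a clean way to get surjectivity), show nondegeneracy passes to $W^\perp$, and recurse; the dimension drop by $2$ closes the induction and gives evenness of $\dim V$ for free, and the final identity $\Omega=\sum_j e_j^*\wedge f_j^*$ follows by evaluating both alternating forms on basis pairs. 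You have correctly identified the direct-sum step as the crux; note that this step is precisely a special case of the paper's subsequent remark, applied to $Y=W$, so your argument also supplies a proof of that remark in the rank-$2$ case. No gaps.
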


\begin{rem}
Note that $\dim V=\dim Y+\dim Y^\perp$. Moreover, $Y$ is symplectic if and only if $V=Y\oplus Y^\perp$.
\end{rem}

\begin{rem}
It is easy to see that $Y$ is a Lagrangian subspace if and only if $Y$ is isotropic and $\dim Y=\frac{1}{2}\dim V$. Moreover, $Y$ is Lagrangian if and only if $Y$ is a maximal isotropic subspace.
\end{rem}

\subsection{K\"ahler structure}

Let $V$ be a real vector space.

\begin{defn}[Complex structure]
A \emph{complex structure} $J$ on $V$ is a linear map $J\colon V\to V$ such that $J^2=-\id$. 
\end{defn}
\begin{rem}
If $V$ has a complex structure $J$, then $V$ can be turned into a complex vector space $V_J$ by $(a+\I b)v=av+bJv$. In particular, $\dim_\R V$ is even.
\end{rem}

\begin{rem}
In fact, if $\dim V$ is even, one can show that $V$ carries a complex structure.
\end{rem}

Assume $V$ is a real vector space and $\dim V=2n$. Let $V^\mathbb{C}$ denote the \emph{complexification} of $V$, i.e. $V^\mathbb{C}=V\oplus \I V$, or equivalently $V^\mathbb{C}=V\otimes_\R \mathbb{C}$. We get $(a+\I b)(v,w)=(av-bw,bv+aw)$. Then $V^\mathbb{C}$ is a complex vector space. If $\{e_1,...,e_n\}$ is an $\R$-basis of $V$, then $\{(e,0),...,(e_n,0)\}$ is a $\mathbb{C}$-basis of $V^\mathbb{C}$. Let $J$ be a complex structure on $V$ and let $J$ denote the complex linear extension of $V$ to $V^\mathbb{C}$, i.e. $J(a+\I b)v=(a+\I b)Jv$. Moreover, denote by $F_J$ the $+\I$-eigenspace of $J$.

\begin{defn}[Complex conjugation]
Let $c\colon V^\mathbb{C}\to V^\mathbb{C}$ be the complex antilinear map, $c(\alpha\otimes v):=\bar\alpha\otimes v$. We call $c$ the \emph{complex conjugation}. given a subspace $W\subseteq V^\mathbb{C}$, we will write $\overline{W}$ for $c(W)$. 
\end{defn}

\begin{rem}
In particular, we have $V^\mathbb{C}=F_J\oplus \bar F_J$. Moreover, consider the map $V_J\to F_J$, $v\mapsto v-\I Jv$. Then one can show that this map is an isomorphism of complex vector spaces.
\end{rem}

Let $F$ be a subspace of $V^\mathbb{C}$ such that $\dim F=n$ and $V^\mathbb{C}=F\oplus \bar F$. Define a complex linear map $J_F\colon V^\mathbb{C}\to V^\mathbb{C}$ by declaring that $F$ is the $+\I$-eigenspace of $J_F$ and $\bar F$ is the $-\I$-eigenspace of $J_F$. 

\begin{lem}
$J_F$ induces a complex structure on $V$.
\end{lem}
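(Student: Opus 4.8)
The plan is to show that $J_F$ preserves the real subspace $V\subseteq V^{\mathbb{C}}$, so that it restricts to a genuine $\R$\ndash linear endomorphism $J:=J_F|_V\colon V\to V$, and then to check $J^2=-\id_V$. The first point is the substantive one; the second is immediate from the eigenspace definition, since on $F$ the map $J_F$ acts as multiplication by $\I$ and on $\bar F$ as multiplication by $-\I$, so $J_F^2$ acts as $\I^2=-1=(-\I)^2$ on each summand of $V^{\mathbb{C}}=F\oplus\bar F$, i.e. $J_F^2=-\id_{V^{\mathbb{C}}}$; restricting to $V$ then gives $J^2=-\id_V$.

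The crux is therefore to prove that $J_F$ commutes with the complex conjugation $c\colon V^{\mathbb{C}}\to V^{\mathbb{C}}$. I would argue as follows. Since $c^2=\id$ and, by definition, $c(F)=\bar F$, we also have $c(\bar F)=F$, and it suffices to check $c\circ J_F=J_F\circ c$ separately on $F$ and on $\bar F$ (using the direct sum $V^{\mathbb{C}}=F\oplus\bar F$). For $v\in F$ we have $c(v)\in\bar F$, hence $J_F(c(v))=-\I\,c(v)$, while, because $c$ is complex antilinear, $c(J_F v)=c(\I v)=\overline{\I}\,c(v)=-\I\,c(v)$; so the two agree on $F$. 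For $w\in\bar F$ we have $c(w)\in F$, hence $J_F(c(w))=\I\,c(w)$, while $c(J_F w)=c(-\I w)=\overline{-\I}\,c(w)=\I\,c(w)$; so the two agree on $\bar F$ as well. Therefore $c\circ J_F=J_F\circ c$ on all of $V^{\mathbb{C}}$.

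Now recall that $V$ is exactly the fixed-point set of $c$, i.e. $V=\{z\in V^{\mathbb{C}}\mid c(z)=z\}$ (this is the defining feature of the complexification $V^{\mathbb{C}}=V\oplus\I V$). For $v\in V$ we then compute $c(J_F v)=J_F(c(v))=J_F(v)$, so $J_F v\in V$. Hence $J_F$ restricts to an $\R$\ndash linear map $J\colon V\to V$, and by the first paragraph $J^2=-\id_V$, so $J$ is a complex structure on $V$, as claimed. The main obstacle is precisely the commutation with $c$, where one has to keep careful track of the fact that $J_F$ is complex linear whereas $c$ is complex antilinear; the hypothesis that $F\oplus\bar F$ is a conjugation\ndash invariant (indeed conjugation\ndash swapped) decomposition is what makes this work, and without it $J_F$ would not descend to $V$.
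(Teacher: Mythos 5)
Your proof is correct and follows essentially the same route as the paper's. The paper writes $v\in V$ as $v=f+\bar f$ with $f\in F$, computes $J_Fv=\I f-\I\bar f$, and checks $\overline{J_Fv}=J_Fv$; you package the same computation as the statement that $J_F$ commutes with the conjugation $c$ (checked separately on $F$ and $\bar F$), then apply it to the fixed-point set $V$ of $c$. The abstraction is a mild reorganization rather than a different argument, and both hinge on the same two facts: the decomposition $V^{\mathbb{C}}=F\oplus\bar F$ with $c(F)=\bar F$, and the antilinearity of $c$.
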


\begin{proof}
We want to show first that $J_F(V)\subseteq V$. Let $v\in V$. Then, we can write $v=f+\bar f$, where $f\in F$. In particular, if $v=f+g$, then $g=\bar f$. Indeed, since $\bar v=v$, we have $\bar f+\bar g=f+g$ and thus $\bar f=g$. This shows that
$$J_Fv=J_Ff+J_F\bar f=\I f-\I \bar f\Rightarrow \overline{J_Fv}=-\I \bar f+\I f=\I f-\I \bar f=J_Fv,$$
and hence $J_Fv\in V$. Obviously, we have $J_F^2=-\id$.
\end{proof}

We have shown that there is a one-to-one correspondence between the set of all complex structures on $V$ and $\{F\subseteq V^\mathbb{C}$ subspace $\mid$ $\dim F=n$, $V^\mathbb{C}=F\oplus \bar F\}$. 

\begin{rem}
From now on we assume that $(V,\Omega)$ is a real symplectic vector space. We extend $\Omega$ complex bilinearly to $V^\mathbb{C}$, which will be again denoted by $\Omega$.
\end{rem}

\begin{defn}[Symplectomorphism]
A map $T\colon V\to V$ such that $T$ is a vector space isomorphism and $\Omega(Tu,Tv)=\Omega(u,v)$ for all $u,v\in V$ is called a \emph{linear symplectomorphism}.
\end{defn}

\begin{lem}
\label{lem1}
Let $J$ be a complex structure on $V$ such that $\Omega(J_u,J_v)=\Omega(u,v)$ for all $u,v\in V$. Then $F_J$ is a Lagrangian subspace of $V^\mathbb{C}$.
\end{lem}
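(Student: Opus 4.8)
The plan is to verify the two defining properties of a Lagrangian subspace of the complex symplectic vector space $(V^\mathbb{C},\Omega)$: that $F_J$ is isotropic, and that $\dim_\mathbb{C} F_J = \frac12\dim_\mathbb{C}V^\mathbb{C} = n$. The dimension count is already available from the discussion preceding the statement: $J$ extended complex-linearly to $V^\mathbb{C}$ satisfies $J^2=-\id$, so $V^\mathbb{C}=F_J\oplus\bar F_J$ with $F_J$ the $+\I$-eigenspace, and since $c$ conjugates $F_J$ onto $\bar F_J$ we get $\dim_\mathbb{C}F_J=\dim_\mathbb{C}\bar F_J$, hence each equals $n$. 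So the only real content is isotropy.

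For isotropy, I would take $u,v\in F_J$, so $Ju=\I u$ and $Jv=\I v$ (using the complex-linear extension of $J$). The hypothesis $\Omega(Ju,Jv)=\Omega(u,v)$ holds for all $u,v\in V$ and extends complex-bilinearly to $V^\mathbb{C}$, since both sides are complex-bilinear forms agreeing on the real form $V$. Applying it to $u,v\in F_J$ gives
\[
\Omega(u,v) = \Omega(Ju,Jv) = \Omega(\I u,\I v) = \I^2\,\Omega(u,v) = -\Omega(u,v),
\]
so $2\,\Omega(u,v)=0$, i.e. $\Omega(u,v)=0$. Thus $F_J\subseteq F_J^{\perp}$, which is exactly the statement that $F_J$ is isotropic.

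Combining the two facts, $F_J$ is an isotropic subspace of $V^\mathbb{C}$ of dimension $n=\frac12\dim_\mathbb{C}V^\mathbb{C}$, and by the criterion recorded in the earlier remark (isotropic of half dimension $\iff$ Lagrangian, equivalently maximal isotropic) it is Lagrangian.

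I do not anticipate a genuine obstacle here: the dimension half of the argument is essentially bookkeeping about eigenspaces and complex conjugation that was set up in the preceding paragraphs, and the isotropy half is the one-line eigenvalue computation above. The only point deserving a word of care is the legitimacy of extending the identity $\Omega(Ju,Jv)=\Omega(u,v)$ from $V$ to all of $V^\mathbb{C}$; this is justified because a complex-bilinear form on $V^\mathbb{C}$ is determined by its restriction to $V\times V$, and both $\Omega$ and $(u,v)\mapsto\Omega(Ju,Jv)$ are complex-bilinear and agree there. Everything else is immediate.
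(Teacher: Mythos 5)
Your proof is correct, and it is the standard argument one would expect here: extend the compatibility identity $\Omega(Ju,Jv)=\Omega(u,v)$ complex-bilinearly to $V^\mathbb{C}$, use the eigenvalue equation on $F_J$ to get $\Omega(u,v)=-\Omega(u,v)$ and hence isotropy, and combine with the dimension count $\dim_\mathbb{C}F_J = n = \tfrac12\dim_\mathbb{C}V^\mathbb{C}$ coming from $V^\mathbb{C}=F_J\oplus\bar F_J$. The paper states this lemma without supplying a proof, so there is nothing to compare against, but the argument you give is exactly the one the surrounding text sets up (the decomposition $V^\mathbb{C}=F_J\oplus\bar F_J$, the complex-bilinear extension of $\Omega$, and the earlier remark that isotropic of half dimension is equivalent to Lagrangian). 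The extra care you take about why the compatibility identity extends to $V^\mathbb{C}$ is a worthwhile point to make explicit.
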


\begin{lem}[Converse of Lemma \ref{lem1}]
Let $F$ be a Lagrangian subspace of $V^\mathbb{C}$ such that $V=F\oplus \bar F$. Let $J_F$ be the associated complex structure on $V$ corresponding to $F$. Then $\Omega(J_Fu,J_Fv)=\Omega(u,v)$ for all $u,v\in V$.
\end{lem}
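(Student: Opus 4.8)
The plan is to verify the identity by a short direct computation after decomposing vectors of $V$ along the splitting $V^\mathbb{C}=F\oplus\bar F$ (this is the hypothesis, reading $V^\mathbb{C}$ for $V$). As recalled in the proof of the preceding lemma, every $u\in V$ can be written uniquely as $u=f+\bar f$ with $f\in F$, and then $J_Fu=\I f-\I\bar f$; likewise write $v=g+\bar g$ with $g\in F$. First I would expand, using the complex bilinearity of the extension of $\Omega$ to $V^\mathbb{C}$,
\[
\Omega(J_Fu,J_Fv)=\Omega(\I f-\I\bar f,\ \I g-\I\bar g)=-\big(\Omega(f,g)-\Omega(f,\bar g)-\Omega(\bar f,g)+\Omega(\bar f,\bar g)\big),
\]
and separately $\Omega(u,v)=\Omega(f,g)+\Omega(f,\bar g)+\Omega(\bar f,g)+\Omega(\bar f,\bar g)$.

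The key input is that $\bar F$ is Lagrangian as well, not just $F$. This follows from the reality of $\Omega$ on $V$: since $\Omega$ is $\R$-valued on $V\times V$ and is extended complex-bilinearly, one has $\overline{\Omega(a,b)}=\Omega(\bar a,\bar b)$ for all $a,b\in V^\mathbb{C}$, so $F$ isotropic forces $\Omega(\bar f,\bar g)=\overline{\Omega(f,g)}=0$; as $\bar F$ is $n$-dimensional, it is therefore Lagrangian. Substituting $\Omega(f,g)=\Omega(\bar f,\bar g)=0$ into the two expansions above leaves $\Omega(J_Fu,J_Fv)=\Omega(f,\bar g)+\Omega(\bar f,g)=\Omega(u,v)$, which is exactly the claim.

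Since the algebra is a two-line expansion, the only points that need care are: confirming that the decomposition $u=f+\bar f$ with $f\in F$ is genuinely available for every $u\in V$ — this uses $V^\mathbb{C}=F\oplus\bar F$ together with the observation, already made in the analogous earlier lemma, that $\bar u=u$ forces the $\bar F$-component of $u$ to be the conjugate of its $F$-component — and keeping track of the sign coming from $\I^2=-1$. I expect the only genuine step beyond routine manipulation is recording the identity $\overline{\Omega(a,b)}=\Omega(\bar a,\bar b)$ and using it to upgrade $\bar F$ to a Lagrangian subspace; with that in hand the proof is immediate.
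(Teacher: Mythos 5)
Your proof is correct. The paper actually leaves this lemma unproved (as it does for the preceding Lemma 3.2.1), so there is no official argument to compare against; but the route you take is the natural one and each step checks out: the decomposition $u=f+\bar f$, $f\in F$, is exactly the one established in the proof of the earlier lemma about $J_F$ being a complex structure; the expansion by complex bilinearity is right, including signs; and the identity $\overline{\Omega(a,b)}=\Omega(\bar a,\bar b)$ (immediate from writing $a=a_1+\I a_2$, $b=b_1+\I b_2$ with real parts and using that $\Omega$ is $\R$-valued on $V$) is precisely what is needed to kill the $\Omega(\bar f,\bar g)$ term along with the $\Omega(f,g)$ term. One small remark: you only need $\bar F$ to be \emph{isotropic}, not Lagrangian, and isotropy of $\bar F$ is a direct consequence of the conjugation identity together with isotropy of $F$; invoking the dimension count to upgrade it to Lagrangian is harmless but unnecessary. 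Also, as you observed, the hypothesis ``$V=F\oplus\bar F$'' in the statement is a typo for ``$V^\mathbb{C}=F\oplus\bar F$,'' and you read it correctly.
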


\begin{defn}[Compatible/positive structure]
Let $J$ be a complex structure on $(V,\Omega)$. We say that $J$ is \emph{compatible} with $\Omega$ if $\Omega(Ju,Jv)=\Omega(u,v)$ for all $u,v\in V$. We say $J$ is \emph{positive} if $\Omega(u,Ju)>0$ for all $u\in V\setminus\{0\}$.
\end{defn}

\begin{lem}
If $J$ is a compatible positive complex structure on $J$, then the form 
$$(u,v)\mapsto \Omega(u,Jv)$$
defines an inner product on $V$.
\end{lem}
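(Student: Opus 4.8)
The plan is to verify, one at a time, the three defining properties of an inner product for the bilinear form $g(u,v) := \Omega(u,Jv)$ on the real vector space $V$: bilinearity, symmetry, and positive-definiteness.

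Bilinearity is immediate and I would dispatch it first: $\Omega$ is bilinear by definition of $\bigwedge^2 V^*$, and $J$ is linear, so both $u \mapsto \Omega(u,Jv)$ (for fixed $v$) and $v \mapsto \Omega(u,Jv)$ (for fixed $u$) are $\R$-linear.

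The heart of the argument is symmetry, $g(u,v) = g(v,u)$, i.e.\ $\Omega(u,Jv) = \Omega(v,Ju)$. Here I would invoke the compatibility hypothesis $\Omega(Ju',Jv') = \Omega(u',v')$, applied with $u' = u$ and $v' = Jv$, together with $J^2 = -\id$ and the antisymmetry of $\Omega$. Concretely, $\Omega(u,Jv) = \Omega(Ju, J(Jv)) = \Omega(Ju, -v) = -\Omega(Ju,v) = \Omega(v,Ju) = g(v,u)$. The only thing to be careful about is performing this chain of substitutions in the right order, but no real computation is involved.

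Finally, positive-definiteness is precisely the positivity hypothesis on $J$: $g(u,u) = \Omega(u,Ju) > 0$ for every $u \in V \setminus \{0\}$, while $g(0,0) = 0$ by bilinearity. A symmetric positive-definite bilinear form on a real vector space is by definition an inner product, so the claim follows. I do not expect any genuine obstacle here; the symmetry step is the only place where the hypotheses interact nontrivially, and it is a two-line manipulation.
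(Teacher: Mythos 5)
Your proof is correct and follows the same route as the paper's: bilinearity by inspection, symmetry via $\Omega(u,Jv)=\Omega(Ju,J^2v)=-\Omega(Ju,v)=\Omega(v,Ju)$ using compatibility, $J^2=-\id$, and antisymmetry of $\Omega$, and positive-definiteness directly from the positivity hypothesis. No substantive differences.
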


\begin{proof}
We need to check that $\Omega(\enspace,J\enspace)$ is a symmetric bilinear form and it is positive-definite. Bilinearity is clear. It is symmetric since $\Omega(u,Jv)=\Omega(Ju,J^2v)=-\Omega(Ju,v)=\Omega(v,Ju)$. It is positive-definite since $\Omega(u,Ju)>0$ for all $u\in V\setminus\{0\}$, which was an assumption. 
\end{proof}
Let $J$ be a compatible complex structure on $V$. Let $F_J$ be the corresponding Lagrangian subspace of $V^\mathbb{C}$. Define a Hermitian form $h^J$ on $F_J$ by 
$$h^J(u,v):=-\I\Omega(u,\bar v).$$
We actually need to check that it is indeed Hermitian, i.e. it is sesquilinear and $h^J(u,v)=\overline{h^J(v,u)}$.

\begin{lem}
If $J$ is positive then $h^J$ is a positive-definite Hermitian form on $F_J$. 
\end{lem}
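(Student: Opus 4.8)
The plan is to first confirm that $h^J$ really is a Hermitian form, and then to prove positive-definiteness by transporting elements of $F_J$ back to $V$ via the isomorphism $V_J\xrightarrow{\sim}F_J$, $v\mapsto v-\I Jv$, recorded in the remark above.

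For the Hermitian property I would use that the $\Omega$ on $V^{\mathbb C}$ is the complex-bilinear extension of a real form, so that complex conjugation $c$ satisfies $\overline{\Omega(a,b)}=\Omega(\bar a,\bar b)$ for all $a,b\in V^{\mathbb C}$. Sesquilinearity of $h^J$ (linear in the first slot, antilinear in the second) is then immediate from bilinearity of $\Omega$ together with antilinearity of $c$. For conjugate symmetry one computes $\overline{h^J(v,u)}=\I\,\overline{\Omega(v,\bar u)}=\I\,\Omega(\bar v,u)=-\I\,\Omega(u,\bar v)=h^J(u,v)$, where the last step uses antisymmetry of $\Omega$. (Since $J$ is here already assumed compatible, $F_J$ is Lagrangian by Lemma \ref{lem1}, though that fact is not needed for the computation below.)

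For positive-definiteness, take $u\in F_J\setminus\{0\}$ and write $u=v-\I Jv$ with $v\in V\setminus\{0\}$; this is possible and unique because $v\mapsto v-\I Jv$ is an isomorphism $V_J\to F_J$. As $v,Jv\in V$ are fixed by $c$, we get $\bar u=v+\I Jv$. Expanding $\Omega(u,\bar u)=\Omega(v-\I Jv,\,v+\I Jv)$ by bilinearity and using $\Omega(v,v)=\Omega(Jv,Jv)=0$ together with $\Omega(Jv,v)=-\Omega(v,Jv)$ gives $\Omega(u,\bar u)=2\I\,\Omega(v,Jv)$, hence $h^J(u,u)=-\I\,\Omega(u,\bar u)=2\,\Omega(v,Jv)$. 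By the positivity hypothesis $\Omega(v,Jv)>0$ for $v\neq 0$, so $h^J(u,u)>0$.

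There is no genuine obstacle here: the computation is short, and the only thing requiring care is the bookkeeping of the factors of $\I$ and the signs coming from antisymmetry of $\Omega$. The one point I would flag rather than reprove is that every $u\in F_J$ admits the representation $u=v-\I Jv$, which is exactly the content of the isomorphism $V_J\cong F_J$ stated earlier, so I would simply invoke it.
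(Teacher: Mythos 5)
Your proof is correct and follows essentially the same route as the paper: write $u\in F_J$ as $u=v-\I Jv$ with $v\in V$ via the isomorphism $V_J\cong F_J$, expand $\Omega(u,\bar u)$ by bilinearity to get $2\I\,\Omega(v,Jv)$, and invoke positivity. You additionally verify the Hermitian property (which the paper flags as needing a check but does not carry out), and your final line $h^J(u,u)=-\I\,\Omega(u,\bar u)=2\,\Omega(v,Jv)$ keeps the sign consistent with the definition of $h^J$, whereas the paper's closing line has a sign slip (it writes $\I\,\Omega(u,\bar u)$ where $-\I\,\Omega(u,\bar u)$ is meant).
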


\begin{proof}
Given $u\in F_J$, we can write $u=w-\I Jw$, where $w\in V$. Thus
$$\Omega(u,\bar u)=\Omega(w-\I Jw,w+\I Jw)=-\I\Omega(Jw,w)+\I\Omega(w,Jw)=2\I\Omega(w,Jw),$$
and hence $\I\Omega(u,\bar u)=2\Omega(w,Jw)>0$ for $w\in V$.
\end{proof}

\begin{defn}[Hermitian form w.r.t. a Lagrangian subspace]
Let $F$ be a Lagrangian subspace of $V^\mathbb{C}$ such that $V^\mathbb{C}=F\oplus \bar F$. Define a form $h^F$ on $F$ by 
$$h^F(u,v):=-\Omega(u,\bar v).$$
We can easily check that $h^F$ is a non Hermitian form. We say $F$ is \emph{positive} if $h^F$ is positive-definite, i.e. $h^F(u,u)>0$ for $u\in F\setminus\{0\}$.
\end{defn}

\begin{prop}
Let $(V,\Omega)$ be a real symplectic vector space. Then there is a canonical bijection between the set of compatible positive complex structures on $V$ and positive Lagrangian subspaces $F$ of $V^\mathbb{C}$ such that $V^\mathbb{C}=F\oplus\bar F$.
\end{prop}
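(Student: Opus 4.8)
The plan is to bootstrap from the bijection already established between complex structures $J$ on $V$ and $n$-dimensional subspaces $F \subseteq V^{\mathbb{C}}$ with $V^{\mathbb{C}} = F \oplus \bar F$, namely $J \mapsto F_J$ (the $+\I$-eigenspace) with inverse $F \mapsto J_F$. The assertion then amounts to showing that this bijection restricts, first, to one between \emph{compatible} complex structures and \emph{Lagrangian} such subspaces, and then further to compatible \emph{positive} structures and \emph{positive} Lagrangian subspaces, and that the resulting map involves no auxiliary choices.

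First I would invoke Lemma \ref{lem1} together with its converse: $\Omega(Ju,Jv) = \Omega(u,v)$ for all $u,v \in V$ holds if and only if $F_J$ is a Lagrangian subspace of $V^{\mathbb{C}}$. Since the underlying correspondence $J \leftrightarrow F_J$ is already a bijection, this immediately yields a bijection between compatible complex structures on $V$ and Lagrangian subspaces $F$ with $V^{\mathbb{C}} = F \oplus \bar F$.

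Next, for a compatible $J$, I would relate positivity of $J$ to positivity of the Hermitian form $h^{F_J}$. The tool is the complex-linear isomorphism $V_J \xrightarrow{\ \sim\ } F_J$, $w \mapsto w - \I Jw$, recorded in an earlier remark: as $w$ runs over $V \setminus \{0\}$, the vector $u = w - \I Jw$ runs over all of $F_J \setminus \{0\}$. Repeating the computation from the proof that $h^J$ is positive-definite — expanding $\Omega(u,\bar u) = \Omega(w - \I Jw,\, w + \I Jw)$ and using compatibility — one finds that $h^{F_J}(u,u)$ equals a fixed positive real multiple of $\Omega(w,Jw)$. Hence $h^{F_J}$ is positive-definite on $F_J$ if and only if $\Omega(w,Jw) > 0$ for every $w \in V \setminus \{0\}$, which is exactly the definition of $J$ being positive. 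Combined with the previous paragraph, the bijection restricts to the asserted correspondence, and it is canonical because $J \mapsto F_J$ uses no choice of basis or metric.

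The main thing to be careful about is the bookkeeping of the sesquilinear-form conventions — the factors of $\I$ and the signs in $h^J$ versus $h^F$ — so that ``positive'' encodes the same inequality $\Omega(w,Jw) > 0$ on both sides; one should also verify explicitly that $w \mapsto w - \I Jw$ surjects onto $F_J$, so that the quantifier ``for all $w \in V$'' legitimately passes to ``for all $u \in F_J$''. Everything else is routine and rests on the lemmas already proved.
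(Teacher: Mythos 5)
Your proposal is correct and follows exactly the route the paper implicitly sets up but does not write out: start from the already-established bijection $J \leftrightarrow F_J$ between complex structures and $n$-dimensional subspaces $F$ with $V^{\mathbb{C}} = F \oplus \overline{F}$, restrict it to compatible structures versus Lagrangian subspaces via Lemma~\ref{lem1} and its converse, and then restrict further to positive structures versus positive Lagrangians using the isomorphism $w \mapsto w - \I Jw$ together with the computation $\Omega(u,\bar u) = 2\I\,\Omega(w,Jw)$ from the lemma on $h^J$. Your caution about the factor of $\I$ is well-placed and not merely cosmetic: the paper defines $h^J(u,v) = -\I\,\Omega(u,\bar v)$ but $h^F(u,v) = -\Omega(u,\bar v)$, and with the latter $h^F(u,u) = -\Omega(u,\bar u) = -2\I\,\Omega(w,Jw)$ would be purely imaginary, so the definition of $h^F$ evidently has a typo and should carry the same $-\I$ as $h^J$, after which your identification $h^{F_J}(u,u) = 2\Omega(w,Jw)$ goes through and the positivity conditions match exactly.
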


\begin{defn}[K\"ahler triple]
A triple $(V,\Omega,J)$, where $(V,\Omega)$ is a real symplectic vector space and $J$ is a compatible positive complex structure on $(V,\Omega)$ is called a \emph{K\"ahler triple}.
\end{defn}

\subsection{Exercises}

\begin{exe}
\label{problem1}
Consider a particle moving on the real line in the presence of a force from a potential $V$. Let $E_0\in\R$ be the energy of the particle and suppose $V(x)<E_0$ for all $x_0\leq x\leq x_1$. Then a particle with initial position $x_0$ and a positive initial velocity will continue to move to the right of $x_0$ until it reaches $x_1$. Show that the total time needed to travel from $x_0$ to $x_1$ is given by $$t=\int_{x_0}^{x_1}\sqrt{\frac{m}{2(E_0-V(y))}}\dd y.$$
\end{exe}

\begin{exe}
We will use the notation of Exercise \ref{problem1}. Assume that $V(x)<E_0$ for $x_0\leq x\leq x_1$ but $V(x_1)=E_0$.
\begin{enumerate}
\item{Show that if $V'(x_1)\not=0$, then the particle reaches to $x_1$ in a finite time.}
\item{If $V'(x_1)=0$, then the particle never reaches $x_1$, i.e. the integral $\int_{x_0}^{x_1}\sqrt{\frac{m}{2(E_0-V(y))}}\dd y$ diverges.
}
\end{enumerate}
\end{exe}

\begin{exe}
Let $F$ be a function $F\colon \R^2\setminus\{0\}\to \R^2$ given by $$F(x^1,x^2)=\left(\underbrace{-\frac{x^2}{(x^1)^2+(x^2)^2}}_{F_1},\underbrace{\frac{x^1}{(x^1)^2+(x^2)^2}}_{F_2}\right).$$
Show that: 
\begin{enumerate}
\item{$\frac{\partial F_1}{\partial x^2}=\frac{\partial F_2}{\partial x^1}$,
}
\item{$F$ is not \emph{conservative}, i.e. $F$ is not of the form $-\nabla V$.}
\end{enumerate}
\end{exe}

\begin{exe}
Consider a particle moving in $\R^d$ with a velocity dependent force law $$F(x,v)=-\nabla V(x)+F_2(x,v),$$ where $F_2\colon\R^2\times\R^d\to\R^d$. Assume that $vF_2(x,v)=0$ for all $x,v\in\R^d$. Show that then the function $E(x,v)=\frac{1}{2}mv^2+V(x)$ is conserved.
\end{exe}

\begin{exe}[Angular momentum]
Consider a particle moving in $\R^2$ with position $x$ and velocity $v$. Recall that the momentum is given by $p=mv$. Define the \emph{angular momentum} of the particle by $J=x_1p_2-x_2p_1$. Suppose we have a particle of mass $m$ moving in $\R^2$ under the influence of a conservative force with potential $V(x)$. Show that:
\begin{enumerate}
\item{If $V$ is rotationally invariant in $\R^2$, i.e. $V(x)=V(Ax)$ for any rotation matrix $A=\begin{pmatrix}\cos\theta&-\sin\theta\\ \sin\theta&\cos\theta\end{pmatrix}$, then $J$ is conserved along a solution of Newton's equation.
}
\item{If $J$ is conserved along any solution of Newton's equation, then $V$ is rotationally invariant.
}
\end{enumerate}
\end{exe}

\begin{exe}
Varify the various properties of the Poisson bracket.
\end{exe}

\section{Differentiable manifolds}
Let $M$ be a topological space, which is Hausdorff and second countable.
\begin{defn}[Smooth manifold]
Let $M$ be a topological space and $p\in M$. A chart (or local coordinate system) at $p$ is a pair $(U,\phi)$, where $U\subseteq M$ is an open set containing $p$ and $\phi\colon U\to \R^n$ is a homeomorphism of $U$ onto $\phi(U)$, which is an open subset of $\R^n$. More precisely, such a chart is called a chart of \emph{rank} $n$. Moreover, let $(U,\phi)$ and $(V,\psi)$ be two coordinate charts of rank $n$ such that $U\cap V\not=\varnothing$. Then we have the maps
\begin{align*}
\phi\circ\psi^{-1}\colon \psi(U\cap V)&\to \phi(U\cap V)\\
\psi\circ\phi^{-1}\colon \phi(U\cap V)&\to \psi(U\cap V).
\end{align*}
We say $(U,\phi)$ and $(V,\psi)$ are \emph{smoothly compatible} if $\phi\circ\psi^{-1}$ is smooth as well as $\psi\circ \phi^{-1}$ is smooth, in other words $\phi\circ\psi^{-1}$ is a \emph{diffeomorphism}. Note that $\phi(u\cap V)\subseteq \R^n$ and $\psi(U\cap V)\subseteq\R^n$, and for this situation we know how to define the term \emph{smooth} and \emph{diffeomorphism} etc. A smooth \emph{atlas} $\calA$ of rank $n$ of $M$ is a collection $\{(U_i,\phi_i)\mid i\in I\}$ of smooth compatible coordinate charts such that $M=\bigcup_{i\in I}U_i$. An atlas $\calA$ of rank $n$ is called \emph{maximal} if $\calA$ is not contained in a strictly larger smooth atlas of rank $n$. A \emph{smooth structure} $\calA$ on $M$ is a smooth maximal atlas of rank $n$. The pair $(M,\calA)$ is called a \emph{smooth manifold} of dimension $n$.
\end{defn}

\begin{lem}
\label{atlas}
Given any smooth atlas $\calA$ of $M$, there is a maximal smooth atlas $\overline{\calA}\supseteq \calA$.
\end{lem}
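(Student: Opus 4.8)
The plan is to build $\overline{\calA}$ by hand as the collection of \emph{all} charts that are compatible with everything already in $\calA$, and then check that this collection is a smooth atlas, contains $\calA$, and cannot be enlarged. Concretely, set
\[
\overline{\calA}:=\bigl\{(U,\phi)\ \big|\ (U,\phi)\text{ is a rank-}n\text{ chart on }M\text{ smoothly compatible with every }(U_i,\phi_i)\in\calA\bigr\}.
\]
Since the members of $\calA$ are pairwise smoothly compatible (and each is trivially compatible with itself, the transition map being the identity), we have $\calA\subseteq\overline{\calA}$; in particular $M=\bigcup_{i}U_i$ is covered by domains of charts in $\overline{\calA}$.

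The heart of the argument is to show that any two charts $(U,\phi),(V,\psi)\in\overline{\calA}$ are smoothly compatible with \emph{each other}. Smoothness of $\psi\circ\phi^{-1}\colon\phi(U\cap V)\to\psi(U\cap V)$ is a local property on $\phi(U\cap V)$, so it suffices to establish it near every point $\phi(p)$ with $p\in U\cap V$. Choose $(U_i,\phi_i)\in\calA$ with $p\in U_i$. On the open set $\phi(U\cap V\cap U_i)$, which contains $\phi(p)$, one has the factorization $\psi\circ\phi^{-1}=(\psi\circ\phi_i^{-1})\circ(\phi_i\circ\phi^{-1})$; both factors are smooth by the very definition of $\overline{\calA}$ (as $(U,\phi)$ and $(V,\psi)$ are each compatible with $(U_i,\phi_i)$), so the composite is smooth near $\phi(p)$ by the chain rule. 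As $p\in U\cap V$ was arbitrary, $\psi\circ\phi^{-1}$ is smooth on all of $\phi(U\cap V)$, and the symmetric argument handles $\phi\circ\psi^{-1}$. Hence $\overline{\calA}$ is a smooth atlas of rank $n$ containing $\calA$.

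For maximality, suppose $\calB$ is a smooth atlas of rank $n$ with $\overline{\calA}\subseteq\calB$. Every chart of $\calB$ is smoothly compatible with every other chart of $\calB$, hence in particular with every chart of $\calA\subseteq\overline{\calA}\subseteq\calB$; by definition this puts every chart of $\calB$ into $\overline{\calA}$, so $\calB=\overline{\calA}$. Thus $\overline{\calA}$ is not contained in a strictly larger atlas. (One can add the remark that $\overline{\calA}$ is in fact the \emph{unique} maximal atlas containing $\calA$: any maximal $\calB\supseteq\calA$ has all its charts compatible with all of $\calA$, whence $\calB\subseteq\overline{\calA}$, and maximality of $\calB$ forces $\calB=\overline{\calA}$.)

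The only step with any real content is the mutual-compatibility of two charts of $\overline{\calA}$: compatibility is not transitive "globally", and the point is precisely that it becomes usable once one localizes through a chart of $\calA$ and invokes that smoothness of a map between open subsets of $\R^n$ is checked locally, together with the chain rule. Everything else — the covering property, $\calA\subseteq\overline{\calA}$, maximality, uniqueness — is routine bookkeeping.
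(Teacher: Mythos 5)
Your proof is correct and is precisely the standard argument (it is, in fact, the proof given in Lee's book, which the paper simply cites rather than reproducing). The key step — that mutual compatibility of two charts in $\overline{\calA}$ is established locally by factoring the transition map through a chart of $\calA$, since smoothness of maps between open subsets of $\R^n$ is a local property — is exactly the right observation, and the rest (covering, containment, maximality, uniqueness) is handled cleanly.
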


\begin{proof}
See \cite{Lee}.
\end{proof}

\begin{rem}
As a consequence of Lemma \ref{atlas}, we see that it is sufficient to have a smooth atlas in order to define a smooth structure on a topological space $M$.
\end{rem}

\begin{ex}
Let $M=\R^n$. Then $\calA=\{\id_{\R^n}\}$ defines a smooth structure on $\R^n$, which is called the \emph{standard smooth structure}. Note that $\calA$ is not maximal.
\end{ex}

\begin{ex}
Let $M=S^1=\{(x,y)\in\R^2\mid x^2+y^1=1\}$. Define 
\begin{align*}
U_1&=\{(x,y)\in S^1\mid -1<x<1,y>0\},\\
U_2&=\{(x,y)\in S^1\mid -1<y<1,x>0\},\\
U_3&=\{(x,y)\in S^1\mid -1<x<1,y<0\},\\
U_4&=\{(x,y)\in S^1\mid -1<y<1,x<0\}.
\end{align*}
Moreover, define
\begin{align*}
\phi_1\colon U_1&\to (-1,1),\hspace{0.3cm}\phi_1(x,y)=x,\\
\phi_2\colon U_2&\to (-1,1),\hspace{0.3cm}\phi_2(x,y)=y,\\
\phi_3\colon U_3&\to (-1,1),\hspace{0.3cm}\phi_3(x,y)=x,\\
\phi_4\colon U_4&\to (-1,1),\hspace{0.3cm}\phi_4(x,y)=y.\\
\end{align*}
\end{ex}

\begin{exe}
Show that $\{(U_i,\phi_i)\mid i=1,3,4\}$ form an atlas of $S^1$.
\end{exe}

\begin{ex}
Let $V_1=S^1\setminus\{(0,1)\}$ and $V_2=S^1\setminus\{(0,-1)\}$. We call $(0,1)$ the \emph{North pole} and $(0,-1)$ the \emph{South pole}. Define 
\begin{align*}
\psi_1\colon V_1&\to \R, \hspace{0.3cm}\psi_1(x,y)=\frac{x}{1-y},\\
\psi_2\colon V_2&\to \R, \hspace{0.3cm}\psi_2(x,y)=\frac{x}{1+y}.
\end{align*}
\end{ex}

\begin{exe}
Show that $\{(V_1,\psi_1),(V_2,\psi_2)\}$ form another atlas for $S^1$. What can we say about the smooth structures on $S^1$ given by two different atlases $\{(U_i,\phi_i)\mid i=1,3,4\}$ and $\{(V_i,\psi_i)\mid i=1,2\}$?
\end{exe}

\begin{defn}[Smooth functions]
Let $(M,\calA)$ be a smooth manifold of dimension $n$. A function $f\colon M\to k$ (for $k=\R$ or $\mathbb{C}$) is \emph{smooth} if for every coordinate chart $(U,\phi)$, $f\circ \phi^{-1}\colon \phi(U)\to k$ is smooth. We will use $C^\infty(M)$ to denote the set of all \emph{smooth functions} on $M$.
\end{defn}

\begin{defn}[Vector field]
A \emph{vector field} $X$ on $M$ is a map $X\colon C^\infty(M)\to C^{\infty}(M)$ such that for all $f,g\in C^\infty(M)$ and $c\in k$
\begin{enumerate}
\item{$X(f+g)=X(f)+X(g)$ and $X(cf)=cX(f)$,
}
\item{$X(fg)=fX(g)+X(f)g$, i.e. $X$ is a \emph{derivation}.
}
\end{enumerate}
\end{defn}

\begin{ex}
Let $M=\R^n$. Then for $f_1,...,f_n\in\ C^\infty(\R^n)$
$$X=f_1\frac{\partial}{\partial x^1}+\dotsm +f_n\frac{\partial}{\partial x^n}$$
is a vector field.
\end{ex}

\begin{rem}
Let $X$ and $Y$ be two vector fields on $M$. Then $XY$ is \emph{not} a vector field in general. Instead, $[X,Y]:=XY-YX$ is a vector field.
\end{rem}

\begin{defn}[Lie bracket]
For two vector fields $X$ and $Y$ on $M$, we can define their \emph{Lie bracket}
$$[X,Y]:=XY-YX.$$
\end{defn}

Let $\mathrm{Vect}(M)$ denote the space of vector fields on $M$. Then $[\enspace,\enspace]\colon \mathrm{Vect}(M)\times \mathrm{Vect}(M)\to \mathrm{Vect}(M)$ is $k$-bilinear. Moreover, $[X,[Y,Z]]=[[X,Y],Z]+[Y,[X,Z]]$ (Jacobi identity). In other words, $(\mathrm{Vect}(M),[\enspace,\enspace])$ is a \emph{Lie algebra}. 

\begin{defn}[Tangent vector]
Given $p\in M$, a tangent vector $v$ at $p$ is an $\R$-linear map $v\colon C^\infty(M)\to\R$ such that $v(fg)=f(p)v(g)+g(p)v(f)$.
\end{defn}

\begin{ex}
Let $M:=\R^n$ and $p\in\R^n$. Then each $v\in\R^n$ can be regarded as a tangent vector at $p$ as follows: 
$$v(f)=\frac{\dd}{\dd t}f(p+tv)\bigg|_{t=0}.$$
\end{ex}
Let $T_pM$ denote the space of tangent vectors at $p$. Then one can show that $T_pM$ is a vector space.

\begin{rem}
$\dim T_pM$ is given by the dimension of the manifold $M$.
\end{rem}

\begin{rem}
For $\varepsilon>0$, let $\gamma\colon(-\varepsilon,\varepsilon)\to M$ be a smooth curve such that $\gamma(0)=p$. Define $v_\gamma\colon C^\infty(M)\to \R$ by $v_\gamma(f)=\frac{\dd}{\dd t}(f\circ \gamma)(t)\big|_{t=0}$. Then $v_\gamma\in T_pM$. In fact, it can be shown that each $v\in T_pM$ appears in this way. This is the geometric way of thinking about a tangent vector at $p\in M$.
\end{rem}

\begin{defn}[Vector bundle]
Let $M$ be a smooth manifold. A \emph{real (complex) vector bundle} of rank $k$ is a pair $(E,\pi)$, where $E$ is a smooth manifold, $\pi\colon E\to M$ a smooth map, which is surjective such that
\begin{enumerate}
\item{For each $p\in M$, $E_p:=\pi^{-1}(p)$ is a real (complex) vector space,}
\item{For each $p\in M$, there is a neighborhood $U$ of $p$ and a diffeomorphism
$$\phi\colon \pi^{-1}(U)\to U\times \R^k (U\times \mathbb{C}^k),$$
such that the diagram
\[
\begin{tikzcd}
\pi^{-1}(U)\arrow[d,"\pi"]\arrow[r,"\phi"]&U\times \R^k (U\times \mathbb{C}^k)\arrow[dl,"\mathrm{pr}_1"]\\
U&
\end{tikzcd}
\]
is commutative and $\phi\big|_{E_p}\colon E_p\to \{p\}\times\R^k (\{p\}\times\mathbb{C}^k)$ is a linear isomorphism. We call $E_p$ the \emph{fiber} over $p$. Here we have denoted by $\mathrm{pr}_1$ the projection onto the first factor.
}
\end{enumerate}
\end{defn}

\begin{ex}[Trivial bundle]
Let $E:=M\times \R^k\xrightarrow{\pi}M$ be a vector bundle of rank $k$. Such a vector bundle is called a \emph{trivial vector bundle}.
\end{ex}
\begin{ex}[Tangent bundle]
Let $M$ be a smooth manifold of dimension $n$. Define $TM:=\bigsqcup_{p\in M}T_pM$. We call $TM$ the \emph{tangent bundle} of $M$.
\end{ex}

\begin{exe}
Show that the vector bundle $TM$ can be given a smooth structure so that it is a vector bundle over $M$ of rank $n$, where $n=\dim M$. 
\end{exe}

\begin{exe}[Cotangent bundle]
We can define the \emph{dual bundle} $T^*M:=\bigsqcup_{p\in M}(T_pM)^*$, where $(T_pM)^*$ is the dual of $T_pM$. Show that $T^*M$ is a vector bundle over $M$ of rank $n$ where $n=\dim M$. We call $T^*M$ the \emph{cotangent bundle} of $M$.
\end{exe}

\begin{rem}
Let $E$ and $F$ be two vector bundles over $M$. Then we can construct new vector bundles over $M$ as follows:
\begin{enumerate}
\item{(Tensor product bundle)
$E\otimes F\xrightarrow{\pi}$, where $(E\otimes F)_p:=E_p\otimes F_p$ for all $p\in M$,
}
\item{(Dual bundle)
$E^*\xrightarrow{\pi}M$, where $E_p^*:=(E_p)^*$ for all $p\in M$,
}
\item{(Direct sum of bundles)
$E\oplus F\xrightarrow{\pi}M$, where $(E\oplus F)_p:=E_p\oplus F_p$ for all $p\in M$,
}
\item{(Exterior bundle)
$\bigwedge^kE^*\xrightarrow{\pi} M$, where $(\bigwedge^kE)^*_p:=\bigwedge^k E_p^*$ for all $p\in M$.
}
\end{enumerate}
\end{rem}

\begin{defn}[Section]
Let $(E,\pi)$ be a vector bundle over $M$. A \emph{smooth section} $s$ of $E$ is a smooth map $s\colon M\to E$ such that $\pi\circ s=\id_M$. This means $s(p)\in E_p$ for all $p\in M$.
\end{defn}
\begin{rem}
We will use $\Gamma(M,E)$ to denote the space of all smooth sections on $M$. Note that $\Gamma(M,E)$ is a(n) (infinite-dimensional) vector space. 
\end{rem}

\begin{ex}
Let $E:=M\times \R^k\xrightarrow{\pi}M$. Then $\Gamma(M,E)$ can be identified with $C^\infty(M,\R^k)$, which are $\R^k$-valued smooth functions on $M$.
\end{ex}

\begin{defn}[Line bundle]
When $E$ is a vector bundle over $M$ of rank $1$, we call it a \emph{line bundle} over $M$.
\end{defn}

\begin{rem}
If $E:=M\times\R\xrightarrow{\pi}M$, then $\Gamma(M,E)=C^\infty(M)$.
\end{rem}

\begin{ex}
Let $TM$ be the tangent bundle over $M$. Then $\Gamma(M,TM)=\mathrm{Vect}(M)$, which is the space of vector fields on $M$.
\end{ex}

\begin{ex}[$1$-form]
Let $T^*M$ be the cotangent bundle over $M$. Then a section $\alpha\in\Gamma(M,T^*M)$ is called a \emph{differential $1$-form}.
\end{ex}

Let $X$ be a vector field on $M$ and $\alpha$ be a $1$-form on $M$. Then $X_p\in T_pM$ and $\alpha_p\in T_p^*M$. Hence, we can define a smooth function 
$$(\alpha(X))(p):=\alpha_p(X_p).$$
Moreover, if $f\in C^\infty(M)$, then $(f\alpha)(X)=f(\alpha(X))$ and $\alpha(fX)=f\alpha(X)$. In fact $\alpha\colon \Gamma(M,TM)\to C^\infty(M)$ is $C^\infty(M)$-linear, i.e. $\alpha(fX+gY)=f\alpha(X)+g\alpha(Y)$ for all $f,g\in C^\infty(M)$ and $X,Y\in\Gamma(M,TM)$. In fact, it can be shown that $$\Gamma(M,T^*M)=\{\alpha\colon \Gamma(M,TM)\to C^\infty(M)\mid \text{$\alpha$ is $C^\infty(M)$-linear}\}.$$ 

\begin{ex}
Let $f\in C^\infty(M)$ and define $\dd f$ by $\dd f(X):=X(f)$, where $X$ is a vector field on $M$. Then we can check that $\dd f\colon \Gamma(M,TM)\to C^\infty(M)$ and that it is $C^\infty(M)$-linear. Hence $\dd f$ defines a $1$-form on $M$.
\end{ex}

\begin{ex}[$k$-form]
Let $T^*M$ be the cotangent bundle over $M$. Consider the vector bundle $\bigwedge^k T^*M$. A $k$-\emph{form} on $M$ is a section of $\bigwedge^k T^*M$. Let $\alpha$ be a $k$-form on $M$. Then for each $p\in M$ we have $\alpha_p\in\bigwedge^k(T_pM)^*$. Let $X_1,...,X_k$ be vector fields on $M$. Then, we can define a smooth function $(\alpha(X_1,...,X_k))(p):=\alpha_p(X_1(p),...,X_k(p))$. One can check that $\alpha$ is $C^\infty(M)$ linea in each argument and alternating. Hence a $k$-form on $M$ can be regarded as a map 
$$\alpha\colon \underbrace{\Gamma(M,TM)\times\dotsm \times \Gamma(M,TM)}_{k}\to C^\infty(M),$$
which is $C^\infty(M)$-multilinear and alternating.
\end{ex}

\begin{ex}
Let $M:=\R^n$. Then we have $TM=\R^n\times\R^n\xrightarrow{\pi}\R^n$. Thus, $T^*M=\R^n\times(\R^n)^*\xrightarrow{\pi}\R^n$. A vector field $X$ on $\R^n$ can be written as 
$$X=\sum_{i=1}^nf_i\frac{\partial}{\partial x^{i}},$$
where $\frac{\partial}{\partial x^1},...,\frac{\partial}{\partial x^n}$ are the coordinate vector fields. Let $\dd x^{i}$ be the $1$-form dual to $\frac{\partial}{\partial x^{i}}$. Then, any $1$-form on $\R^n$ can be written as $\sum_{j=1}^n g_j\dd x^{j}$. A $k$-form on $\R^n$ can be represented as 
$$\sum_{1\leq i_1<\dotsm <i_k\leq n}g_{i_1,...,i_k}\dd x^{i_1}\land\dotsm \land \dd x^{i_k},$$
where $\land$ is the wedge product (alternating tensor product).
\end{ex}

\subsection{Exterior derivative}
Let $\Omega$ be a $k$-form on $M$. We will think of $\Omega$ as a map $$\underbrace{\Gamma(M,TM)\times\dotsm \times \Gamma(M,TM)}_{k}\to C^\infty(M),$$ which is $C^\infty(M)$-multilinear and alternating. Then $\dd \Omega$ is a $(k+1)$-form on $M$ defined by 
\begin{multline*}
\dd\Omega(X_1,...,X_{k+1})=\sum_{j=1}^{k+1}(-1)^{j+1}X_j(\Omega(X_1,...,\widehat{X}_j,...,X_{k+1}))+\\\sum_{1\leq i<j\leq k+1}(-1)^{i+j}\Omega([X_i,X_j],X_1,...,\widehat{X}_i,...,\widehat{X}_j,...,X_{k+1}).
\end{multline*}

\begin{ex}
Let $f\in C^\infty(M)$. Then $\dd f(X)=X(f)$.
\end{ex}

\begin{ex}
Let $M:=\R^2$ and $\alpha:=p\dd x$. Denote by $\partial_x$ the tangent vector $\frac{\partial}{\partial x}$.
Then $\dd\alpha(\partial_{x},\partial_x)=\dd\alpha(\partial_p,\partial_p)=0$ and $\dd\alpha(\partial_x,\partial_p)=1$. Moreover, $\dd p\land\dd x(\partial_x,\partial_x)=\dd p\land\dd x(\partial_p,\partial_p)=0$ and $\dd p\land\dd x(\partial_x,\partial_p)=1$. Thus we get $\dd\alpha=\dd p\land\dd x$.
\end{ex}

\begin{ex}
Let $M:=\R^n\ni(x^1,...,x^n)$. If $\alpha:=f\dd x^1\land\dotsm \land \dd x^k$, then $$\dd\alpha=\sum_{j=k+1}^n\frac{\partial f}{\partial x^j}\dd x^j\land \dd x^1\land\dotsm\land \dd x^k.$$ More generally, if $\alpha:=\sum_{i_1<\dotsm <i_k}f\dd x^{i_1}\land\dotsm \land \dd x^{i_k}$, then $$\dd\alpha=\sum_{j\not\in\{i_1,...,i_k\}}\frac{\partial f}{\partial x^j}\dd x^j\land \dd x^{i_1}\land\dotsm\land \dd x^{i_k}.$$
\end{ex}

\begin{rem}
The operator $\dd$ has the following properties:
\begin{enumerate}
\item{$\dd$ is $\R$-linear,}
\item{If $\omega$ is a $k$-form and $\eta$ is an $\ell$-form on $M$, then $$\dd(\omega\land\eta)=\dd\omega\land\eta+(-1)^k\omega\land \dd\eta,$$
}
\item{$\dd\circ \dd=0$,}
\item{$\dd f(X)=X(f)$ for all $f\in C^\infty(M)$.}
\end{enumerate}
Moreover, these properties uniquely determine $\dd$ on $k$-forms for $0\leq k\leq \dim M$.
\end{rem}

\subsection{Exercises}

\begin{exe}
Let $M:=\R$ and $\calA:=\{\R,\id_\R\}$ and $\calA':=\{(\R,\phi\colon\R\to\R,\phi(x)=x^3)\}$ 
\begin{enumerate}
\item{Show that $\calA'$ is a smooth atlas on $\R$.
}
\item{Show that $\calA$ and $\calA'$ induce different smooth structures on $\R$.
}
\begin{defn}[Standard structure]
The smooth structure on $\R$ induced by $\calA$ (i.e. the smooth maximal atlas containing $\calA$) is called \emph{standard smooth structure} on $\R$.
\end{defn}
\item{
Define $f\colon (\R,\calA)\to(\R,\calA')$ by $f(x)=x^3$. Show that $f$ is a diffeomorphism from $\R$ with the standard smooth structure to $\R$ with the smooth structure induced by $\calA'$.

}
\end{enumerate}
\end{exe}

\begin{exe}
Let $M$ be a smooth manifold. 
\begin{enumerate}
\item{Let $X$ and $Y$ be vector fields on $M$ and $f,g\in C^\infty(M,\R)$. Show that $[fX,gY]=fX(g)Y-gY(f)X+fg[X,Y]$.
}
\item{Show that $[\enspace,\enspace]$ satisfies the Jacobi identity.}
\end{enumerate}
\end{exe}

\begin{exe}
Let $M$ be a smooth manifold with $\dim M=n$.
\begin{enumerate}
\item{Let $\gamma\colon (-1,1)\to M$ be a smooth map (here $(-1,1)$ is given the standard smooth structure) such that $\gamma(0)=p$. Let $f\in C^\infty(M,\R)$. Define $v_\gamma f=\frac{\dd}{\dd t}(f(\gamma(t)))\big|_{t=0}$. Show that $v_\gamma\in T_pM$.
}
\item{Let $p\in M$. Let $(U,\phi)$ be a coordinate chart at $p$ such that $\phi(p)=0$. Let $w\in\R^n$ and $\varepsilon>0$ small such that $tw\in\phi(U)$ for all $t\in (-\varepsilon,\varepsilon)$. Define $\alpha(t)=tw$, $\alpha\colon (-\varepsilon,\varepsilon)\to\R^n$ and $\gamma(t)=\phi^{-1}(\alpha(t))$. Then $\gamma\colon(-\varepsilon,\varepsilon)\to M$ is smooth. Show that $\dd\phi_p(v_\gamma)=w$. \emph{Hint: Use that
$$w(f):=\frac{\dd}{\dd t}(f(tw))\big|_{t=0},$$
for $f\in C^\infty(\R^n)$ and $w\in\R^n$.
} 
}
\end{enumerate}
\end{exe}

\begin{exe}
Let $M$ and $N$ be smooth manifolds and $F\colon M\to N$ be a diffeomorphism. Let $p\in M$. Show that $\dd F_p\colon T_pM\to T_{F(p)}N$ is a vector space isomorphism. (This exercise implies that $M$ and $N$ are diffeomorphic and thus $\dim M=\dim N$.)
\end{exe}

\begin{exe}
Let $M$ and $N$ be smooth manifolds and $F\colon M\to N$ be a smooth map. 
\begin{enumerate}
\item{Let $\omega$ be a $k$-form on $N$. Given vector fields $X_1,...,X_k$ in $M$, define 
$$((F^*\omega)(X_1,...,X_k))(p):=\omega(\dd F_p(X_1),...,\dd F_p(X_k)).$$
Show that $F^*\omega$ is a $k$-form on $M$. (This exercise shows that we can pull back differential forms.)
}
\item{Show that $F^*(\dd \omega)=\dd(F^*\omega)$ for any $k$-form on $N$ with $k\in\N$. (This exercise shows that $\dd$ commutes with the pullback.)
}
\end{enumerate}
\end{exe}

\begin{exe}
Let $V$ be a finite dimensional vector space and $v\in V$. For $\alpha\in\bigwedge^kV^*$, define 
$$\iota_v\alpha(v_2,...,v_k):=\alpha(v,v_2,...,v_k).$$
\begin{enumerate}
\item{Show that $\alpha\in\bigwedge^k V^*$ implies that $\iota_v\alpha\in \bigwedge^{k-1}V^*$. Hence, conclude that $\iota_v$ defines a linear map $\iota_v\colon \bigwedge^kV^*\to \bigwedge^{k-1}V^*$.
}
\item{Show that $\iota_v\circ \iota_v=0$. \emph{Hint: think of a very simple fact about $\alpha$.}
}
\end{enumerate}
\end{exe}

\section{Symplectic manifolds and Hamiltonian systems}
See also \cite{daSilva01} for more on symplectic geometry and its relation to Hamiltonian mechanics.

\subsection{Symplectic manifolds}

\begin{defn}[Closed/exact]
We call a $k$-form $\omega$ \emph{closed}, if $\dd\omega=0$. It is called \emph{exact} if there is a $(k-1)$-form $\alpha$ such that $\dd\alpha=\omega$.
\end{defn}

\begin{ex}
If $\omega$ is exact, then $\dd\omega=0$, i.e. exact forms are closed as well. Let $M=\R^n$, then $\omega$ is closed if and only if $\omega$ is exact (this is given by the \emph{Poincar\'e lemma}).
\end{ex}

\begin{defn}[Symplectic manifold]
A \emph{symplectic manifold} is a pair $(M,\Omega)$, where $M$ is a smooth manifold and $\Omega$ is a $2$-form on $M$ such that
\begin{enumerate}
\item{$\Omega$ is closed, i.e. $\dd\Omega=0$,}
\item{$\Omega$ is nondegenerate, i.e. for all $q\in M$, $\Omega^\flat\colon T_qM\to T_q^*M$ is injective.}
\end{enumerate}
\end{defn}

\begin{defn}[Tautological $1$-form]
Let $M:=T^*N$ for some manifold $N$. Define a $1$-form $\alpha$ on $M$ as 
$$\alpha_{x,p}(X_{x,p}):=\pi^N(\dd \pi^M_{x,p}X_{x,p}),$$
where $\pi^N\colon T^*N\to N$, $\pi^M\colon TM\to M$ and $X_{x,p}\in T_{x,p}M$. The form $\alpha$ is called the \emph{tautological $1$-form} on $T^*N$.
\end{defn}

\begin{ex}
Let $M:=T^*\R\cong \R\times \R\ni(x,p)$. Let $\alpha:=f\dd x+g\dd p$. Then $\alpha(\partial_x)=f$ and $\alpha(\partial_p)=0$, thus $\alpha=f\dd x$. on the other hand $\alpha_{x,p}(\partial_x)=p$ and hence $\alpha=p\dd x$. More generally, if $M:=T^*\R^n\ni(x^{1},...,x^n,p_1,...,p_n)$, then $\alpha=\sum_{1\leq j\leq n}p_j\dd x^j$
\end{ex}

\begin{exe}
Let $(U,\phi)$ be a local coordinate system on $M=T^*N$ given by $$\phi(q)=(x^1(q),...,x^n(q),p_1(q),...,p_n(q)),$$ Show that $\alpha=\sum_{1\leq j\leq n}p_j\dd x^j$. Moreover, show that $(T^*N,\Omega=\dd\alpha)$ is a symplectic manifold.
\end{exe}

\subsection{The Lie derivative}

\begin{defn}[Lie derivative]
Let $f\in C^\infty(M)$ and $X$ be a vector field. The Lie derivative of $f$ along $X$ is defined as $\emph{L}_Xf=X(f)$. Let $X$ and $Y$ be two vector fields. Then we define $\emph{L}_XY=[X,Y]$. Moreover, let $X$ be a vector filed and $\alpha$ a $1$-form. Then $\emph{L}_X\alpha$ is a $1$-form defined by the equation $$\emph{L}_X(\alpha(Y))=(\emph{L}_X\alpha)(Y)+\alpha(\emph{L}_XY).$$ More generally, if $\alpha$ is a $k$-form  then $\emph{L}_X\alpha$ is again a $k$-form defined by $$(\emph{L}_X\alpha)(Y_1,...,Y_k):=\emph{L}_X(\alpha(Y_1,...,Y_k))-\sum_{j=1}^k\alpha(Y_1,...,Y_{j-1},[X,Y_j],Y_{j+1},...,Y_k).$$
\end{defn}

\begin{rem}
Given a $k$-form $\alpha$, and a vector field $X$, $(\emph{L}_X\alpha)(p)$ is the rate of change of $\alpha$ in the direction of the so-called \emph{flow} of $x$ at $p$.
\end{rem}

\begin{exe}
Given a vector field $X$ and a $k$-form $\alpha$, $\iota_X\alpha$ is a $(k-1)$-form defined by $$(\iota_X\alpha)(Y_1,...,Y_{k-1})=\alpha(X,Y_1,...,Y_{k-1}).$$
Show that
\begin{enumerate}
\item{$\iota_X\circ\iota_X=0$,}
\item{$\emph{L}_X=\dd\circ \iota_X+\iota_X\circ\dd$ (Cartan's magic formula).}
\end{enumerate}
\end{exe}

\begin{rem}
We denote by $\Omega^k(M)$ the space of global $k$-forms on $M$.
\end{rem}

\begin{defn}[Poisson bracket II]
\label{PBII}
Let $(M,\Omega)$ be a symplectic manifold and let $f\in C^\infty(M)$. Then $\dd f\in\Omega^1(M)$. Note that $\dd f$ defines a vector field $X_f$ on $M$ by $(\Omega^\flat)^{-1}(X_f)=\dd f$, i.e. $\Omega(X_f,\enspace)=-\dd f$ or equivalently $\iota_{X_f}\Omega=-\dd f$. Moreover, note that $\dd(\iota_{X_f}\Omega)=-\dd(\dd f)=0$. We can define a Poisson bracket for $f,g\in C^\infty(M)$ by 
$$\{ f,g\}:=\Omega(X_f,X_g)=-\dd f(X_g)=-X_g(f)=X_f(g).$$
\end{defn}

\begin{exe}
Let $M:=T^*\R^n$ together with $\Omega=\dd\alpha$, where $\alpha$ is the canonical $1$-form on $T^*\R^n$. Then 
\begin{equation}
\label{eq:Poisson_bracket}
\{f,g\}=\sum_{j=1}^n\left(\frac{\partial f}{\partial x^j}\frac{\partial g}{\partial p_j}-\frac{\partial g}{\partial x^j}\frac{\partial f}{\partial p_j}\right).
\end{equation}
\end{exe}

\begin{prop}[Properties of the Poisson bracket]
We have that $\{\enspace,\enspace\}$, as defined in \eqref{eq:Poisson_bracket}, is $\R$-bilinear, antisymmetric and satisfies the Jacobi identity.
\end{prop}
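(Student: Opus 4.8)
The plan is to work with the intrinsic characterization of the bracket from Definition \ref{PBII}, $\{f,g\}=\Omega(X_f,X_g)=X_f(g)$, which on $(T^*\R^n,\dd\alpha)$ agrees with the coordinate expression \eqref{eq:Poisson_bracket} by the preceding exercise; the invariant formulation makes the algebraic identities transparent and spares one the index bookkeeping. With this description, $\R$-bilinearity is immediate: $f\mapsto\dd f$ is $\R$-linear, $\dd f\mapsto X_f=(\Omega^\flat)^{-1}(\dd f)$ is $\R$-linear, and $X_f$ acts as an $\R$-linear derivation, so $\{f,g\}=X_f(g)$ is $\R$-linear in each slot (alternatively one reads this straight off \eqref{eq:Poisson_bracket}). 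Antisymmetry is just the alternating property of the $2$-form $\Omega$: $\{f,g\}=\Omega(X_f,X_g)=-\Omega(X_g,X_f)=-\{g,f\}$.

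The substantive part is the Jacobi identity, and I would deduce it from the Jacobi identity for the Lie bracket of vector fields by way of the key intermediate claim $X_{\{f,g\}}=[X_f,X_g]$. This is the step where closedness of $\Omega$ is genuinely needed. First, Cartan's magic formula gives $\emph{L}_{X_f}\Omega=\dd(\iota_{X_f}\Omega)+\iota_{X_f}(\dd\Omega)=\dd(-\dd f)+0=0$; it also gives that $\emph{L}_{X_f}$ commutes with $\dd$. Then, using the standard identity $\iota_{[X,Y]}=\emph{L}_X\circ\iota_Y-\iota_Y\circ\emph{L}_X$ relating interior products, Lie derivatives and the Lie bracket, one computes
\[
\iota_{[X_f,X_g]}\Omega=\emph{L}_{X_f}(\iota_{X_g}\Omega)-\iota_{X_g}(\emph{L}_{X_f}\Omega)=\emph{L}_{X_f}(-\dd g)-0=-\dd\bigl(X_f(g)\bigr)=-\dd\{f,g\}=\iota_{X_{\{f,g\}}}\Omega,
\]
and nondegeneracy of $\Omega$ forces $[X_f,X_g]=X_{\{f,g\}}$.

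Granting this, the Jacobi identity is a one-line computation: expanding everything through $\{u,v\}=X_u(v)$ gives $\{\{f,g\},h\}=X_{\{f,g\}}(h)=[X_f,X_g](h)=X_f(X_g(h))-X_g(X_f(h))$, while $\{f,\{g,h\}\}=X_f(X_g(h))$ and $\{g,\{f,h\}\}=X_g(X_f(h))$, so $\{f,\{g,h\}\}=\{\{f,g\},h\}+\{g,\{f,h\}\}$. The main obstacle is exactly the identity $X_{\{f,g\}}=[X_f,X_g]$: without $\dd\Omega=0$ the bracket need not be a Lie bracket, so the argument must really use the closedness hypothesis (together with Cartan's formula from the earlier exercise). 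I would also fix sign conventions at the outset --- especially the sign in $\iota_{X_f}\Omega=-\dd f$ --- since a different choice propagates through the whole computation, though an overall sign never affects whether the Jacobi identity holds. A purely computational alternative is to substitute \eqref{eq:Poisson_bracket} into all three terms and check that the second-derivative contributions cancel, but that route is longer and less conceptual.
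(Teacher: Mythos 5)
Your proof is correct and follows essentially the same strategy as the paper's: prove the key lemma $X_{\{f,g\}}=[X_f,X_g]$ using $\dd\Omega=0$ (via Cartan's formula to get $\emph{L}_{X_f}\Omega=0$) together with nondegeneracy of $\Omega$, then read off the Jacobi identity as a rearrangement of the Lie-bracket identity for vector fields acting on $h$. The only difference is in how you extract $X_{\{f,g\}}=[X_f,X_g]$ from $\emph{L}_{X_f}\Omega=0$: the paper pairs against an arbitrary vector field $Y$ and expands $(\emph{L}_{X_f}\Omega)(Y,X_g)=0$ via the Leibniz rule for Lie derivatives of forms, whereas you apply the operator identity $\iota_{[X,Y]}=\emph{L}_X\circ\iota_Y-\iota_Y\circ\emph{L}_X$ directly to $\Omega$, which is a slightly tidier but logically equivalent bit of Cartan calculus.
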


\begin{proof}
It is straightforward to show $\R$-bilinearity and antisymmetry. We want to show first that 
\begin{equation}
\label{VF1}
X_{\{f,g\}}=[X_f,X_g].
\end{equation}
To show \eqref{VF1}, we will show that 
\begin{equation}
\label{SF1}
\Omega(Y,X_{\{f,g\}})=\Omega(Y,[X_f,X_g]),
\end{equation}
for all $Y\in \mathrm{Vect}(M)$. As $\Omega$ is nondegenerate, \eqref{VF1} implies \eqref{SF1}. Note that $\emph{L}_{X_f}\Omega=\dd(\iota_{X_f}\Omega)+\iota_{X_f}(\dd\Omega)=0$. Hence, for all $Y\in \mathrm{Vect}(M)$
\[
0=(\emph{L}_{X_f}\Omega)(Y,X_g)=\emph{L}_{X_f}(\Omega(Y,X_g))-\Omega([X_f,Y],X_g)=\Omega(Y,[X_f,X_g]).
\]
Hence we get 
\begin{align*}
\Omega(Y,[X_f,X_g])&=X_f(\Omega(Y,X_g))-\Omega([X_f,Y],X_g)\\
&=X_f(Y(g))-[X_f,Y](g)=X_f(Y(g))-X_f(Y(g))+Y(X_f(g))=Y(X_f(g))\\
&=-Y(X_g(f))=Y(\{f,g\})=\dd(\{f,g\})(Y)=-\dd(\{f,g\})(-Y)\\
&=\Omega(Y,X_{\{f,g\}}).
\end{align*}
Now let $h\in C^\infty(M,\R)$. Using \eqref{VF1}, we get 
\[
X_{\{f,g\}}(h)=[X_f,X_g](h)
\]
and thus $\{\{f,g\},h\}=\{f,\{g,h\}\}-\{g,\{f,h\}\}$, and hence $\{f,\{g,h\}\}=\{\{f,g\},h\}+\{g,\{f,h\}\}$.
\end{proof}

\subsection{Hamiltonian systems}
\begin{defn}[Hamiltonian system]
A \emph{Hamiltonian system} is a triple $(M,\Omega, H)$, where $(M,\Omega)$ is a symplectic manifold and $H\colon M\to\R$ is a smooth function.
\end{defn}

Let $X_H$ be the Hamiltonian vector field associated to $M$. The integral curves of $X_H$ are trajectories of motions. In a local coordinate system, computation of integral curves of $X_H$ boils down to Hamilton's equations. Let $\gamma(t)$ be an integral curve of $X_H$. Then for any $f\in C^\infty(M,\R)$ we have $X_H(f)(\gamma(t))=\frac{\dd}{\dd t}f(\gamma(t))$. This implies 
\[
\frac{\dd}{\dd t}f(\gamma(t))=X_H(f)(\gamma(t))=\{H,f\}(\gamma(t)).
\]
($f$ is conserved along $\gamma$ if and only if $\{H,f\}=0$ along $\gamma$).

\subsection{Short summary}
We want to give a short summary of this section:
\begin{itemize}
\item{The phase space (or state space) of Classical Mechanics leads to the notion of a symplectc manifold.
}
\item{A classical observable is a function on the phase space. A particular choice of an observable corresponds to a \emph{physical system}.
}
\item{Conservation can be expressed using the Poisson bracket.
}
\item{Let $(M,\Omega)$ be a symplectic manifold. Then $(C^\infty(M,\R),\{\enspace,\enspace\})$ is a Lie algebra.
}
\end{itemize}

\subsection{Exercises}

\begin{exe}
Let $M:=\R^n$ and $(x^1,...,x^n)$ be the global coordinates on $\R^n$. Let $\alpha$ be a $k$-form and $\beta$ be an $\ell$-form on $M$. 
Show that $\dd(\alpha\land\beta)=\dd\alpha\land \beta+(-1)^k\alpha\land \beta$.
\end{exe}

\begin{exe}
\label{ex_mnf}
Let $X$ be a vector field on $M$, $\alpha$ a $k$-form and $\beta$ an $\ell$-form on $M$. Show that $\emph{L}_X(\alpha\land\beta)=(\emph{L}_X\alpha)\land \beta+\alpha\land\emph{L}_X\beta$.
\end{exe}

\begin{exe}[Liouville's theorem]
Let $(M,\Omega)$ be a symplectic manifold with $\dim M=2n$. Define $\lambda=\frac{1}{n!}(\underbrace{\Omega\land\dotsm\land\Omega}_{n})$ Let $f\in C^\infty(M)$ and $X_f$ be the Hamiltonian vector field. Use Exercise \ref{ex_mnf} to show that $\emph{L}_{X_f}\lambda=0$. (This statement is called \emph{Liouville's theorem})
\end{exe}

\begin{exe}
Let $(M,\Omega)$ be a symplectic manifold. Show that $\{f,gh\}=\{f,g\}h+g\{h,f\}$ for all $f,g,h\in C^\infty(M)$, where $\{\enspace,\enspace\}$ is the Poisson bracket.
\end{exe}

\begin{exe}
Let $M=\R^2\ni (x,p)$. Let $\alpha=p\dd x$. Compute $\iota_X\alpha$, where $X=f\partial_x+g\partial_p$ and $\iota_{\partial_x}\omega$, $\iota_{\partial_p}\omega$, where $\omega=\dd\alpha$.
\end{exe}

\section{Introduction to Quantum Mechanics}

\subsection{Failure of Classical Mechanics}
We want to look at the lifespan of a \emph{Hydrogen atom}. Consider a positive charge $e^+$ (proton) sitting in the center of a circle with radius $r$ and a negative charge $e^-$ (electron) moving along the circle trajectory with velocity $\vec{v}$ (pointing to the direction tangential to the circle). Thus, we have an acceleration $\vec{a}$ on $e^-$ pointing to the center (perpendicular to $\vec{v}$), which comes from the centripetal force, given by $\vert\vec{a}\vert=\frac{v^2}{r}$. Here $v^2=\vec{v}\cdot\vec{v}$. Moreover we have a potential $V(r)=-\frac{e^2}{r}$ (here $e$ is the absolute value of the charge, i.e. $e=\vert e^+\vert=\vert e^-\vert$). This potential is called \emph{Coulomb's law}. Moreover, let $E$ denote the total energy, i.e. $E=\frac{1}{2}mv^2+V(r)$. Now since the electron is coupled to the electromagnetic field, it produces electromagnetic waves which carries energy away. Hence we get
$\frac{\dd E}{\dd t}=-\frac{e^2\vert \vec{a}\vert^2}{C}$, where $C>0$ is some constant. Now suppose that Newton's second law holds. Then $\vert \vec{F}\vert=m\vert \vec{a}\vert$, hence $\vert \vec{a}\vert=\frac{1}{m}\vert\vec{F}\vert=\frac{e^2}{mr^2}$, and thus $mv^2=\frac{e^2}{r}$. This gives us $E=-\frac{1}{2}\frac{e^2}{r}$ and hence $\frac{\dd E}{\dd t}=\frac{e^2}{2r^2}\frac{\dd r}{\dd t}$. Using $\frac{\dd r}{\dd t}=-\frac{e^4}{C}\frac{1}{r^2}$, we get that $r(t)$ is rapidly decreasing. In fact, it can be shown that $r\to 0$ in a very short time. This shows also that the Hydrogen atom collapses in a short time, which in fact does not coincide with the experiments. 

\vspace{0.3cm}
\emph{{\bf Upshot:} Classical Mechanics does not fully explain the behaviour of atomic particles.}
\vspace{0.3cm}

\subsection{Axioms of Quantum Mechanics}
The axioms of Quantum Mechanics are motivated by the following experimental facts:
\begin{itemize}
\item{Objects are observed to have wave-like and particle-like behaviour (\emph{wave-particle duality}).
}
\item{We can only predict the \emph{probabilities} of an outcome.}
\end{itemize}

\begin{rem}
We have the notion of a \emph{wave function}: A wave function $\psi$ is a \emph{function} of $x\in\R^n$ , which we interpret as describing the possible values of the position of a particle and it \emph{evolves} in time obeying a \emph{wave-like} equation.
\end{rem}

\subsubsection{Digression: complex Hilbert space, self-adjoint operators} We want to give some mathematical tools for the understanding of the quantum theory. 

\begin{defn}[Complex inner product space]
A \emph{complex inner product space} is a pair $(\calH,\langle\enspace,\enspace\rangle)$, where $\calH$ is a complex vector space and the map $\langle\enspace,\enspace\rangle\colon \calH\times\calH\to \mathbb{C}$ is such that for all $\phi,\psi,\phi_1,\phi_2,\psi_1,\psi_2\in\calH$ and $c\in\mathbb{C}$
\begin{enumerate}
\item{$\langle c\phi,\psi\rangle=\bar c\langle \phi,\psi\rangle$, and $\langle\phi,c\psi\rangle=c\langle \phi,\psi\rangle$,
}
\item{$\langle \phi,\psi_1+\psi_2\rangle=\langle\phi,\psi_1\rangle+\langle\phi,\psi_2\rangle$, and $\langle \phi_1+\phi_2,\psi\rangle=\langle\phi_1,\psi\rangle+\langle\phi_1,\psi\rangle$,
}
\item{$\langle\phi,\psi\rangle=\overline{\langle\psi,\phi\rangle}$ (Hermitian)
}
\item{$\langle\phi,\phi\rangle\geq0$ and $\langle\phi,\phi\rangle=0$ if and only $\phi=0$.
}
\end{enumerate}
We call $\langle\enspace,\enspace\rangle$ a \emph{complex inner product}.
\end{defn}
Define $\|\phi\|:=\langle\phi,\phi\rangle^{1/2}$. Let $\{\phi_n\}_n$ be a sequence in $\calH$, we say $\{\phi_n\}_n$ is \emph{Cauchy} if $\|\phi_n-\phi_m\|\to 0$ as $n,m\to \infty$. Moreover, we say $(\calH,\langle\enspace,\enspace\rangle)$ is \emph{complete} if every Cauchy sequence converges in $\calH$, i.e. $\{\phi_n\}_n$ is Cauchy implies there is some $\phi\in \calH$ such that $\|\phi_n-\phi\|\to 0$ as $n\to\infty$.

\begin{defn}[Complex Hilbert space]
A \emph{complex Hilbert space} is a complete complex inner product space.
\end{defn}

\begin{ex}
Take $\calH:=\mathbb{C}^n$ with inner product $\langle z,w\rangle:=\sum_{j=1}^n\bar z_jw_j$.
\end{ex}

\begin{ex}
Take $\calH:=L^2(\R^n,\dd x)$ with inner product 
\[
\langle f,g\rangle:=\int_{\R^n}\overline{f(x)}g(x)\dd x.
\]
\end{ex}

\begin{ex}
Let $(X,\calB,\mu)$ be a measure space, i.e. $X$ is a set, $\calB$ a $\sigma$-algebra of subsets of $X$, and $\mu$ a measure. Then $L^2(X,\mu)$ is a Hilbert space, where 
\[
\langle f,g\rangle:=\int_X\overline{f(x)}g(x)\dd\mu(x).
\]
\end{ex}

\begin{defn}[Operator]
Let $\calH$ be a Hilbert space. An \emph{operator} on $\calH$ is a pair $(A,\mathrm{Dom}(A))$, where $\mathrm{Dom}(A)$ is a dense subspace of $\calH$, called the \emph{domain} of $A$, and $A\colon \mathrm{Dom}(A)\to \calH$ is linear. $A$ is \emph{bounded} if there is some $c>0$ such that for all $\phi\in \mathrm{Dom}(A)$, $\|A\phi\|\leq c\|\phi\|$.
\end{defn}

\begin{rem}
If $A$ is bounded, then the denseness of $\mathrm{Dom}(A)$ implies that it can be extended to a linear map $A\colon \calH\to \calH$. Moreover $\| A\phi\|\leq c\|\phi\|$ for all $\phi\in\calH$.
\end{rem}

\begin{rem}
Given an operator $(A,\mathrm{Dom}(A))$, there is an operator 
\[
(A^*,\mathrm{Dom}(A^*))
\]
such that $\langle  A\phi,\psi\rangle=\langle\phi,A^*\psi\rangle$ for all $\phi\in \mathrm{Dom}(A)$ and for all $\psi\in \mathrm{Dom}(A^*)$.
\end{rem}

\begin{defn}[Adjoint]
The operator $A^*$ is called the \emph{adjoint} of $A$
\end{defn}

\begin{defn}[Symmetric]
An operator $A$ is called \emph{symmetric} if for all $\phi,\psi\in \mathrm{Dom}(A)$
$$\langle A\phi,\psi\rangle=\langle \phi,A\psi\rangle.$$
\end{defn}

\begin{defn}[Self-adjoint]
An operator $A$ is called \emph{self-adjoint} if $\mathrm{Dom}(A)=\mathrm{Dom}(A^*)$ and $A^*\phi=A\phi$ for all $\phi\in \mathrm{Dom}(A)$.
\end{defn}

\begin{defn}[Resolvent]
Let $A$ be an operator on $\calH$ and let $\lambda\in\mathbb{C}$. We say that $\lambda$ is in the \emph{resolvent set} $\rho(A)$ if 
$$(A-\lambda I)\colon \mathrm{Dom}(A)\to \calH$$
is a bijection and $(A-\lambda I)^{-1}$ is bounded. Here $I$ is the \emph{identity operator} on $\calH$.
\end{defn}

\begin{defn}[Specturm]
The \emph{spectrum} $\sigma(A)$ of an operator $A$ is defined by 
\[
\sigma(A):=\mathbb{C}\setminus\rho(A).
\]
\end{defn}

\begin{ex}[Eigenvalue]
Let $A$ be an operator on $\calH$ and let $\lambda\in\mathbb{C}$. Assume that there is some $\psi\not=0$ in $\calH$ such that $A\psi=\lambda\psi$. Then $\lambda\in\sigma(A)$, since $(A-\lambda I)^{-1}$ does not exist. Such a $\lambda$ is called an \emph{eigenvalue} of $A$.
\end{ex}

\subsubsection{Axioms} We can now formulate the \emph{axioms} of Quantum Mechanics. 
\begin{enumerate}[(QM1)]
\item{To every quantum system, there is an associated infinite-dimensional separable complex Hilbert space $\calH$, called the \emph{space of states}. The \emph{pure} state of a system is represented by a unit vector in $\calH$. Let $\phi_1$ and $\phi_2$ be two unit vectors in $\calH$ such that $\phi_1=c\phi_2$ for some $c\in\mathbb{C}$. Then $\phi_1$ and $\phi_2$ represent the same physical state. Consider the set $\calS:=\{\psi\in\calH\mid\|\psi\|=1\}$. Given $\phi,\psi\in\calS$, we have $$\vert\langle\phi,\psi\rangle\vert^2\leq \|\phi\|^2\|\psi\|^2=1.$$
Here $\vert\langle\phi,\psi\rangle\vert^2$ can be interpreted as the probability of a physical system at $\phi$ given the physical system at $\psi$.
}
\item{An observable of a quantum system with the space of states given by $\calH$ is a self-adjoint operator on $\calH$. We define 
\[
\calA:=\{\text{self-adjoint operators on $\calH$}\}.
\]
}
\item{The process of measurement corresponds to the map 
\begin{align*}
\calA\times\calS&\to P(\R):=\{\text{probability measures on $\R$}\}\\
(A,\psi)&\mapsto \mu_A^\psi.
\end{align*}
Given $E\subseteq\R$ measurable (more precisely Borel measurable), $\mu_A^\psi(E)$ is interpreted as the probability of the \emph{measurement} of $A$ in the \emph{state} $\psi$ that is in $E$. Moreover, the \emph{expectation} of $A\in\calA$ in the state $\psi\in\calS$ is given by 
$$\langle A\rangle_\psi:=\int_\R\lambda \dd\mu_A^\psi(\lambda).$$
}
\item{The dynamics of a quantum system is governed by the \emph{Schr\"odinger equation}, i.e. there is a distinguished quantum observable $\widehat{H}$, such that the \emph{time evolution} $\psi(t)$ with $\psi(0)=\psi$ satisfies
$$\I\hbar\frac{\dd\psi(t)}{\dd t}=\widehat{H}\psi(t).$$
\begin{rem}
In the so-called \emph{Heisenberg picture} of Quantum Mechanics, the dynamics is governed by the equation $$\I\hbar \frac{\dd A(t)}{\dd t}=-[\widehat{H},A(t)],$$
where $A(0)=A$ and $[A,B]=AB-BA$ is the \emph{commutator} of operators.
\end{rem}
}
\end{enumerate}

\begin{ex}[Free particle in position space]
Consider a free particle moving in $\R^n$. Recall that the phase space is given by $M=T^*\R^n$ and the energy is $E(x,p)=\frac{1}{2m}p^2$. Then $\calH=L^2(\R^n,\dd x)$ (space of wave functions) and $\widehat{x}^j(f(x))=x^jf(x)$. Moreover, $\widehat{p}_j(f(x))=\I\hbar\frac{\partial f}{\partial x^j}$, and hence 
$$\widehat{H}=\sum_{j=1}^n\frac{1}{2m}\widehat{p}_j^2=-\frac{\hbar^2}{2m}\sum_{j=1}^n\frac{\partial^2}{\partial (x^j)^2}.$$
\end{ex}

\begin{ex}[Free particle in momentum space]
Consider a free particle moving in $\R^n$. Then $\calH=L^2(\R^n,\dd p)$ (space of wave functions) and $\widehat{x}^j(f(x))=-\I\hbar\frac{\partial f}{\partial p_j}$. Moreover, $\widehat{p}_j(f(p))=p_jf(p)$, and hence 
$$\widehat{H}=\sum_{j=1}^n\frac{1}{2m}\widehat{p}_j^2$$
\end{ex}

\begin{rem}
Starting from a classical mechanical system, we want to construct a quantum mechanical system. It turns out that one can construct many quantum mechanical systems from the same classical mechanical system as suggested by the examples above. We would like to understand how to \emph{compare} them.
\end{rem}

\section{Quantization}
We want to be able to pass from a classical to a corresponding quantum system. This is encoded in a \emph{Quantization map} $\mathscr{Q}$, i.e.
\[
\text{Classical Mechanics}\xrightarrow{\mathscr{Q}}\text{Quantum Mechanics}
\]
The classical state space is given by a symplectic manifold $(M,\Omega)$, whereas the quantum state space is given by a Hilbert space $\calH$, hence $\mathscr{Q}((M,\Omega))$ has to be a Hilbert space. The classical observables are given by smooth functions $f\in C^\infty(M,\R)$, whereas the quantum observables are given by self-adjoint operators, thus $\mathscr{Q}(f)$ will be a self-adjoint operator. Classical time evolution is given, for some Hamiltonian function $H\in C^\infty(M,\R)$, by the equation $\frac{\dd f}{\dd t}=\{H,f\}$ along the flow of $X_H$, whereas on the quantum time evolution is given, for a Hamiltonian self-adjoint operator $\widehat{H}$ on $\calH$, by the equation $\I\hbar\frac{\dd A(t)}{\dd t}=-[\widehat{H},A(t)]$ or equivalently $\frac{\dd A(t)}{\dd t}=\frac{\I}{\hbar}[\widehat{H},A(t)]$. This shows that the image of the Poisson bracket under $\mathscr{Q}$ will be given by the commutator $\frac{\I}{\hbar}[\enspace,\enspace]$. 

\begin{defn}[Quantization]
\emph{Quantization} of a classical mechanical system \emph{roughly} means the construction of a quantum mechanical system, starting from a classical mechanical system. Ideally, we want a procedure $\mathscr{Q}$ that assigns to a symplectic manifold $(M,\Omega)$ a separable Hilbert space, and to a smooth function $f\in C^\infty(M,\R)$ a self-adjoint operator $\mathscr{Q}(f)$ such that 
\begin{enumerate}[$(i)$]
\item{$\mathscr{Q}$ is linear in $f$,}
\item{$\mathscr{Q}(1)=\id_\calH$,}
\item{$\mathscr{Q}(\{f,g\})=\frac{\I}{\hbar}[\mathscr{Q}(f),\mathscr{Q}(g)]$.}
\end{enumerate}
Moreover, we want $\calH$ to be \emph{minimal}.
\end{defn}

\begin{rem}
An ideal quantization procedure does not exist (see \emph{Groenewold's thoerem} \cite{Groenewold1946}). In practice, we do not look for an ideal $\mathscr{Q}$.
\end{rem}

\subsection{Quantization of $T^*\R^n$ and ordering ambiguity}
Consider the quantum state space $\calH=L^2(\R^n,\dd x)$ (given in position space representation). Recall that we have position and momentum operators $\widehat{x}^j$ and $\widehat{p}_k$. Let $f\in C^\infty(T^*\R^n)$, such that $f(x,p)=x^jp_k$. Note that in Classical Mechanics, $x^jp_k=p_kx^j$. Define $\mathscr{Q}(x^j)=\widehat{x}^j$ and $\mathscr{Q}(p_k)=\widehat{p}_k$. Now, there are many choices to define $\mathscr{Q}(f)$. For example, we could take $\widehat{x}^j\widehat{p}_k$ or $\widehat{p}_k\widehat{x}^j$ or $\frac{\widehat{x}^j\widehat{p}_k+\widehat{p}_k\widehat{x}^j}{2}$. All these possibilities are different. More generally, if $f$ is a comlicated function, it is not clear how to define $\mathscr{Q}(f)$. This is called \emph{ordering ambiguity}. There are practical solutions to this problem such as \emph{Wick-ordered quantization} or \emph{Weyl quantization}, which depends on certain choices.

\subsection{Geometric Quantization}
Geometric quantization is roughly a quantization procedure that uses the data of symplectic geometry of a classical mechanical system and constructs a quantum mechanical system. There are two steps into the process: 
\begin{enumerate}[(Step1)]
\item{\emph{Prequantization}: construct a Hilbert space (called \emph{prequantum Hilbert space}) and a prequantized observable $\mathscr{Q}_{pre}(f)$, for $f\in C^\infty(M)$.
}
\item{\emph{Correction}: Get the quantum Hilbert space $\calH$ and the quantum observable $\mathscr{Q}(f)$ for $f\in C^\infty(M)$.
}
\end{enumerate}

\subsubsection{Prequantization of $T^*\R^n$}
We will construct a Hilbert space $\calH$ and an operator $\mathscr{Q}_{pre}(f)$ for $f\in C^\infty(T^*\R^n)$ such that $\mathscr{Q}_{pre}(1)=\id$ and $\mathscr{Q}_{pre}(\{f,g\})=\frac{\I}{\hbar}[\mathscr{Q}_{pre}(f),\mathscr{Q}_{pre}(g)]$. First, we can recall that $\{ x^k,p_j\}=\delta_{jk}\cdot 1$, and thus $$[\mathscr{Q}_{pre}(x^k),\mathscr{Q}_{pre}(p_j)]=-\I\hbar I\delta_{jk},$$ 
where $I$ denotes the identity operator on the prequantum Hilbert space. In particular, we have
\begin{equation}
\label{prequantum}
[\mathscr{Q}_{pre}(x^k),\mathscr{Q}_{pre}(p_k)]=-\I\hbar I.
\end{equation}
If $\calH$ is a Hilbert space such that $\mathscr{Q}_{pre}(x^k)$ and $\mathscr{Q}_{pre}(p_k)$ are two operators on $\calH$ such that \eqref{prequantum} holds, then $\calH$ must be infinite-dimensional. A natural choice for $\calH$ will be $L^2(\R^{2n})$ (prequantum Hilbert space). For the construction of the operators, we start with a first attempt by setting $\mathscr{Q}_{pre}(f):=-\I\hbar X_f$. Then 
\[
\frac{\I}{\hbar}[\mathscr{Q}_{pre}(f),\mathscr{Q}_{pre}(g)]=-\I\hbar[X_f,X_g]=-\I\hbar X_{\{f,g\}}=\mathscr{Q}_{pre}(\{f,g\}).
\]
The problem is that $\mathscr{Q}_{pre}(1)=0$, since $X_{f=1}=0$. The second attempt is to set $\mathscr{Q}_{pre}(f)=-\I\hbar X_f+f$. Then $\mathscr{Q}_{pre}(1)=1$, but $\frac{\I}{\hbar}[\mathscr{Q}_{pre}(f),\mathscr{Q}_{pre}(g)]\not=\mathscr{Q}_{pre}(\{f,g\})$. For the third attempt, let $\theta$ be a $1$-form on $T^*\R^n$ such that its exterior derivative is given by the symplectic form $\Omega$ on $T^*\R^n$, i.e. $\dd\theta=\Omega$. Define a covariant derivative (connection) along $X\in \mathrm{Vect}(T^*\R^n)$ by $$\nabla_X^\theta:=X-\frac{\I}{\hbar}\theta(X).$$
The idea is then to use $\nabla_{X_f}^\theta$ instead of $X_f$. 
\begin{lem}
We have 
\begin{enumerate}
\item{$[\nabla_X^\theta,f]=X(f)$,}
\item{$[X,f]=X(f)$,}
\item{$[\nabla_X^\theta,\nabla_Y^\theta]=\nabla^\theta_{[X,Y]}-\frac{\I}{\hbar}\Omega(X,Y)$, where $\Omega$ is the standard symplectic form on $T^*\R^n$.
}
\end{enumerate}
\end{lem}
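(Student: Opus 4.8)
The plan is to treat every object here as an operator acting on $C^\infty(T^*\R^n)$ (equivalently, on sections of the trivial prequantum line bundle), keeping careful track of which terms are genuine first-order differential operators and which are multiplication operators. Write $M_h$ for the operator ``multiply by the function $h$''. The two elementary facts I will lean on are that multiplication operators commute, $M_hM_{h'}=M_{hh'}=M_{h'}M_h$ (functions commute pointwise), and that the Leibniz rule defining a vector field says precisely $X\circ M_h = M_{X(h)} + M_h\circ X$.

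First I would dispatch (2): for $g\in C^\infty(T^*\R^n)$ one has $[X,M_f]g = X(fg)-fX(g) = X(f)\,g$ by the Leibniz rule, so $[X,f]=M_{X(f)}=X(f)$. Then (1) is immediate, since $\nabla_X^\theta = X - \frac{\I}{\hbar}M_{\theta(X)}$ differs from $X$ only by a multiplication operator and multiplication operators commute with $M_f$; hence $[\nabla_X^\theta,f]=[X,f]=X(f)$ using (2).

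The substantive computation is (3). I would expand $\nabla_X^\theta\nabla_Y^\theta$ using $X\circ M_{\theta(Y)} = M_{X(\theta(Y))} + M_{\theta(Y)}\circ X$ (and the analogue with $X,Y$ swapped), then subtract $\nabla_Y^\theta\nabla_X^\theta$. The purely second-order part gives $[X,Y]$; the terms of the form $M_{\theta(\cdot)}\circ(\cdot)$ cancel in pairs; and the $\bigl(\frac{\I}{\hbar}\bigr)^2$ terms cancel because $M_{\theta(X)}M_{\theta(Y)}=M_{\theta(Y)}M_{\theta(X)}$. What survives is
\[
[\nabla_X^\theta,\nabla_Y^\theta] = [X,Y] - \frac{\I}{\hbar}\bigl(X(\theta(Y)) - Y(\theta(X))\bigr).
\]
To finish I would invoke the formula for the exterior derivative of a $1$-form (the $k=1$ case of the exterior derivative formula stated earlier), namely $\Omega(X,Y)=\dd\theta(X,Y) = X(\theta(Y)) - Y(\theta(X)) - \theta([X,Y])$, which rearranges to $X(\theta(Y)) - Y(\theta(X)) = \theta([X,Y]) + \Omega(X,Y)$. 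Substituting this in gives $[\nabla_X^\theta,\nabla_Y^\theta] = [X,Y] - \frac{\I}{\hbar}\theta([X,Y]) - \frac{\I}{\hbar}\Omega(X,Y) = \nabla^\theta_{[X,Y]} - \frac{\I}{\hbar}\Omega(X,Y)$, as claimed.

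The only real obstacle is bookkeeping in part (3): one must resist conflating $X\circ M_{\theta(Y)}$ with $M_{\theta(Y)}\circ X$, and must track the sign and position of every multiplication term so that the cancellations are manifest. Once the operator expansion is written out honestly, the identity is forced by $\dd\theta=\Omega$ and needs no further input.
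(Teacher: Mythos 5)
Your proof is correct and follows essentially the same route as the paper's: expand the commutator $[\nabla_X^\theta,\nabla_Y^\theta]$, use the fact that the commutator of a vector field with a multiplication operator is multiplication by the derivative (your parts (1) and (2), left as exercises in the paper), collect terms, and recognize $X(\theta(Y))-Y(\theta(X))-\theta([X,Y])$ as $\dd\theta(X,Y)=\Omega(X,Y)$. Your version is merely more explicit about the operator bookkeeping via the $M_h$ notation, and it actually supplies the argument for parts (1) and (2) rather than deferring them.
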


\begin{proof}
We leave $(1)$ and $(2)$ as an exercise. For $(3)$, note that 
\begin{align*}
[\nabla^\theta_X,\nabla^\theta_Y]&=\left[X-\frac{\I}{\hbar}\theta(X),Y-\frac{\I}{\hbar}\theta(Y)\right]=[X,Y]-\frac{\I}{\hbar}[X,\theta(Y)]+\frac{\I}{\hbar}[Y,\theta(X)]\\
&=[X,Y]-\frac{\I}{\hbar}(X\theta(X)-Y\theta(X))\\
&=[X,Y]-\frac{\I}{\hbar}(X\theta(Y)-Y\theta(X)-\theta([X,Y])+\theta([X,Y]))\\
&=[X,Y]-\frac{\I}{\hbar}\theta([X,Y])-\frac{\I}{\hbar}(X\theta(Y)-Y\theta(X)-\theta([X,Y]))\\
&=\nabla^\theta_{[X,Y]}-\frac{\I}{\hbar}\dd\theta(X,Y)\\
&=\nabla^\theta_{[X,Y]}-\frac{\I}{\hbar}\Omega(X,Y).
\end{align*}
\end{proof}

We define the prequantum map to be given by  
\begin{equation}
\label{prequantum2}
\mathscr{Q}_{pre}(f):=-\I\hbar\nabla_{X_f}^\theta+f. 
\end{equation}
Then $\mathscr{Q}_{pre}(1)=1$. Moreover, we get the following proposition:

\begin{prop}
Let $\mathscr{Q}_{pre}$ be defined as in \eqref{prequantum2}. Then for all $f,g\in C^\infty(T^*\R^n)$ we have
\[
\frac{\I}{\hbar}[\mathscr{Q}_{pre}(f),\mathscr{Q}_{pre}(g)]=\mathscr{Q}_{pre}(\{f,g\}).
\]
\end{prop}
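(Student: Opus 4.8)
The plan is to compute the commutator $[\mathscr{Q}_{pre}(f),\mathscr{Q}_{pre}(g)]$ directly, expanding it bilinearly and then feeding in the three identities of the preceding Lemma, the relation $X_{\{f,g\}}=[X_f,X_g]$ from \eqref{VF1}, and the identity $\Omega(X_f,X_g)=\{f,g\}$ from Definition \ref{PBII}. Throughout, $f$ and $g$ are regarded as multiplication operators on the prequantum Hilbert space, so that $[f,g]=0$.

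First I would write $\mathscr{Q}_{pre}(f)=-\I\hbar\nabla^\theta_{X_f}+f$ and $\mathscr{Q}_{pre}(g)=-\I\hbar\nabla^\theta_{X_g}+g$ as in \eqref{prequantum2} and expand
\[
[\mathscr{Q}_{pre}(f),\mathscr{Q}_{pre}(g)]=(-\I\hbar)^2[\nabla^\theta_{X_f},\nabla^\theta_{X_g}]-\I\hbar[\nabla^\theta_{X_f},g]-\I\hbar[f,\nabla^\theta_{X_g}]+[f,g].
\]
The last term vanishes. For the two mixed terms, part $(1)$ of the Lemma gives $[\nabla^\theta_{X_f},g]=X_f(g)$ and $[\nabla^\theta_{X_g},f]=X_g(f)$, hence $[f,\nabla^\theta_{X_g}]=-X_g(f)$. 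For the leading term, part $(3)$ of the Lemma together with $[X_f,X_g]=X_{\{f,g\}}$ yields
\[
[\nabla^\theta_{X_f},\nabla^\theta_{X_g}]=\nabla^\theta_{X_{\{f,g\}}}-\tfrac{\I}{\hbar}\Omega(X_f,X_g)=\nabla^\theta_{X_{\{f,g\}}}-\tfrac{\I}{\hbar}\{f,g\}.
\]

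Then I would collect everything, using $X_f(g)=\{f,g\}$ and $X_g(f)=\{g,f\}=-\{f,g\}$ (both from Definition \ref{PBII}). Since $(-\I\hbar)^2=-\hbar^2$, the curvature term contributes $+\I\hbar\{f,g\}$, while the two mixed terms contribute $-\I\hbar\{f,g\}$ each; these three $\{f,g\}$-pieces sum to $-\I\hbar\{f,g\}$, so
\[
[\mathscr{Q}_{pre}(f),\mathscr{Q}_{pre}(g)]=-\hbar^2\nabla^\theta_{X_{\{f,g\}}}-\I\hbar\{f,g\}.
\]
Multiplying by $\tfrac{\I}{\hbar}$ gives $-\I\hbar\nabla^\theta_{X_{\{f,g\}}}+\{f,g\}=\mathscr{Q}_{pre}(\{f,g\})$, which is the claimed identity.

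The only real subtlety — and the whole reason the ``third attempt'' succeeds where the first two failed — is the bookkeeping of signs and powers of $\I\hbar$ in this cancellation: the anomalous curvature term $-\tfrac{\I}{\hbar}\Omega(X_f,X_g)$ from Lemma $(3)$ must combine with the two first-order mixed terms so as to leave exactly $-\I\hbar\{f,g\}$, which $\tfrac{\I}{\hbar}$ then converts into precisely the $+\{f,g\}$ that $\mathscr{Q}_{pre}(\{f,g\})$ requires. I would therefore double-check the computation against the convention $\iota_{X_f}\Omega=-\dd f$, i.e.\ $\{f,g\}=X_f(g)=-X_g(f)$, since getting that sign right is exactly what makes the identity hold on the nose rather than up to a nonzero constant.
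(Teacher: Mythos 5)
Your proof is correct and follows essentially the same route as the paper's: expand the commutator bilinearly, invoke parts (1) and (3) of the preceding Lemma together with $X_{\{f,g\}}=[X_f,X_g]$ and $\Omega(X_f,X_g)=\{f,g\}$, and track signs. The only cosmetic difference is that you factor out $\tfrac{\I}{\hbar}$ at the end rather than carrying it through the chain of equalities as the paper does.
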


\begin{proof}
Indeed, we have 
\begin{align*}
\frac{\I}{\hbar}[\mathscr{Q}_{pre}(f),\mathscr{Q}_{pre}(g)]&=\frac{\I}{\hbar}[-\I\hbar\nabla^\theta_{X_f}+f,-\I\hbar \nabla^\theta_{X_g}+g]\\
&=\frac{\I}{\hbar}\Big((-\I\hbar)^2[\nabla^\theta_{X_f},\nabla^\theta_{X_g}]-\I\hbar[\nabla^\theta_{X_f},g]+\I\hbar[\nabla^\theta_{X_g},f]\Big)\\
&=\frac{\I}{\hbar}\Big((-\I\hbar)^2\nabla^\theta_{[X_f,X_g]}-(-\I\hbar)^2\frac{\I}{\hbar}\Omega(X_f,X_g)-2\I\hbar\{f,g\}\Big)\\
&=\frac{\I}{\hbar}\Big((-\I\hbar)^2\nabla^\theta_{[X_f,X_g]}-\I\hbar\{f,g\}\Big)=\mathscr{Q}_{pre}(\{f,g\})
\end{align*}
\end{proof}

\subsubsection{Prequantization on a symplectic manifold}
The goal is to generalize the constructions before to any symplectic manifold $(M,\Omega)$. We need to generalize $C^\infty(T^*\R^n)$ and in particular $L^2(T^*\R^n)$. Moreover, we need to generalize the covariant derivative $\nabla^\theta_X$. 

\begin{defn}[Complex line bundle]
A \emph{complex line bundle} $L\xrightarrow{\pi}M$ is a complex vector bundle of rank $1$, i.e. for all $x\in M$, we have $\dim L_x=1$. 
\end{defn}

\begin{ex}[trivial bundle]
Let $L:=M\times\mathbb{C}\xrightarrow{\pi}M$ be the trivial line bundle over $M$. Note that in this example, we define a section $s\colon M\to M\times \mathbb{C}$ by $s(x)=(x,1)$. Moreover, if $s'$ is any other section, then $s'(x)=f(x)\cdot s(x)$. 
\end{ex}

\begin{defn}[Nowhere vanishing section]
Let $L\xrightarrow{\pi}M$ be a line bundle over $M$. A section $s\colon M\to L$ is called \emph{nowhere vanishing} if $s(x)\in L_x\setminus\{0\}$ for all $x\in M$ (recall $L_x:=\pi^{-1}(\{x\})$).
\end{defn}

\begin{lem}
If $L\xrightarrow{\pi} M$, a complex line bundle over $M$, has a nowhere vanishing section, then $L$ is isomorphic to the trivial line bundle $M\times\mathbb{C}\xrightarrow{\pi}M$, i.e. there is a diffeomorphism $\Phi\colon L\to M\times\mathbb{C}$ such that the diagram 
\[
\begin{tikzcd}
L\arrow[d,"\pi"]\arrow[r,"\Phi"]&M\times \mathbb{C}\arrow[dl,"\mathrm{pr}_1"]\\
M&
\end{tikzcd}
\]
commutes. Moreover, for all $x\in M$, we have $\Phi\big|_{L_x}\colon L_x\to \{x\}\times\mathbb{C}$ is a vector space isomorphism.
\end{lem}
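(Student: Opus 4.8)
The plan is to use the nowhere vanishing section $s$ as a global frame and to read off the fiber coordinate with respect to it. Since each fiber $L_x$ is one\ndash dimensional and $s(x)\in L_x\setminus\{0\}$, the vector $s(x)$ is a basis of $L_x$, so every $v\in L_x$ can be written uniquely as $v=\lambda\, s(x)$ for some $\lambda\in\mathbb{C}$. I would then define
\[
\Phi\colon L\to M\times\mathbb{C},\qquad \Phi(v):=\bigl(\pi(v),\lambda\bigr)\ \text{ where }\ v=\lambda\, s(\pi(v)),
\]
together with its candidate inverse
\[
\Psi\colon M\times\mathbb{C}\to L,\qquad \Psi(x,\lambda):=\lambda\, s(x).
\]
By construction $\mathrm{pr}_1\circ\Phi=\pi$, and $\Phi\big|_{L_x}$ sends $\lambda\, s(x)\mapsto(x,\lambda)$, which is $\mathbb{C}$\ndash linear and bijective precisely because $s(x)\neq 0$; hence, once $\Phi$ is shown to be a diffeomorphism, all the asserted properties follow at once.

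The steps, in order, are: (i) verify that $\Phi$ and $\Psi$ are well defined — the uniqueness of $\lambda$ is exactly the rank\ndash $1$ statement — and that they are mutually inverse, which is immediate; (ii) verify that $\Psi$ is smooth, which follows because $s$ is a smooth section and scalar multiplication $\mathbb{C}\times L\to L$ is smooth, so $(x,\lambda)\mapsto\lambda\, s(x)$ is a composition of smooth maps; (iii) verify that $\Phi$ is smooth, which is the only point requiring a little care.

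For step (iii) I would argue locally. Fix $x_0\in M$ and choose, from the definition of a vector bundle, a local trivialization $\varphi\colon\pi^{-1}(U)\xrightarrow{\ \sim\ }U\times\mathbb{C}$ around $x_0$. In this trivialization the section $s$ becomes $x\mapsto(x,\sigma(x))$ for a smooth function $\sigma\colon U\to\mathbb{C}$, and $\sigma$ is nowhere zero on $U$ because $s$ is nowhere vanishing and $\varphi\big|_{L_x}$ is a linear isomorphism. If $v\in\pi^{-1}(U)$ has $\varphi(v)=(x,z)$, then $v=\lambda\, s(x)$ forces $z=\lambda\,\sigma(x)$, i.e.\ $\lambda=z/\sigma(x)$; therefore on $\pi^{-1}(U)$ we have $\Phi=\bigl(\mathrm{pr}_1\circ\varphi,\ z/\sigma(\mathrm{pr}_1\circ\varphi)\bigr)$, which is smooth since $\sigma$ never vanishes. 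As $x_0$ was arbitrary, $\Phi$ is smooth, hence a diffeomorphism, and fiberwise a linear isomorphism as already noted.

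The main obstacle, though a mild one, is precisely step (iii): one must resist the temptation to declare smoothness of $\Phi$ obvious and instead pass to local trivializations, where the real content reduces to the fact that $1/\sigma$ is smooth because $\sigma$ is a nowhere\ndash vanishing smooth function. Everything else is bookkeeping with the definitions of section, trivialization, and bundle isomorphism.
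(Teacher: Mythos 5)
Your proof is correct and follows exactly the approach the paper hints at: use the nowhere vanishing section $s$ to read off the fiber coordinate $f_s(v)=\lambda$ (where $v=\lambda s(\pi(v))$) and set $\Phi=(\pi,f_s)$. The local-trivialization argument for smoothness of $\Phi$, via $\lambda=z/\sigma(x)$ with $\sigma$ nowhere vanishing, is the right way to fill in the one nontrivial step the paper leaves as an exercise.
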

\begin{proof}
Exercise. \emph{Hint: Show that if $s$ is a nowhere vanishing section, then it defines a map $f_s\colon L\to \mathbb{C}$ such that $f_s\colon \pi^{-1}(\{x\})\xrightarrow{\sim} \mathbb{C}$ is an isomorphism}.
\end{proof}

\begin{defn}[Trivializable]
A line bundle is called \emph{trivializable} if it is isomorphic to $M\times\mathbb{C}$.
\end{defn}

\begin{rem}
A line bundle is trivializable if and only if it has a nowhere vanishing section.
\end{rem}

\begin{exe}
Let $L\xrightarrow{\pi}M$ be a trivializable line bundle and $s\colon M\to L$ be a nowhere vanishing section. Using $s$, construct a $C^\infty(M)$-linear map $\alpha_s\colon \Gamma(M,L)\to C^\infty(M)$, which is a bijection, i.e. $\alpha_s$ is a $C^\infty(M)$-module isomorphism.
\end{exe}

\begin{defn}[Hermitian metric]
A \emph{Hermitian metric} $h$ on a complex line line bundle $L\xrightarrow{\pi}M$ is a \emph{smooth} family $(h_x)_{x\in M}$, where each $h_x$ is a Hermitian form on $L_x$, which is positive-definite, i.e. we have maps $h_x\colon L_x\times L_x\to \mathbb{C}$, such that $h_x$ is sesquilinear, Hermitian, and positive-definite.
\end{defn}

\subsubsection{Connection on a line bundle}

\begin{defn}[Connection]
A \emph{connection} $\nabla$ on a line bundle $L\xrightarrow{\pi}M$ is a map 
\begin{align*}
\nabla\colon \Gamma(M,L)\times \Gamma(M,TM)&\to \Gamma(M,L)\\
(s,X)&\mapsto \nabla_Xs
\end{align*}
such that
\begin{enumerate}[$(i)$]
\item{for all $s\in \Gamma(M,L)$, $X\mapsto \nabla_Xs$ is $C^\infty(M)$-linear,}
\item{for all $X\in \Gamma(M,TM)$, $s\mapsto \nabla_Xs$ is $\mathbb{C}$-linear,}
\item{for all $f\in C^\infty(M)$, for all $X\in\Gamma(M,TM)$, and for all $s\in\Gamma(M,L)$
$$\nabla_X(fs)=X(f)s+f\nabla_Xs.$$
}
\end{enumerate}
\end{defn}

\begin{ex}[Trivial connection]
The \emph{trivial connection} on the trivial line bundle $L=M\times\mathbb{C}\xrightarrow{\pi}M$ is given by the map 
\begin{align*}
\nabla^{triv}\colon \Gamma(M,L)\times \Gamma(M,TM)&\to \Gamma(M,L)\\
(f,X)&\mapsto \nabla^{triv}_Xf:=X(f).
\end{align*}
Recall here that $\Gamma(M,L)\cong C^\infty(M)$.
\end{ex}

\begin{ex}
Let $L:=M\times \mathbb{C}\xrightarrow{\pi}M$ be the trivial line bundle over $M$ and $\theta\in\Omega^1(M)$. Define 
$$\nabla_X^\theta f:=X(f)-\frac{\I}{\hbar}\theta(X)f,$$
where $f\in C^\infty(M)$ and $X\in\Gamma(M,TM)$. Then we can check that $\nabla^\theta_X$ is indeed a connection.
\end{ex}

\begin{lem}
Let $\nabla$ be a connection on a line bundle $L\xrightarrow{\pi}M$. Let $s$ be a nowhere vanishing section of $L$. Then there is a $1$-form $\theta^s$ such that 
$$\nabla_X\tilde s=\nabla_X^{\theta^s}\tilde{s}=\left( X\left(\frac{\tilde{s}}{s}\right)-\frac{\I}{\hbar}\theta^s(X)\frac{\tilde{s}}{s}\right)s,$$
for all $\tilde{s}\in\Gamma(M,L)$ and $X\in \Gamma(M,TM)$.
\end{lem}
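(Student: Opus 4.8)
The plan is to use the nowhere vanishing section $s$ to trivialize $L$ and then reduce the whole statement to the Leibniz rule for $\nabla$ together with axiom $(i)$ for a connection.

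\emph{Step 1: the section $s$ trivializes everything.} Since $s$ is nowhere vanishing, $s(x)$ is a basis of the one-dimensional fibre $L_x$ for every $x\in M$, so any $\tilde s\in\Gamma(M,L)$ is of the form $\tilde s=f\,s$ for a unique function $f$ on $M$; this $f$ is exactly what the statement denotes $\tilde s/s$. Smoothness of $f$ follows from the preceding lemma, which, via $s$, identifies $L$ with the trivial bundle $M\times\mathbb{C}$ and carries $\tilde s$ to a smooth $\mathbb{C}$-valued function; equivalently one checks it in a local trivialization, where $s$ becomes a nowhere-zero smooth function and $f$ is a quotient of smooth functions with nonvanishing denominator.

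\emph{Step 2: definition of $\theta^s$.} Apply the same observation to $\nabla_X s\in\Gamma(M,L)$: there is a unique $c(X)\in C^\infty(M)$ with $\nabla_X s=c(X)\,s$. By axiom $(i)$ the assignment $X\mapsto\nabla_X s$ is $C^\infty(M)$-linear, hence so is $X\mapsto c(X)$ (cancelling the nowhere vanishing $s$), so $c$ is a genuine $1$-form by the characterization of $\Omega^1(M)$ as the $C^\infty(M)$-linear maps $\Gamma(M,TM)\to C^\infty(M)$. Set $\theta^s:=\I\hbar\,c$; then $\theta^s\in\Omega^1(M)$ and $\nabla_X s=-\tfrac{\I}{\hbar}\theta^s(X)\,s$.

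\emph{Step 3: the Leibniz rule.} For $\tilde s=f\,s$, axiom $(iii)$ gives
\[
\nabla_X\tilde s=\nabla_X(f\,s)=X(f)\,s+f\,\nabla_X s=X(f)\,s-\frac{\I}{\hbar}\theta^s(X)\,f\,s=\left(X\!\left(\frac{\tilde s}{s}\right)-\frac{\I}{\hbar}\theta^s(X)\,\frac{\tilde s}{s}\right)s,
\]
which is precisely the asserted formula; comparing with the trivial-bundle connection $\nabla^{\theta}_X f=X(f)-\tfrac{\I}{\hbar}\theta(X)f$ from the earlier example shows this is indeed $\nabla^{\theta^s}_X\tilde s$. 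There is no serious obstacle here: the only points requiring care are both consequences of the nowhere-vanishing hypothesis, namely that $\tilde s/s$ and the coefficient $c(X)$ of $\nabla_X s$ are \emph{globally defined smooth} functions, and that $c$ is \emph{tensorial}, i.e. an honest $1$-form rather than a merely pointwise-linear assignment — the latter being exactly where axiom $(i)$ enters.
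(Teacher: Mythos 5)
Your proof is correct and follows essentially the same route as the paper's: define $\theta^s$ by the action of $\nabla$ on the trivializing section $s$ (your $\theta^s=\I\hbar\,c$ is literally the paper's $\theta^s(X)=-\tfrac{\hbar}{\I}\,\nabla_Xs/s$), and then conclude by the Leibniz rule applied to $\tilde s=(\tilde s/s)\,s$. The only difference is that you spell out the points the paper leaves as ``one can check,'' namely the smoothness of $\tilde s/s$ and the $C^\infty(M)$-linearity (tensoriality) of $c$, which are worth making explicit.
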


\begin{proof}
Consider the map $\Gamma(M,TM)\to C^\infty(M)$, $X\mapsto\left(-\frac{\hbar}{\I}\right) \frac{\nabla_Xs}{s}$. One can check that it indeed defines a $1$-form $\theta^s$. Moreover, 
\[
\nabla^{\theta^s}_X\tilde{s}=\nabla^{\theta^s}_X\left(\frac{\tilde{s}}{s}\cdot s\right)=X\left(\frac{\tilde{s}}{s}\right)s+\frac{\tilde{s}}{s}\nabla^{\theta^s}_Xs=\left( X\left(\frac{\tilde{s}}{s}\right)-\frac{\I}{\hbar}\theta^s(X)\frac{\tilde{s}}{s}\right)s.
\]
\end{proof}

\begin{rem}
Let $L\xrightarrow{\pi}M$ be a line bundle with a connection $\nabla$. Then, using a local trivialization $s\colon U\to L\big|_U$, we can find a $1$-form $\theta^s$ on $U$ such that $\nabla_X=\nabla^{\theta^s}_X=X-\frac{\I}{\hbar}\theta^s(X)$ on $U$.
\end{rem}

\subsubsection{Curvature of a connection}
Let $(L,\nabla)$ be a line bundle with connection over $M$. We define the \emph{curvature} $R^\nabla$ of $\nabla$ as the map:
\begin{align*} 
R^\nabla\colon \Gamma(M,TM)\times\Gamma(M,TM)\times \Gamma(M,L)&\to \Gamma(M,L)\\
(X,Y,s)&\mapsto R^\nabla(X,Y)s:=\I\left(\nabla_X\nabla_Y-\nabla_Y\nabla_X-\nabla_{[X,Y]}\right)s
\end{align*}
Unlike the connection $\nabla$, $R^\nabla(X,Y)\colon \Gamma(M,L)\to \Gamma(M,L)$ is $C^\infty(M)$-linear and hence defines a map $R^\nabla\colon \Gamma(M,TM)\times\Gamma(M,TM)\to \Gamma(M,\End(L))$, which is $C^\infty(M)$-linear and alternating. Note that to any line bundle $\End(L)$ is again a line bundle and is trivializable (indeed, the map $x\mapsto \id_x\colon L_x\to L_x$ defines a nowhere vanishing section of $\End(L)$). The bundle $\End(L)$ over $M$ is called the \emph{endomorphism bundle} of $L$. This implies that $\Gamma(M,\End(L))$ can be identified with $C^\infty(M)$. Hence, $R^\nabla\colon \Gamma(M,TM)\times \Gamma(M,TM)\to C^\infty(M)$ is bilinear and alternating and thus $R^\nabla$ can be identified with a $2$-form on $M$.

\begin{defn}[Prequantizable]
Let $(M,\Omega)$ be a symplectic manifold. We say that $(M,\Omega)$ is \emph{prequantizable} if there is a Hermitian line bundle $(L,\nabla)$ with a connection over $M$ such that $R^\nabla=\frac{1}{\hbar}\Omega$.
\end{defn}

\begin{ex}
Consider the symplectic manifold $(T^*N,\Omega_{can})$ for some manifold $N$. Moreover, consider the trivial bundle $L=T^*N\times\mathbb{C}\xrightarrow{\pi}\mathbb{C}$ with the connection $\nabla=\nabla^\alpha$, where $\alpha$ is the tautological $1$-form. Given a Hermitian line bundle $L\xrightarrow{\pi}(T^*N,\Omega_{can})$, we can talk about \emph{square-integrable} sections of $L$. Note that $\lambda:=\frac{1}{n!}\Omega^{\land n}$ defines a volume form on $T^*N$. Let $s\in \Gamma(T^*N,L)$, and consider the map $x\mapsto h(s(x),s(x))$. We get $h(s,s)\in C^\infty(T^*N)$. Moreover, define 
\begin{equation}
\label{inner_prod}
C(s,s):=\int_{T^*N}h(s,s)\lambda,
\end{equation}
and $\|s\|:=\left(\int_{T^*N}h(s,s)\lambda\right)^{1/2}$. We say that $s\in\Gamma(T^*N,L)$ is \emph{square-integrable} if $\|s\|<\infty$. 
\end{ex}

\begin{defn}[Square-integrable]
A \emph{square-integrable} section $s$ is an element of the \emph{completion} of smooth square-integrable sections of the line bundle $L$.
\end{defn}
We denote the space of square-integrable sections of $L$ by $\calH_{pre}$.

\begin{prop}
$\calH_{pre}$ is a Hilbert space.
\end{prop}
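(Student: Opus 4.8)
The plan is to reduce the assertion to the standard fact that the metric completion of a complex inner product space is a complex Hilbert space in the sense defined above. By construction $\calH_{pre}$ is the completion of the space $\Gamma_{L^2}$ of smooth square-integrable sections of $L$ with respect to the norm $\|s\| = C(s,s)^{1/2}$, so the only substantive work is to check that the pairing $C$ extends to a well-defined, positive-definite Hermitian inner product on $\Gamma_{L^2}$; completeness of $\calH_{pre}$ is then automatic, and it only remains to see that the inner product --- not merely the norm --- propagates to the completion.

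First I would extend $C$ sesquilinearly by $C(s_1,s_2) := \int_{T^*N} h(s_1,s_2)\,\lambda$ and check finiteness: for each $x$ the form $h_x$ is a positive-definite Hermitian form on the one-dimensional space $L_x$, so the pointwise Cauchy--Schwarz inequality gives $|h(s_1,s_2)(x)| \le h(s_1,s_1)(x)^{1/2}\,h(s_2,s_2)(x)^{1/2}$, and integrating against $\lambda$ followed by Cauchy--Schwarz for the integral yields $|C(s_1,s_2)| \le \|s_1\|\,\|s_2\| < \infty$. Sesquilinearity and the Hermitian symmetry $C(s_1,s_2) = \overline{C(s_2,s_1)}$ are inherited pointwise from $h$ together with linearity of the integral. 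For positive-definiteness, $h(s,s) \ge 0$ pointwise and $\lambda$ is a positive volume form, so $C(s,s) \ge 0$; and if $C(s,s) = 0$ then the continuous nonnegative function $x \mapsto h(s(x),s(x))$ has vanishing integral against a volume form, hence vanishes identically, which by positive-definiteness of each $h_x$ forces $s \equiv 0$. Thus $(\Gamma_{L^2},C)$ is a complex inner product space.

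It remains to invoke the general completion construction. For any complex inner product space $(E,\langle\cdot,\cdot\rangle)$, the space $\widehat E$ of $\|\cdot\|$-Cauchy sequences modulo the subspace of null sequences --- into which $E$ embeds as the constant sequences --- is a complete metric space containing $E$ densely, and the inner product extends by $\langle[x_n],[y_n]\rangle := \lim_n \langle x_n,y_n\rangle$, the limit existing because $(x_n)$ and $(y_n)$ are bounded Cauchy sequences. One checks directly that the extended form is sesquilinear, Hermitian and positive-definite --- non-degeneracy persists since $\|[x_n]\| = \lim_n \|x_n\| = 0$ says exactly that $[x_n] = 0$ in $\widehat E$ --- so $\widehat E$ is a complex Hilbert space. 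Applying this to $E = \Gamma_{L^2}$ gives $\calH_{pre} = \widehat{\Gamma_{L^2}}$, hence $\calH_{pre}$ is a Hilbert space.

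I do not expect a genuine obstacle here; the statement is essentially definitional once (i) $C$ is known to be an inner product on smooth square-integrable sections and (ii) one grants that completions of inner product spaces are Hilbert spaces. The one step requiring care is the positive-definiteness in (i), where it is essential that sections are continuous and $\lambda$ is a volume form rather than merely a nonnegative density. A secondary point, more a matter of interpretation than of proof, is that this abstract completion is customarily identified with the concrete space of $\lambda$-almost-everywhere-defined square-integrable measurable sections of $L$ modulo equality $\lambda$-a.e.; that identification is the usual density of smooth functions in $L^2$, which I would cite rather than reprove.
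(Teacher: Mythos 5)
The paper states this proposition without proof, evidently treating it as definitional. Your argument is the standard and correct one: you verify that $C$ is a positive-definite Hermitian sesquilinear form on the smooth square-integrable sections (the continuity of smooth sections and positivity of the Liouville volume form $\lambda$ being exactly what makes non-degeneracy work), and then invoke the general fact that the abstract completion of a complex inner product space, with the inner product extended by continuity, is a Hilbert space. This correctly fills the gap the paper leaves implicit, and your closing remark identifying the abstract completion with the space of $\lambda$-a.e.\ square-integrable measurable sections modulo a.e.\ equality is the right way to reconcile the abstract construction with the concrete description one usually wants.
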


\subsubsection{Prequantization of $(M,\Omega)$}
We want to construct a prequantization for any symplectic manifold. Our data is a Hermitian line bundle with connection $(L,\nabla,h)$ such that $R^\nabla=\frac{1}{\hbar}\Omega$. The triple $(L,\nabla,h)$ is called a \emph{prequantum line bundle}. The \emph{prequantum Hilbert space} is given by $\calH_{pre}$. Given $f\in C^\infty(M,\R)$, we define $\mathscr{Q}_{pre}(f):=-\I\hbar\nabla_{X_f}+f$, where $X_f$ is the Hamiltonian vector field associated to $f$.

\begin{lem}
On $\Gamma(M,L)\cap\calH_{pre}$ we have $-\frac{\I}{\hbar}[\mathscr{Q}_{pre}(f),\mathscr{Q}_{pre}(g)]=\mathscr{Q}_{pre}(\{f,g\})$.
\end{lem}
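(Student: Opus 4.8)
The plan is to reproduce, essentially line for line, the computation already carried out for $T^*\R^n$ in the proof of the analogous proposition, with the flat identity $\dd\theta=\Omega$ replaced by the prequantization condition $R^\nabla=\frac1\hbar\Omega$ for the curvature of the prequantum line bundle $(L,\nabla,h)$. All the necessary ingredients are in place: the Leibniz rule defining a connection, the identification of $R^\nabla$ with an honest $2$-form, the identity $[X_f,X_g]=X_{\{f,g\}}$, and the nondegeneracy relations $\Omega(X_f,X_g)=\{f,g\}$ and $X_f(g)=\{f,g\}=-X_g(f)$ coming from Definition~\ref{PBII}.

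First I would fix a section $s\in\Gamma(M,L)\cap\calH_{pre}$ and expand $[\mathscr{Q}_{pre}(f),\mathscr{Q}_{pre}(g)]s$, with $\mathscr{Q}_{pre}(f)=-\I\hbar\nabla_{X_f}+f$, bilinearly into four contributions: the term $(-\I\hbar)^2[\nabla_{X_f},\nabla_{X_g}]s$, the two mixed terms $-\I\hbar[\nabla_{X_f},g]s$ and $-\I\hbar[f,\nabla_{X_g}]s$, and $[f,g]s=0$ since multiplication operators by smooth functions commute. For the mixed terms the Leibniz property gives $[\nabla_X,\varphi]s=\nabla_X(\varphi s)-\varphi\nabla_X s=X(\varphi)s$ for $\varphi\in C^\infty(M)$; applying it with $(X,\varphi)=(X_f,g)$ and with $(X,\varphi)=(X_g,f)$ turns them into $X_f(g)s$ and $-X_g(f)s$, i.e. both equal to $\{f,g\}s$.

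Next I would treat the curvature term. Unwinding the definition of $R^\nabla$ and using that $(M,\Omega)$ is prequantizable, $R^\nabla=\frac1\hbar\Omega$, one gets $[\nabla_{X_f},\nabla_{X_g}]s=\nabla_{[X_f,X_g]}s-\frac{\I}{\hbar}\Omega(X_f,X_g)s$ on $\Gamma(M,L)\cap\calH_{pre}$. Substituting $[X_f,X_g]=X_{\{f,g\}}$ (established earlier among the properties of the Poisson bracket) and $\Omega(X_f,X_g)=\{f,g\}$, and then collecting the four contributions, the $\nabla_{X_{\{f,g\}}}$-piece together with part of the $\Omega(X_f,X_g)$-piece reorganizes — after factoring out the overall scalar — into $-\I\hbar\nabla_{X_{\{f,g\}}}s+\{f,g\}s=\mathscr{Q}_{pre}(\{f,g\})s$, which is the claim.

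I expect the only genuine work to be bookkeeping: keeping track of the powers of $-\I\hbar$, the signs from antisymmetry of $\{\cdot,\cdot\}$ and of $R^\nabla$, and matching the sign conventions of the preceding sections. One should also note at the outset that $\mathscr{Q}_{pre}(f)$ and $\mathscr{Q}_{pre}(g)$ map the common dense domain $\Gamma(M,L)\cap\calH_{pre}$ into $\Gamma(M,L)$, so that the operator products forming the commutator are defined there — this is exactly why the identity is asserted only on that subspace. Conceptually there is nothing beyond the flat case; the whole point is that the prequantization condition $R^\nabla=\frac1\hbar\Omega$ was designed precisely so that the curvature contribution cancels against the unwanted cross terms.
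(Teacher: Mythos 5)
The paper leaves this lemma unproved, so there is no paper argument to compare against; but your plan is clearly the intended one: redo the four-term commutator expansion from the $T^*\R^n$ proposition verbatim, with the curvature hypothesis $R^\nabla=\frac{1}{\hbar}\Omega$ in place of the explicit flat identity $\dd\theta=\Omega$. With the paper's convention $R^\nabla(X,Y)s=\I\left(\nabla_X\nabla_Y-\nabla_Y\nabla_X-\nabla_{[X,Y]}\right)s$, prequantizability unpacks to exactly $[\nabla_X,\nabla_Y]=\nabla_{[X,Y]}-\frac{\I}{\hbar}\Omega(X,Y)$, so all four contributions recombine as in the flat case. Your treatment of the mixed Leibniz terms and the remark about the domain $\Gamma(M,L)\cap\calH_{pre}$ are both correct.

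The one place you hand-wave --- ``after factoring out the overall scalar'' --- is precisely where a sign lives, and it matters here. Carrying the algebra through gives $[\mathscr{Q}_{pre}(f),\mathscr{Q}_{pre}(g)]=-\hbar^2\nabla_{X_{\{f,g\}}}-\I\hbar\{f,g\}$, whence
\[
\frac{\I}{\hbar}\left[\mathscr{Q}_{pre}(f),\mathscr{Q}_{pre}(g)\right]=-\I\hbar\nabla_{X_{\{f,g\}}}+\{f,g\}=\mathscr{Q}_{pre}(\{f,g\}),
\]
with a \emph{plus} $\frac{\I}{\hbar}$, matching both the earlier $T^*\R^n$ proposition and item $(iii)$ of the paper's definition of quantization. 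The minus sign in the lemma as printed is a typo. Your computation, made explicit, proves the corrected statement; just do not let the unnamed ``overall scalar'' absorb a sign you have not actually tracked.
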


\subsection{Problems with prequantization}
There are several problems that arise with the prequantization scheme as derived before. First, $\calH_{pre}$ is too \emph{big}. Moreover, $\mathscr{Q}_{pre}(f)$ is not positive (even if $f$ is).

\begin{ex}
Let $M=T^*\R$. Take $\theta=\frac{1}{2}(p\dd x-x\dd p)$, $H=\frac{1}{2}(p^2+x^2)$ ($H$ is called \emph{classical harmonic oscillator}), and $X_H=\frac{1}{2}(X_{p^2}+X_{x^2})=p\partial_x-x\partial_p$. Then $\theta(X_H)=\frac{1}{2}(p^2+x^2)$ and 
\[
\mathscr{Q}_{pre}(H)=-\I\hbar\left(p\partial_x-x\partial_p-\frac{\I}{2\hbar}(p^2+x^2)+\frac{1}{2}(p^2+x^2)\right)=-\I\hbar(p\partial_x-x\partial_p).
\]
Observe that $p\partial_x-x\partial_p$ is a vector field coming from a curl. Thus  for $r=x^2+p^2$ and $x=r\cos\phi$, $p=r\sin\phi$, we get $\mathscr{Q}_{pre}(H)(f(r)\ee^{\I n\phi})=-\I\hbar f(r)\I n\ee^{\I n\phi}=n\hbar f(r)\ee^{\I n\phi}$ for any integer $n$. Thus $n\hbar$ are eigenvalues of $\mathscr{Q}_{pre}(H)$ for all $n\in\mathbb{Z}$. This implies that $\mathscr{Q}_{pre}(H)$ has negative values and thus it is not a positive operator on $L^2(T^*\R)$.
\end{ex}

\subsection{Quantization I}
We fix the manifold $M:=T^*\R^n$ together with its standard symplectic form $\Omega$ and we set $\theta=\sum_{j=1}^np_j\dd x^j$. Let $J$ be the standard complex structure on $\R^{2n}$, which is positive and compatible with $\Omega$, i.e. 
$$J=\begin{pmatrix}0&I\\ -I&0\end{pmatrix},$$
such that $\Omega(\enspace,J\enspace)$ is the standard inner product on $\R^{2n}$. Note that $\R^{2n}$ together with the complex structure $J$ can be identified with $\mathbb{C}^n$, where the complex coordinates are given by $z=(z_1,...,z_n)$ with $z_j=x^j-\I p_j$ for $j=1,...,n$. Moreover, define the differential operators
\begin{align*}
\partial_{z_j}&:=\frac{1}{2}\left(\partial_{x^j}+\I\partial_{p_j}\right),\\
\partial_{\bar z_j}&:=\frac{1}{2}\left(\partial_{x^j}-\I\partial_{p_j}\right).
\end{align*}

A function $f\colon\mathbb{C}^n\to \mathbb{C}$ is \emph{holomorphic} if and only if $\partial_{\bar z_j}f=0$ for all $j=1,...,n$. Recall that we want to start with the prequantum Hilbert space and we want to \emph{throw away} extra information and construct a quantum Hilbert space. Consider $\calH_{pre}=L^2(\R^{2n})$ as the prequantum Hilbert space, and the position Hilbert space by $L^2(\R^n)$. We need a mechanism that allows us to select $f\in C^\infty(\R^{2n})$ which are independent of $p_1,...,p_n$. This motivates the following definitions. 

\begin{defn}[Position subspace]
The \emph{position subspace} is given by 
\[
V^{pos}:=\{f\in C^\infty(T^*\R^n)\mid \nabla_{\de_{p_j}}^\theta f=0,\,\, \forall j=1,...,n\}.
\]
\end{defn}

\begin{defn}[Momentum subspace]
The \emph{momentum subspace} is given by 
\[
V^{mom}:=\{f\in C^\infty(T^*\R^n)\mid \nabla_{\de_{x^j}}^\theta f=0,\,\, \forall j=1,...,n\}.
\]
\end{defn}

\begin{defn}[Holomorphic subspace]
The \emph{holomorphic subspace} is given by 
\[
V^{hol}:=\{f\in C^\infty(T^*\R^n)\mid \nabla_{\de_{\bar z_j}}^\theta f=0,\,\, \forall j=1,...,n\}.
\]
\end{defn}

\begin{lem}
The following hold:
\begin{enumerate}
\item{$V^{pos}$, $V^{mom}$, $V^{hol}$ are subspaces of $C^\infty(T^*\R^n)$.
}
\item{$\phi\in V^{pos}$ if and only if $\partial_{p_j}\phi=0$ for all $j=1,...,n$., i.e. $\phi(x,p)=\psi(x)$ for $\psi\in C^\infty(\R^n)$.
}
\item{$\phi\in V^{mom}$ if and only if $\phi(x,p)=\ee^{\frac{\I}{\hbar}x\cdot p}\psi(p)$ for $\psi\in C^\infty(\R^n)$ such that $\partial_{x^j}\psi=0$ for all $j=1,...,n$.
}
\item{$\phi\in V^{hol}$ if and only if $\phi(x,p)=\ee^{-\frac{p^2}{2\hbar}}F(z)$, where $F$ is holomorphic on $\mathbb{C}^n$.

}
\end{enumerate}
\end{lem}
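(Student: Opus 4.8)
The plan is to dispatch part (1) immediately and then, for each of parts (2)--(4), to write out the first-order operator $\nabla^\theta_X=X-\tfrac{\I}{\hbar}\theta(X)$ explicitly for the vector field in question, turning the equation $\nabla^\theta_X\phi=0$ into an elementary linear PDE that I then integrate. Throughout, $\theta=\sum_k p_k\,\dd x^k$.

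Part (1) is immediate: by property $(ii)$ of a connection, $\phi\mapsto\nabla^\theta_X\phi$ is $\mathbb{C}$-linear for each fixed $X$, so each of $V^{pos},V^{mom},V^{hol}$ is an intersection of kernels of $\mathbb{C}$-linear maps and hence a linear subspace of $C^\infty(T^*\R^n)$. For part (2): since $\theta$ annihilates $\de_{p_j}$, one has $\nabla^\theta_{\de_{p_j}}\phi=\de_{p_j}\phi$, so $\phi\in V^{pos}$ iff $\de_{p_j}\phi=0$ for all $j$, i.e. $\phi(x,p)=\psi(x)$ for some $\psi\in C^\infty(\R^n)$. For part (3): $\theta(\de_{x^j})=p_j$, so the defining condition is the overdetermined system $\de_{x^j}\phi=\tfrac{\I}{\hbar}p_j\phi$, $j=1,\dots,n$; its mixed partials are compatible (both expressions for $\de_{x^i}\de_{x^j}\phi$ equal $-\tfrac1{\hbar^2}p_ip_j\phi$), and the substitution $\psi:=\ee^{-\frac{\I}{\hbar}x\cdot p}\phi$ yields $\de_{x^j}\psi=0$ for all $j$, whence $\phi=\ee^{\frac{\I}{\hbar}x\cdot p}\psi$ with $\psi$ independent of $x$; the converse is the same computation run backwards.

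For part (4) I would first compute, from $\de_{\bar z_j}=\tfrac12(\de_{x^j}-\I\de_{p_j})$ and $\dd x^k(\de_{p_j})=0$, that $\theta(\de_{\bar z_j})=\tfrac12 p_j$, so $\phi\in V^{hol}$ iff $\de_{\bar z_j}\phi=\tfrac{\I}{2\hbar}p_j\phi$ for all $j$. The one slightly non-mechanical observation is that the Gaussian $\ee^{-p^2/(2\hbar)}$ solves exactly this system: $\de_{\bar z_j}\ee^{-p^2/(2\hbar)}=-\tfrac{\I}{2}\de_{p_j}\ee^{-p^2/(2\hbar)}=\tfrac{\I p_j}{2\hbar}\ee^{-p^2/(2\hbar)}$. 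Hence $F:=\ee^{p^2/(2\hbar)}\phi$ satisfies $\de_{\bar z_j}F=0$ for all $j$; since $(x,p)\mapsto z=x-\I p$ is a global diffeomorphism $T^*\R^n\xrightarrow{\sim}\mathbb{C}^n$, the function $F$ is smooth on $\mathbb{C}^n$ and killed by every $\de_{\bar z_j}$, hence holomorphic on $\mathbb{C}^n$ (Cauchy--Riemann), and $\phi=\ee^{-p^2/(2\hbar)}F$; conversely any such product lies in $V^{hol}$ by the same identity.

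The whole argument is essentially a sequence of routine integrations of first-order ODEs; the only points deserving a word of care are the compatibility check in part (3) and, in part (4), the recognition of the correct Gaussian conjugating factor together with the standard fact that a smooth function annihilated by all $\de_{\bar z_j}$ is holomorphic. I expect the Gaussian factor in (4) to be the one genuinely ``clever'' ingredient.
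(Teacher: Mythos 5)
Your proof is correct and follows essentially the same route as the paper: compute $\theta$ on the relevant vector fields and integrate the resulting first-order system by conjugating with the appropriate factor ($\ee^{\frac{\I}{\hbar}x\cdot p}$ or $\ee^{-p^2/(2\hbar)}$). Your framing of part (4) — first noting that the Gaussian is itself annihilated by $\nabla^\theta_{\partial_{\bar z_j}}$ — is a slightly tidier version of the paper's direct expansion of $\nabla^\theta_{\partial_{\bar z_j}}(\ee^{-p^2/(2\hbar)}F)$, which in fact contains a typo (writing $\tfrac{\I}{\hbar}p_j$ where $\tfrac{\I}{2\hbar}p_j$ is meant, despite having just computed $\theta(\partial_{\bar z_j})=\tfrac12 p_j$); your version is free of this slip, and the compatibility check you add in part (3) is harmless but redundant once the explicit substitution is given.
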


\begin{proof}
$(1)$ is obvious. For $(2)$, note that $\theta(\partial_{p_j})=0$ for all $j$. Thus $\nabla_{\partial_{p_j}}^\theta=\partial_{p_j}$ and hence $\nabla_{\partial_{p_j}}^\theta\phi=0$ if and only if $\partial_{p_j}\phi=0$. For $(3)$, we note that $\theta(\partial_{x^j})=p_j$. Thus $\nabla_{\partial_{x^j}}^\theta=\partial_{x^j}-\frac{\I}{\hbar}p_j$. Now $\nabla_{\partial_{x^j}}^\theta\left(\ee^{\frac{\I}{\hbar}x\cdot p}\psi\right)=\partial_{x^j}\left(\ee^{\frac{\I}{\hbar}x\cdot p}\psi\right)-\frac{\I}{\hbar}p_j\ee^{\frac{\I}{\hbar}x\cdot p}\psi=\ee^{\frac{\I}{\hbar}x\cdot p}\partial_{x^j}\psi$. This implies that $\nabla_{\partial x^j}^\theta\left(\ee^{\frac{\I}{\hbar}x\cdot p}\psi\right)=0$ if and only if $\partial_{x^j}\psi=0$ and thus $\phi\in V^{mom}$ if and only if $\phi=\ee^{\frac{\I}{\hbar}x\cdot p}\psi$ with $\partial_{x^j}\psi=0$ for all $j$. Finally, for $(4)$, we see that $\theta(\partial_{\bar z_j})=\frac{1}{2}p_j$ and thus $\nabla_{\partial_{\bar z_j}}^\theta=\partial_{\bar z_j}-\frac{\I}{2\hbar}p_j$. This implies 
$$\nabla_{\partial_{\bar z_j}}^\theta\left(\ee^{-\frac{p^2}{2\hbar}}F\right)=\partial_{\bar z_j}\left(\ee^{-\frac{p^2}{2\hbar}}F\right)-\frac{\I}{\hbar}p_j\ee^{-\frac{p^2}{2\hbar}}F=\ee^{-\frac{p^2}{2\hbar}}\partial_{\bar z_j}F-\frac{1}{2\hbar}F\ee^{-\frac{p^2}{2\hbar}}\partial_{\bar z_j}p^2-\frac{\I}{\hbar}p_j\ee^{-\frac{p^2}{2\hbar}}F.$$
This implies that $\nabla_{\partial_{z_j}}^\theta\left(\ee^{-\frac{p^2}{2\hbar}}\right)=\ee^{-\frac{p^2}{2\hbar}}\partial_{\bar z_j}F$ and thus $\nabla_{\partial_{z_j}}^\theta\left(\ee^{-\frac{p^2}{2\hbar}}\right)=0$ if and only if $\partial_{\bar z_j}F=0$. Hence, $\phi\in V^{hol}$ if and only if $\phi=\ee^{-\frac{p^2}{2\hbar}}F(z)$, where $F$ is holomorphic.
\end{proof}
Next we want to construct Hilbert spaces using $V^{pos}$, $V^{mom}$ and $V^{hol}$. We want to start with a naive approach: Let $\phi,\psi\in V^{pos}$ and define 
\begin{equation}
\label{inner_prod}
\langle \phi,\psi\rangle_{\calH_{pos}}:=\int_{\R^{2n}}\bar\phi\psi\dd p_1\dotsm\dd p_n\dd x^{1}\dotsm\dd x^n.
\end{equation}

Moreover, define $\calH_{pos}$ as the completion of $\{\phi\in V^{pos}\mid \|\phi\|^2<\infty\}$, where $\|\enspace\|$ is given by \eqref{inner_prod}. The problem in this approach is that $L^2(\R^{2n})\cap V^{pos}=\{0\}$ and hence $\calH_{pos}=\{0\}$. Using the naive approach, we can not construct a nontrivial Hilbert space out of $V^{pos}$. The same argument shows that we can not get a nontrivial Hilbert space $V^{mom}$. However, next we show that the naive approach will lead to a Hilbert space $\calH_{hol}$ from $V^{hol}$, which is usally called the \emph{Segal-Bergmann space} used in many Quantum Mechanics text books as a quantum Hilbert space. Let $\phi,\psi\in V^{hol}$ with $\phi=\ee^{-\frac{p^2}{2\hbar}}F$ and $\psi=\ee^{-\frac{p^2}{2\hbar}}G$. Define then $$\langle\phi,\psi\rangle_{\calH_{hol}}:=\int_{\R^{2n}}\bar\phi\bar\psi\dd p_1\dotsm\dd p_n\dd x^{1}\dotsm\dd x^n=\int_{\R^{2n}}\bar F G\ee^{-\frac{p^2}{2\hbar}}\dd p_1\dotsm\dd p_n\dd x^{1}\dotsm\dd x^n.$$
Then se set $\calH_{hol}$ to be the completion of $\{\phi\in V^{hol}\mid \| \phi\|^2<\infty\}$. In contrast to $\calH_{pos}$, we will show that $\calH_{hol}$ is an infinite-dimensional Hilbert space.

\begin{lem}
The following hold:
\begin{enumerate}
\item{Let $\psi_k(x,p)=z^k\ee^{-\frac{z^2}{4\hbar}}\ee^{-\frac{p^2}{2\hbar}}$ for $k\in\N$. Then $\psi_k\in\calH_{hol}$. Note that here $z^2:=\sum_{j=1}^nz_j^2$.
}
\item{$\calH_{hol}\cong \mathscr{H}L^2(\mathbb{C}^n,\nu)$, where $\mathscr{H}L^2$ denotes the holomorphic $L^2$-space and $$\dd\nu=\ee^{-\frac{p^2}{2\hbar}}\dd p_1\dotsm\dd p_n\dd x^{1}\dotsm\dd x^n.$$
}
\end{enumerate}
\end{lem}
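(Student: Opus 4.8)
The plan is to make the substitution $z_j=x^j-\I p_j$ explicit, so that both parts reduce to elementary Gaussian integrals together with one standard completeness property of holomorphic $L^2$-spaces.

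For part $(1)$, first observe that $\psi_k=\ee^{-p^2/(2\hbar)}F_k$ with $F_k(z)=z^k\ee^{-z^2/(4\hbar)}$ (for $n>1$ one reads $z^k$ as a monomial $z^\alpha$), and that $F_k$ is holomorphic on $\mathbb{C}^n$, being a polynomial times the exponential of a holomorphic function; hence part $(4)$ of the preceding lemma already gives $\psi_k\in V^{hol}$, and it remains only to verify $\|\psi_k\|^2<\infty$. The key point here is the cancellation of the Gaussians: using $\mathrm{Re}(z^2)=x^2-p^2$ and $|z_j|^2=(x^j)^2+(p_j)^2$ one gets $|\ee^{-z^2/(4\hbar)}|^2=\ee^{-(x^2-p^2)/(2\hbar)}$, so that (the $\calH_{hol}$-norm being $\|\psi_k\|^2=\int_{\R^{2n}}|\psi_k|^2\,\dd p\,\dd x$) one has $|\psi_k|^2=|z^k|^2\,\ee^{-(x^2-p^2)/(2\hbar)}\,\ee^{-p^2/\hbar}=|z^k|^2\,\ee^{-(x^2+p^2)/(2\hbar)}=|z^k|^2\,\ee^{-|z|^2/(2\hbar)}$. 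Thus $\|\psi_k\|^2=\int_{\R^{2n}}|z^k|^2\,\ee^{-|z|^2/(2\hbar)}\,\dd p\,\dd x$, a finite Gaussian moment (compute in polar coordinates on $\R^{2n}$, or via the $\Gamma$-function), whence $\psi_k\in\calH_{hol}$. Running the same computation over all multi-indices $\alpha$ yields an infinite linearly independent family $\{\psi_\alpha\}\subseteq\calH_{hol}$, and because the surviving weight $\ee^{-|z|^2/(2\hbar)}$ is rotation invariant this family is in fact mutually orthogonal; in particular $\calH_{hol}$ is infinite-dimensional.

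For part $(2)$, I would introduce the linear map $T\colon V^{hol}\to\{\text{holomorphic functions on }\mathbb{C}^n\}$, $T(\ee^{-p^2/(2\hbar)}F)=F$. By part $(4)$ of the preceding lemma $T$ is a bijection, and comparing the definition of $\langle\,\cdot\,,\cdot\,\rangle_{\calH_{hol}}$ with the $L^2(\mathbb{C}^n,\nu)$ inner product shows $T$ is norm-preserving; hence $T$ restricts to an isometric bijection between $\{\phi\in V^{hol}\mid\|\phi\|^2<\infty\}$ and the space $\mathscr{H}^{\mathrm{sm}}$ of holomorphic functions $F$ with $\int_{\mathbb{C}^n}|F|^2\,\dd\nu<\infty$. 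Passing to completions, $\calH_{hol}$ is isometrically isomorphic to the closure of $\mathscr{H}^{\mathrm{sm}}$ inside $L^2(\mathbb{C}^n,\nu)$, so it suffices to show that $\mathscr{H}^{\mathrm{sm}}$ is already closed there; granting that, $\calH_{hol}\cong\mathscr{H}^{\mathrm{sm}}=\mathscr{H}L^2(\mathbb{C}^n,\nu)$.

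The closedness of $\mathscr{H}^{\mathrm{sm}}$ in $L^2(\mathbb{C}^n,\nu)$ is the one genuinely analytic step, and I expect it to be the main obstacle. The standard argument for Bergmann-type spaces applies: the density of $\nu$ is continuous and strictly positive, so on any compact $K\subseteq\mathbb{C}^n$ one has $\nu\geq c_K\,\dd(\mathrm{vol})$ for some $c_K>0$; combining the mean-value property of holomorphic functions (the value at a point equals its average over a small ball sitting inside a fixed neighbourhood of $K$) with the Cauchy--Schwarz inequality then gives an estimate $\sup_K|F|\leq C_K\|F\|_{L^2(\nu)}$ valid for all $F\in\mathscr{H}^{\mathrm{sm}}$. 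Consequently any $L^2(\nu)$-Cauchy sequence in $\mathscr{H}^{\mathrm{sm}}$ is uniformly Cauchy on compacta, hence converges locally uniformly to a holomorphic function $F_\infty$ (Weierstrass), which belongs to $L^2(\mathbb{C}^n,\nu)$ by Fatou's lemma and coincides $\nu$-almost everywhere with the $L^2$-limit. This proves completeness and closes the identification. (Alternatively, part $(2)$ can be deduced from part $(1)$: the unitary $F\mapsto\ee^{z^2/(4\hbar)}F$ sends $\nu$ to the rotation-invariant Gaussian $\ee^{-|z|^2/(2\hbar)}\,\dd(\mathrm{vol})$ and the $\psi_\alpha$ to the monomials $z^\alpha$, which classically form an orthogonal basis of the Segal--Bergmann space; but this still rests on the completeness of that space.)
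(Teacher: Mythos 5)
Your proof is correct and follows essentially the same route as the paper: verify $\psi_k\in V^{hol}$, cancel the Gaussians so that $\|\psi_k\|^2$ becomes a finite Gaussian moment, then conjugate by $\ee^{p^2/(2\hbar)}$ to set up the norm-preserving bijection between $\calH_{hol}$ and $\mathscr{H}L^2(\mathbb{C}^n,\nu)$. The one place you go beyond the paper is in actually proving that $\mathscr{H}L^2(\mathbb{C}^n,\nu)$ is a closed subspace of $L^2(\mathbb{C}^n,\nu)$ (mean-value property plus Cauchy--Schwarz, then Weierstrass and Fatou) --- the paper simply asserts that the conjugation ``is an isomorphism of Hilbert spaces,'' tacitly assuming the completeness of the Segal--Bergmann space, so your addition is a genuine and welcome tightening rather than a different strategy.
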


\begin{proof}
We start with $(1)$. It is easy to see that $\psi_k\in V^{hol}$. We will show that $\psi_k\in\ L^2(\R^{2n})$. Note that $z^2+\bar z^2=2\sum_{j=1}^n(x_j^2-p_j^2)=2(x^2-p^2)$. Thus, we have $$\ee^{-\frac{z^2}{4\hbar}-\frac{\bar z^2}{4\hbar}}=\ee^{-\frac{x^2}{2\hbar}}\ee^{-\frac{p^2}{2\hbar}}.$$ Hence, we get
$$\int_{\R^{2n}}\vert \psi_k\vert^2\dd p_1\dotsm\dd p_n\dd x^{1}\dotsm\dd x^n=\int_{\R^{2n}}\vert z\vert^{2k}\ee^{-\frac{x^2}{2\hbar}}\ee^{-\frac{p^2}{2\hbar}}\dd p_1\dotsm\dd p_n\dd x^{1}\dotsm\dd x^n<\infty.$$
Note that $\ee^{-\frac{(x^2+p^2)}{2\hbar}}$ gives a Gaussian measure and since $\vert z\vert^{2k}$ is polynomial, we get finiteness. This shows tat $\psi_k\in L^2(\R^{2n})$ for all $k\in\N$. For $(2)$, note that $\phi\in V^{hol}$ if and only if $\phi=\ee^{-\frac{p^2}{2\hbar}}F$, where $F$ is holomorphic. Thus we have a map $L\colon\calH_{hol}\to \mathscr{H}(\mathbb{C}^n)$ which is given by $L(\phi)=\phi\ee^{\frac{p^2}{2\hbar}}$. We have denoted by $\mathscr{H}(\mathbb C^n)$ the space of holomorphic functions $\mathbb C^n\to \mathbb C$. Moreover, 
\begin{multline*}
\int_{\R^{2n}}\vert\phi\vert^2\dd p_1\dotsm\dd p_n\dd x^{1}\dotsm\dd x^n=\int_{\R^{2n}}\vert F\vert^2\ee^{-\frac{p^2}{\hbar}}\dd p_1\dotsm\dd p_n\dd x^{1}\dotsm\dd x^n\\=\int_{\R^{2n}}\vert L(\phi)\vert^2\underbrace{\ee^{-\frac{p^2}{\hbar}}\dd p_1\dotsm\dd p_n\dd x^{1}\dotsm\dd x^n}_{\dd\nu}.
\end{multline*}
Hence $L(\phi)\in L^2(\R^{2n},\nu)$, which implies that 
\[
L(\phi)\in L^2(\R^{2n},\nu)\cap \mathscr{H}(\mathbb{C}^n)=:\mathscr{H}L^2(\mathbb{C}^n,\nu).
\]
Moreover, $L^{-1}\colon \mathscr{H}L^2(\mathbb{C}^n,\nu)\to \calH_{hol}$ is given by $L^{-1}(F)=\ee^{-\frac{p^2}{2\hbar}}F$. Note that $L$ is an isomorphism of Hilbert spaces. One can show that $z^k\ee^{-\frac{z^2}{4\hbar}}$ forms an orthogonal basis of $\mathscr{H}L^2(\mathbb{C}^n,\nu)$, which is thus infinite-dimensional and hence $\calH_{hol}$ is infinite-dimensional as well. 
\end{proof}

\begin{rem}
\label{rem_theta}
Instead of taking $\theta=\sum_{j=1}p_j\dd x^j$, we can take $\widetilde{\theta}=\frac{1}{2}\sum_{j=1}^n(p_j\dd x^j-x^j\dd p_j)$ and the connection $\nabla^{\widetilde{\theta}}$. In this case we have $\phi\in V^{hol}$ if and only if $\phi\in\ee^{-\frac{z^2}{4\hbar}}F$, where $F$ is holomorphic. One can define $\calH_{hol}$ as before and one can show that $\calH_{hol}$ is isomorphic to (as Hilbert spaces) $\mathscr{H}L^2(\mathbb{C}^n,\mu)$, where 
\[
\dd\mu=\ee^{-\frac{z^2}{2\hbar}}\dd p_1\dotsm \dd p_n\dd x^1\dotsm \dd x^n.
\]
\end{rem}

\begin{ex}
We want to look at the case $n=1$. Let $M:=T^*\R\ni (x,p)$ and consider $\theta=\frac{1}{2}(p\dd x-x\dd p)$. From Remark \ref{rem_theta} we know $\calH_{hol}\cong \mathscr{H}L^2(\mathbb{C},\mu)$. Moreover, one can check that $\psi_k=z^k\ee^{-\frac{z^2}{4\hbar}}$ gives an orthogonal basis of $\calH_{hol}$. Furthermore, we can show that $$\mathscr{Q}_{pre}(H)\left(\ee^{-\frac{z^2}{4\hbar}}F\right)=\hbar z\ee^{-\frac{z^2}{4\hbar}}\frac{\dd F}{\dd z},$$
where $H(x,p):=\frac{1}{2}(x^2+p^2)$ is the harmonic oscillator. Thus $\mathscr{Q}(H)(\psi_k)=k\hbar\psi_k(z)$ for all $k\in\N$ and hence $\psi_k(z)$ are eigenvectors associated to the eigenvalues $k\hbar$ for $k\in\N$. Since all eigenvalues are nonnegative and $\psi_k$ forms a basis of $\calH_{hol}$, we get that $\mathscr{Q}_{pre}(H)$ is a nonnegative operator on $\calH_{hol}$. This example shows that we are able to improve one of the drawbacks of prequantization.
\end{ex}

\begin{rem}
$k\hbar$, for $k\in\N$, are not true answers for eigenvalues of the harmonic oscillator ($\frac{1}{2}\hbar$ is missing). This can be achieved by using \emph{half-form} quantization.
\end{rem}

\subsection{Quantization II}
We have seen that the naive approach to quantization may or may not lead to a construction of a reasonable Hilbert space. Next, our goal will be to outline a construction called \emph{half-form} quantization, which might lead to \emph{correct} Hilbert spaces. At least, we will see that we can construct position and momentum Hilbert spaces. 

\begin{defn}[Distribution]
Let $M$ be a smooth manifold. A real (complex) \emph{distribution} of rank $k$, where $k\leq \dim M$, is a subbundle $D$ of $TM$ ($TM^\mathbb{C}$) such that $D_x\subseteq T_xM$ for al $x\in M$ with $\dim_\R(D_x)=k$ for the real case, and $D_x\subseteq T_xM^\mathbb{C}$ with $\dim_\mathbb{C}D_x=k$ for the complex case.
\end{defn}

\begin{rem}
Let $D$ be a distribution on $M$. We will use $\Gamma(M,D)$ to denote the space of sections of $D$. Given a distribution $D$ on $M$, we can talk about functions on $M$, which are constant in the \emph{direction} of $D$. More precisely, we say $f\in C^\infty(M)$ is constant along $D$ if $X(f)=0$ for all $X\in \Gamma(M,D)$. We will use $C^\infty_D(M)$ to denote functions on $M$, which are constant along $D$.
\end{rem}

\begin{ex}
Let $M:=\R^2\ni(x,p)$. Let $D(x,p):=\mathrm{span}_\R\{\partial_x\}$. Then $D$ is a distribution and $\Gamma(M,D)=\mathrm{span}_{C^\infty(\R^2,\R)}\{\partial_x\}$.
\end{ex}

\begin{ex}
Let $M:=\R^2$ and $D_{(x,p)}:=\mathrm{span}_\R\{\partial_p\}$. Then $D$ is a real distribution and $\Gamma(M,D)=\mathrm{span}_{C^\infty(\R^2,\R)}\{\partial_p\}$. $D$ is called a \emph{vertical distribution} on $M\cong T^*\R$. 
\end{ex}

\begin{ex}
More generally, take $M:=T^*Q\xrightarrow{\pi}Q$. Then we can define a real distribution $D$ by $D_m=\ker(\dd\pi_m\colon T_mM\to T_{\pi(m)}Q)$ for each $m\in M$. Let $\dim Q=n$. Then we see than $\dim (\ker\dd_m\pi)=n$ for all $m\in M$. Let $(x^1,...,x^n,p_1,...,p_n)$ be local coordinates in a neighborhood of $m\in M$. Then we can check that $$\ker(\dd_m\pi)=\mathrm{span}_\R\{\partial_{p_1},...,\partial_{p_n}\}.$$ This distribution $D$ is called \emph{vertical distribution} on $T^*Q$.
\end{ex}

\begin{ex}
If we complexify a real distribution, we get a complex distribution. Let $M:=T^*\R^n$ and consider the standard complex structure $J$ on $\R^n$ together with $D(x,p):=\mathrm{span}_\mathbb{C}\{\partial_{z_1},...,\partial_{z_n}\}$. Hence $D$ is a complex distribution. 
\end{ex}

Given a distribution $D$ on $M$ and a complex line bundle $L$ with a connection $\nabla$ on $L$, we can talk about \emph{covariantly constant} sections of $L$ along $D$ as follows:

\begin{defn}[Covariantly constant]
A section $s\colon M\to L$ is \emph{covariantly constant} along $D$ if $\nabla_Xs=0$ for all $X\in \Gamma(M,D)$.
\end{defn}

\begin{rem}
We write $\Gamma_D(M,L):=\{s\in\Gamma(M,L)\mid \nabla_Xs=0,\forall X\in\Gamma(M,D)\}$.
\end{rem}

\begin{ex}
\label{ex:example1}
Let $M:=T^*\R^n\ni (x^1,...,x^n,p_1,...,p_n)$ and let $D$ be the vertical distribution. Let $L:=M\times\mathbb{C}\xrightarrow{\pi}\mathbb{C}$ and consider the 1-form $\theta=\sum_{j=1}^np_j\dd x^j$. then $\nabla^\theta$ is a connection on $L$. Using the identification $\Gamma(M,L)=C^\infty(M)$, we see that covariantly constant sections of $L$ are essentially the functions $f$ satisfying $\partial_{p_j}f=0$ for all $j=1,...,n$. Hence, we get $\Gamma_D(M,L)= V^{pos}$.
\end{ex}

\begin{ex}
Let $M$, $\theta$, $\nabla^\theta$ and $L$ be as in Example \ref{ex:example1}. Define
\begin{align*}
D_{(x,p)}&:=\mathrm{span}_\mathbb{C}\{\partial_{z_1},...,\partial_{z_n}\},\\
\widetilde{D}_{(x,p)}&:=\mathrm{span}_\mathbb{C}\{\partial_{\bar z_1},...,\partial_{\bar z_n}\}.
\end{align*}
Then $D$ and $\widetilde{D}$ are complex distributions and $\Gamma_{\widetilde{D}}(M,L)= V^{hol}$.
\end{ex}

We have seen that given a distribution $D$ on a manifold $M$ and a line bundle $(L,\nabla)$ with a connection over $M$, we can talk about sections of $L$, which are covariantly constant along $D$. In principle, it can happen that $\Gamma_D(M,L)$ is $\{0\}$ or too small. We want to understand what properties $D$ should have such that $\Gamma_D(M,L)$ is \emph{as big as possible}. Let $\hbar=1$ form now on. Let $(M,\Omega)$ be a symplectic manifold and $(L,\nabla)$ be a prequantum line bundle on $M$, i.e. $R^\nabla(X,Y)=\Omega(X,Y)$ for all $X,Y\in\Gamma(M,TM)$. Let $D$ be a distribution on $M$ and $\phi\in\Gamma_D(M,L)$. Then for all $X,Y\in\Gamma(M,D)$ we have $\nabla_X\phi=0$ and $\nabla_Y\phi=0$. Hence $[\nabla_X,\nabla_Y]\phi=0$. Recall that $[\nabla_X,\nabla_Y]=\nabla_{[X,Y]}-\I\Omega(X,Y)$ and thus 
\begin{equation}
\label{eq1}
\nabla_{[X,Y]}\phi-\I\Omega(X,Y)\phi=0.
\end{equation}
If we assume $[X,Y]\in\Gamma(M,D)$, then \eqref{eq1} implies that $\Omega(X,Y)=0$. From \eqref{eq1} and $\Omega(X,Y)=0$, we can see that if the distribution $D$ satisfies \eqref{eq1} and $\Omega(X,Y)=0$ for $X,Y\in\Gamma(M,D)$. Then the necessary condition $[\nabla_X,\nabla_Y]\phi=0$ holds and hence there is a chance that we get a reasonably \emph{big} $\Gamma_D(M,L)$. This motivates the following definition.

\begin{defn}[Real polarization]
Let $(M,\Omega)$ be a symplectic manifold. A \emph{real polarization} of $M$ is a real distribution $D$ such that 
\begin{enumerate}
\item{$X,Y\in\Gamma(M,D)$ implies that $[X,Y]\in\Gamma(M,D)$. This condition means that $D$ is \emph{involutive} (or \emph{integrable}).
}
\item{$D_x$ is a \emph{Lagrangian subspace} of $T_xM$ for all $x\in M$, i.e. for all $u,v\in D_x$, $\Omega(u,v)=0$ and $\dim D_x=\frac{1}{2}\dim M$.}
\end{enumerate}
\end{defn}

\begin{defn}[Complex polarization]
Let $(M,\Omega)$ be a symplectic manifold. A \emph{complex polarization} of $M$ is a complex distribution $D$ such that 
\begin{enumerate}
\item{for all $X,Y\in\Gamma(M,D)$ we get $[X,Y]\in\Gamma(M,D)$ ($D$ is \emph{integrable}).
}
\item{$D_x$ is a \emph{Lagrangian subspace} of $T_xM^\mathbb{C}$ for all $x\in M$.}
\item{$\dim(D_x\cap\overline{D_x})$ is constant in $x\in M$.}
\end{enumerate}
\end{defn}

\begin{rem}
We can observe that for a real polarization $D$ of $M$, the complexification $D^\mathbb{C}$ of $D$ is a complex polarization because $D_x\cap\overline{D_x}=D_x$ for all $x\in M$.
\end{rem}

\begin{ex}
Let $M:=T^*Q$ and $P$ the vertical distribution. Then $P$ is a polarization. $P$ is called the \emph{vertical polarization}.
\end{ex}

\begin{ex}
Let $M:=T^*\R^{n}$ and $J$ be the standard complex structure as before. Moreover, consider
\begin{align*}
   P_{(x,p)}&:=\mathrm{span}_\mathbb{C}\{\partial_{z_1},...,\partial_{z_n}\},\\
   \overline{P}_{(x,p)}&:=\mathrm{span}_\mathbb{C}\{\partial_{z_1},...,\partial_{z_n}\}.
\end{align*}
Then $P$ and $\overline{P}$ are complex polarizations.
\end{ex}

\begin{defn}[Involutive distribution]
A (real) distribution $D$ on $M$ with the property that $X,Y\in \Gamma(M,D)$ implies $[X,Y]\in\Gamma(M,D)$ is called \emph{involutive}.
\end{defn}

\begin{rem}
If a real distribution $D$ is involutive, there is a \emph{foliation} of $M$ by integral submanifolds of $D$, i.e. there exists a collection $\{S_i\}_{i\in I}$ of submanifolds of $M$ such that all the $S_i$ are mutually disjoint and $M=\bigsqcup_{i\in I}S_i$ (this is the foliation part). Moreover, for all $x\in S_i$, we have $T_xS_i=D_x$ (this is the integral submanifold part). Each $S_i$ is called a \emph{leaf} on the foliation (equally, \emph{leaf} of $D$). Given an involutive distribution $D$ and an associated foliation $\{S_i\}_{i\in I}$, we can define a new topological space, which is the space of equivalence classes of $M$, where the equivalence relation $\sim$ arises from the foliation: for $x,y\in M$ we have $x\sim y$ if and only if there is an $i\in I$ such that $x,y\in S_i$.
\end{rem}

Assume that $M/D$ is a smooth manifold and $\pi\colon M\to M/D$, which is the canonical projection, is smooth. 

\begin{ex}
Let $M:=T^*N$, for some manifold $N$, and $D$ be the vertical distribution on $M$. Then a leaf is exactly a fiber of $T^*N$ over $N$. In this case $M/D$ is diffeomorphic to $N$ and $\pi$ can be identified with the usual projection map $T^*N\to N$.
\end{ex}

\begin{ex}
Let $M:=\R^2\ni(x,p)$ and let $D$ be the horizontal distribution, i.e. $$\Gamma(M,D)=\mathrm{span}_{C^\infty(M,\R)}\{\partial_x\}.$$ Then $M/D\cong \R$ and $\pi\colon M\to M/D$ can be identified with the projection $(x,p)\mapsto p$.
\end{ex}

\begin{ex}
Let $M:=\R^2\setminus\{(0,0)\}$ and let $D$ be the distribution for which $$\Gamma(M,D)=\mathrm{span}_{C^\infty(M,\R)}\{x\partial_p-p\partial_x\}.$$ Then $M/D\cong \R^+$ and $\pi\colon M\to M/D$ can be identified with the map $(x,p)\mapsto x^2+p^2$.
\end{ex}

\subsubsection{Half-form quantization (real case)}
Let $(M,\Omega)$ be a symplectic manifold and $P$ a real polarization. Assume that the space of leaves $N:=M/P$ is a smooth manifold and $\pi\colon M\to N$ is smooth. Moreover, define a line bundle $K_P$ as follows: An $n$-form $\alpha$ is a section of $K_P$ if and only if $\iota_X\alpha=0$ for all $X\in\Gamma(M,P)$. 

\begin{defn}[Canonical bundle]
The bundle $K_P$ is called the \emph{canonical bundle} of $P$ and is given by 
$$(K_P)_x:=\bigwedge^n\left(\mathrm{Ann}(P_x)\right),$$
where $\mathrm{Ann}(P_x):=\{\alpha\in T^*_xM\mid \alpha(u)=0,\forall u\in P_x\}$ denotes the annihilator of $P_x$.
\end{defn}

\begin{defn}[$P$-polarized form]
We say that $\alpha\in\Gamma(M,K_P)$ is \emph{$P$-polarized} if $\iota_X(\dd\alpha)=0$ for all $X\in\Gamma(M,P)$.
\end{defn}

\begin{ex}
Let $M:=T^*\R^n\ni (x^1,...,x^n,p_1,...,p_n)$ and $P$ be the vertical polarization on $M$. Then $N\cong \R^n\ni(x^1,...,x^n)$ and $M\to N$ the projection. Consider an $n$-form $\alpha=f(x,p)\dd x^1\land\dotsm \land\dd x^n\land \dd p_1\land\dotsm \land \dd p_n$ on $M$. Then $\iota_{\partial_{p_j}}\alpha=0$ for all $j=1,...,n$ if and only if $\alpha=f(x,p)\dd x^1\land\dotsm  \land \dd x^n$. Moreover, 
\begin{equation}
\label{moreover}
\iota_{\partial_{p_j}}(\dd\alpha)=0\Longleftrightarrow\partial_{p_j}f=0, \hspace{0.3cm}\forall j=1,...,n.
\end{equation}
\end{ex}

\begin{exe}
Check \eqref{moreover}.
\end{exe}

\begin{rem}
A $P$-polarized section of $K_P$ has the form $f(x)\dd x^1\land\dotsm\land \dd x^n$.
\end{rem}

\begin{prop}
Let $\tilde{\alpha}$ be an $n$-form on $N$. Then $\pi^*(\tilde{\alpha})$ is a $P$-polarized section of $K_P$. If $\alpha$ is a $P$-polarized section of $K_p$, then $\alpha=\pi^*(\tilde{\alpha})$ for some $n$-form on $N$.
\end{prop}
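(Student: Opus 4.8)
The plan is to prove the two implications separately; the first is purely formal, and the content of the proposition lies in the converse, which is where the hypotheses that $P$ is a polarization and that $\alpha$ is $P$-polarized are used.

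For the forward implication, let $\tilde\alpha$ be an $n$-form on $N$. First I would check that $\pi^*\tilde\alpha\in\Gamma(M,K_P)$: for $X\in\Gamma(M,P)$ we have $X_x\in P_x=\ker(\dd\pi_x)$, so $\dd\pi_x(X_x)=0$, and since $\tilde\alpha$ is alternating, $(\iota_X\pi^*\tilde\alpha)_x(v_2,\dots,v_n)=\tilde\alpha_{\pi(x)}(\dd\pi_xX_x,\dd\pi_xv_2,\dots,\dd\pi_xv_n)=0$. That $\pi^*\tilde\alpha$ is moreover $P$-polarized is even cheaper: $\dd(\pi^*\tilde\alpha)=\pi^*(\dd\tilde\alpha)$, and $\dd\tilde\alpha$ is an $(n+1)$-form on the $n$-dimensional manifold $N$, hence identically $0$, so $\dd(\pi^*\tilde\alpha)=0$ and in particular $\iota_X\dd(\pi^*\tilde\alpha)=0$ for all $X\in\Gamma(M,P)$.

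For the converse, the key observation I would record first is that if $\alpha$ is a $P$-polarized section of $K_P$ and $X\in\Gamma(M,P)$, then Cartan's magic formula gives $\mathcal{L}_X\alpha=\iota_X(\dd\alpha)+\dd(\iota_X\alpha)=0$: the first term vanishes because $\alpha$ is $P$-polarized, the second because $\iota_X\alpha=0$ (membership in $K_P$). Hence $\alpha$ is invariant under the local flow $\phi^X_t$ of any local section $X$ of $P$, i.e. $(\phi^X_t)^*\alpha=\alpha$; and since $\dd\pi_x(X_x)=0$ we get $\pi\circ\phi^X_t=\pi$, so $\phi^X_t$ preserves each leaf of $P$, i.e. each fiber of $\pi$. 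Then I would define $\tilde\alpha$ fiberwise: for $y\in N$ choose $x\in\pi^{-1}(y)$, lift vectors $w_1,\dots,w_n\in T_yN$ to $\tilde w_i\in T_xM$ (possible as $\pi$ is a submersion), and set $\tilde\alpha_y(w_1,\dots,w_n):=\alpha_x(\tilde w_1,\dots,\tilde w_n)$. This is independent of the lifts because any two differ by a vector in $\ker(\dd\pi_x)=P_x$, which $\alpha_x\in\bigwedge^n\mathrm{Ann}(P_x)$ annihilates; it is independent of the chosen $x$ in the (connected) leaf $\pi^{-1}(y)$ because any two points of a leaf are joined by a finite concatenation of flow lines of local sections of $P$, and along such a flow $\dd\phi^X_t$ carries lifts of $w_i$ to lifts of $w_i$ (as $\dd\pi\circ\dd\phi^X_t=\dd(\pi\circ\phi^X_t)=\dd\pi$) while $(\phi^X_t)^*\alpha=\alpha$. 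Smoothness of $\tilde\alpha$ I would obtain by noting that near any point of $N$ the submersion $\pi$ has a local section $\sigma$, and then $\tilde\alpha=\sigma^*\alpha$ there. Finally $\pi^*\tilde\alpha=\alpha$ is immediate, since at $x$ the vectors $v_i$ are themselves lifts of $\dd\pi_xv_i$.

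The step I expect to be the main obstacle — indeed the only nonformal one — is the independence of $\tilde\alpha_y$ of the point chosen in the fiber; it is the only place where $P$-polarizedness (via $\mathcal{L}_X\alpha=0$) enters, and it relies on the standard fact that a leaf of a foliation is path-connected by concatenated plaques, i.e. by flow lines of local sections of $P$. A more computational alternative, which sidesteps the flow argument, is to work in a foliated chart $(x^1,\dots,x^n,p_1,\dots,p_n)$ adapted to $P$, so that $P=\mathrm{span}\{\partial_{p_j}\}$ and $\pi(x,p)=x$ locally: the $K_P$-condition forces $\alpha=f(x,p)\,\dd x^1\land\dotsm\land\dd x^n$, a short computation gives $\iota_{\partial_{p_j}}\dd\alpha=(\partial_{p_j}f)\,\dd x^1\land\dotsm\land\dd x^n$, so $P$-polarizedness says $\partial_{p_j}f=0$ for all $j$, whence $f=f(x)$ and $\alpha=\pi^*(f\,\dd x^1\land\dotsm\land\dd x^n)$ locally; these local $n$-forms on $N$ then glue to a global $\tilde\alpha$ because $\pi^*$ is injective on forms. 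I would present the conceptual argument and use the chart computation as a cross-check.
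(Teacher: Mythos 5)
The paper states this proposition without proof, so there is no argument to compare against; your proposal must stand on its own, and it does. Both versions of your converse argument are correct: the conceptual one (Cartan's formula gives $\mathcal{L}_X\alpha=\iota_X\dd\alpha+\dd\iota_X\alpha=0$, so $\alpha$ is invariant along flows of sections of $P$, which preserve the fibers of $\pi$; hence the fiberwise formula $\tilde\alpha_y(w_1,\dots,w_n):=\alpha_x(\tilde w_1,\dots,\tilde w_n)$ is well defined by leaf-connectedness and lift-independence, smooth via a local section of $\pi$, and pulls back to $\alpha$), and the foliated-chart computation (where the $K_P$ condition forces $\alpha=f\,\dd x^1\land\dotsm\land\dd x^n$, $P$-polarizedness becomes $\partial_{p_j}f=0$, and the local descended forms glue by injectivity of $\pi^*$ on a surjective submersion). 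The forward direction is indeed immediate from $P_x=\ker\dd\pi_x$ together with $\dd\tilde\alpha=0$ for degree reasons on the $n$-dimensional $N$. Two small remarks worth recording: you implicitly use that $\pi$ is a surjective submersion (to get local sections, and the dimension count $\dim N=n$); this should be read into the paper's standing assumption that $N=M/P$ is a manifold with $\pi$ smooth, but it is not a consequence of smoothness alone and deserves a sentence. And, at the level of exposition this paper adopts, the chart computation is probably the argument to foreground, since it parallels the $T^*\R^n$ example the paper already works out and absorbs the flow and Frobenius content into the existence of the adapted chart; the flow argument is the one to keep in mind for manifolds without global charts.
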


\begin{defn}[square root of a line bundle]
Let $L$ be a line bundle on a manifold $M$. A line bundle $Q\to M$ is a \emph{square root} of $L$ if there exists an isomorphism $Q\otimes Q\to L$. 
\end{defn}

\begin{ex}[Trivial line bundle]
A \emph{trivial line bundle} has a square root: There is an isomorphism between $M\times (\mathbb{C}\otimes \mathbb{C})\to M$ and $\mathbb{C}\to M$ coming from the isomorphism $\mathbb{C}\otimes_\mathbb{C}\mathbb{C}\xrightarrow{\sim} \mathbb{C}$.
\end{ex}

\begin{rem}
We want to assume that $K_P$ has a square root and we fix a square root $S_P$ of $K_P$ from now on.
\end{rem}

\begin{ex}
Let $M:=T^*\R^n$ and let $P$ be the vertical polarization on $M$. We have seen that $\Gamma(M,K_P)=\{f\dd x^1\land\dotsm \land \dd x^n\mid f\in C^\infty(T^*\R^n)\}$. Hence 
$$\Gamma(M,S_P)=\left\{f\sqrt{\dd x^1\land\dotsm \land\dd x^n}\,\,\big|\,\, f\in C^\infty(T^*\R^n)\right\},$$
where $\sqrt{\dd x^1\land \dotsm \land \dd x^n}$ is just a notation to indicate the fact that 
$$\sqrt{\dd x^1\land \dotsm \land \dd x^n}\otimes\sqrt{\dd x^1\land \dotsm \land \dd x^n}=\dd x^1\land\dotsm \land\dd x^n.$$
\end{ex}

\begin{rem}
\label{rem_form}
Recall that, given $\alpha\in\Gamma(M,K_P)$ and $X\in\Gamma(M,P)$, we get $\iota_X(\dd\alpha)\in\Gamma(M,K_P)$.
\end{rem}

Using Remark \ref{rem_form}, we can define a \emph{partial} connection $\nabla_X^{part,P}\alpha:=\iota_X(\dd\alpha)$ for a fixed $X\in\Gamma(M,P)$ (this is partial because we can not define this for all $X\in\Gamma(M,TM)$). In fact, $\nabla^{part,P}$ induces a partial connection on $S_P$, which is roughly given by solving 
$$\underbrace{\nabla_X^{part,P}(s_1\otimes s_2)}_{\in\Gamma(M,K_P)}=\underbrace{\nabla_X^{part,P}s_1}_{\in\Gamma(M,S_P)}\otimes \underbrace{s_2}_{\in\Gamma(M,S_P)}+s_1\otimes\underbrace{\nabla_X^{part,P} s_2}_{\in\Gamma(M,S_P)}.$$

\begin{rem}
We say that a section $\mu\in\Gamma(M,S_P)$ is $P$-polarized, if 
\[
\nabla_X^{part,P}\mu=0,\quad \forall X\in \Gamma(M,P).
\]
\end{rem}

\begin{ex}
Let $M:=T^*\R^n$ and $P$ the vertical polarization on $M$. Then $$\Gamma(M,S_P)=\left\{f(x,p)\sqrt{\dd x^1\land\dotsm\land \dd x^n}\,\,\big|\,\, f\in C^\infty(T^*\R^n)\right\}.$$ We get that $f(x,p)\sqrt{\dd x^1\land\dotsm \land\dd x^n}$ is $P$-polarized if and only $\partial_{p_j}f=0$ for all $j=1,...,n$.
\end{ex}

\subsubsection{Construction of the Hilbert space}
For the construction of a Hilbert space, we need to start with the following data:
\begin{itemize}
\item{A symplectic manifold $(M,\Omega)$,}
\item{A prequantum line bundle $(L,\nabla)$ on $M$ with metric $h^F$,}
\item{A real polarization $P$ on $M$,}
\item{A square root $S_P$ of the canonical line bundle associated to $P$.}
\end{itemize}
We assume that $N:=M/P$ is a smooth manifold and the projection $\pi\colon M\to N$ is smooth, that $N$ is oriented and $\nabla$ is compatible with $h^F$. Let $s_1$ and $s_2$ be $P$-polarized sections of $L$. Then 
$$X\left(h^F(s_1,s_2)\right)=h^F(\nabla_Xs_1,s_2)+h^F(s_1,\nabla_Xs_2)=0,$$
and thus $h^F(s_1,s_2)$ is a function on $N$. Let us consider the space 
$$\Gamma_P(L\otimes S_P^\mathbb{C}):=\{\text{$P$-polarized sections of $L\otimes S_P^\mathbb{C}$}\}.$$
Note that $\Gamma_P(L\otimes S_P^\mathbb{C})$ is generated by elements of the form $s\otimes\mu$, where $s$ is a $P$-polarized section of $L$ and $\mu$ is a $P$-polarized section of $S_P^\mathbb{C}$. We define 
\[
\langle s_1\otimes\mu_1,s_1\otimes \mu_2\rangle_{\calH_P}:=\int_Nh^F(s_1,s_2)\bar\mu_1\otimes\mu_2,
\]
and then extend sesquilinearly. Note that $\bar\mu_1\otimes\mu_2$ is an $n$-form on $N$. Consider the inner product space, which consists of $\gamma\in\Gamma_P(L\otimes S_P^\mathbb{C})$ for which $\|\gamma\|_{\calH_P}<\infty$. The half-form Hilbert space is then given by the completion of this inner product space with respect to this norm.

\begin{ex}
Let $M:=T^*\R^n$ with its canonical symplectic form $\Omega$. Let $P$ be the vertical polarization on $M$, $L:=M\times\mathbb{C}$ the trivial line bundle and $\theta:=\sum_{j=1}^np_j\dd x^j$ together with the induced connection $\nabla^\theta$. Then 
$$\Gamma_P(M,S_P)=\left\{ f(x)\sqrt{\dd x^1\land\dotsm \land \dd x^n}\,\,\big|\,\,  f\in C^\infty(\R^n)\right\}.$$
Since 
$$\left\langle s_1\otimes f_1\sqrt{\dd x^1\land\dotsm \land \dd x^n},s_1\otimes f_2\sqrt{\dd x^1\land\dotsm \land \dd x^n}\right\rangle_{\calH_P}=\int_{\R^n}\bar s_1s_2\bar f_1 f_2\dd x^1\land\dotsm \land \dd x^n,$$
we can identify $\calH_P$ with $L^2(\R^n)$.
\end{ex}

\printbibliography
\end{document}